\documentclass[aps,reprint,twocolumn,10pt,superscriptaddress]{revtex4-2}

\usepackage{mathrsfs}
\usepackage{bbm}
\usepackage{amsmath}
\usepackage{amssymb}
\usepackage{amsthm}
\usepackage{graphicx}
\usepackage{MnSymbol}
\usepackage{mathtools}
\usepackage{physics}
\usepackage{dsfont}

\makeatletter
\usepackage{hyperref}
\usepackage{color}
\definecolor{supcol}{RGB}{10,50,180}
\definecolor{eqcol}{RGB}{220,10,100}
\hypersetup{
	colorlinks,
	citecolor=supcol,
	linkcolor=eqcol,
	urlcolor=supcol
}

\allowdisplaybreaks

\newtheorem{theorem}{Theorem}

\newtheorem{proposition}[theorem]{Proposition}

\newtheorem{lemma}[theorem]{Lemma}

\DeclareMathOperator{\diag}{diag}
\DeclareMathOperator{\atanh}{atanh}

\DeclareMathOperator{\argmax}{argmax}
\DeclareMathOperator{\argmin}{argmin}

\newcommand{\mca}{\mathcal}
\newcommand{\mbb}{\mathbb}

\newcommand{\msf}{\mathsf}
\newcommand{\mfr}{\mathfrak}
\newcommand{\mds}{\mathds}

\newcommand{\mbm}[1]{\boldsymbol{#1}}

\newcommand{\Id}{\msf{Id}}

\begin{document}
\title{Geometric Complexity in Thermodynamics}

\author{Tan Van Vu}
\email{tan.vu@yukawa.kyoto-u.ac.jp}
\affiliation{Center for Gravitational Physics and Quantum Information, Yukawa Institute for Theoretical Physics, Kyoto University, Kitashirakawa Oiwakecho, Sakyo-ku, Kyoto 606-8502, Japan}

\author{Keiji Saito}
\email{keiji.saitoh@scphys.kyoto-u.ac.jp}
\affiliation{Department of Physics, Kyoto University, Kyoto 606-8502, Japan}

\date{\today}

\begin{abstract}
The third law of thermodynamics forbids cooling a physical system to absolute zero in a finite number of operational steps. Although this unattainability principle has been quantified for specific state-to-state transitions, a universal, dynamics-independent bound for implementing a state-agnostic reset map remains elusive. In this work, we unveil the fundamental limits of physical map implementation by deriving a trade-off relation based on geometric complexity. By analyzing continuous paths of maps on a geometric manifold, we prove that the geometric complexity of any classical stochastic map or quantum channel is bounded from below by its execution error. As a consequence, we show that achieving zero error in a state-reset operation requires a divergent geometric complexity---a unified measure that naturally incorporates disparate physical resources, including infinite time, energetic cost, or control bandwidth. This unattainability principle holds universally across both classical and quantum regimes, establishing a strict geometric limit on the physical realization of reset operations in thermodynamic control and quantum computation.
\end{abstract}

\pacs{}
\maketitle

The third law of thermodynamics places a definitive boundary on the manipulation of physical systems: according to the Nernst statement \cite{Nernst.1906}, ``{\it it is impossible for any process, no matter how idealized, to reduce the entropy of a system to its absolute-zero value in a finite number of operations}.'' In the context of control and computation, this principle translates into a fundamental prohibition against the perfect initialization, or reset, of a system to a pure state \cite{Franco.2013.SR,Wu.2013.SR,Ticozzi.2014.SR,Freitas.2018}. Because resetting a state inherently requires the extraction of entropy, it is not merely an algorithmic prerequisite for classical and quantum computing, but a thermodynamic process governed by strict physical limits \cite{Landauer.1961.JRD,Brut.2012.N,Goold.2015.PRL,Proesmans.2020.PRL,Zhen.2021.PRL,Soldati.2022.PRL,Vu.2022.PRL,Lee.2022.PRL,Scandi.2022.PRL,Vu.2023.PRX,Rolandi.2023.PRL,Munson.2025.PRXQ,Taranto.2025.PRL,QTRoadmap.2026.QST}.

While the third law is traditionally framed in terms of execution time, recent studies have revealed that time is not the sole decisive factor for microscopic systems. Other physical resources---such as thermokinetic cost, control bandwidth, and the size of heat baths---also play a crucial role \cite{Reeb.2014.NJP,Taranto.2023.PRXQ,Masanes.2017.NC,Vu.2025.PRX}. Accordingly, quantitative relations for the third law have been derived to characterize the trade-off between these finite resources and the achievable error \cite{Masanes.2017.NC,Wilming.2017.PRX,Scharlau.2018.Q,Clivaz.2019.PRL,Vu.2025.PRX,Allahverdyan.2011.PRE,Levy.2012.PRL}. However, because the extant studies focus on specific, state-to-state dynamical processes where relevant resources are narrowly defined, the resulting bounds are highly dynamics-dependent. A change in the underlying dynamics or control protocol can readily render them inapplicable.

Crucially, a true reset operation cannot rely on bespoke protocols tailored to specific inputs; it must be implemented as a state-agnostic map capable of resetting an arbitrary, potentially unknown initial state. Such maps arise not only in cooling but more generally whenever probability must be removed from a particular subspace, as in state erasure or copying \cite{Vu.2025.PRX}, and they can be physically realized through a wide variety of Markovian and non-Markovian pathways. Consequently, the difficulty of approaching a perfect reset manifests differently depending on the implementation---perhaps as a divergent execution time, the demand for infinite wall potentials, or the necessity of infinitely fast transition rates. This fragmentation gives rise to a fundamental question: Can the sheer difficulty of physically implementing reset maps be captured by a single, unified principle?

\begin{figure}[b]
\centering
\includegraphics[width=1\linewidth]{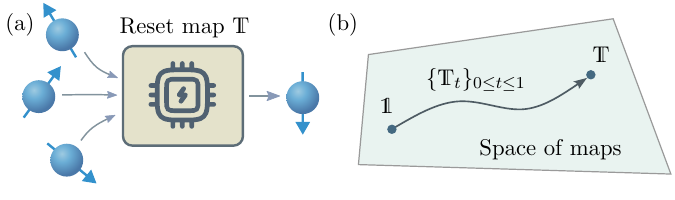}
\protect\caption{{\bf Geometric framework for reset maps.} (a) A state-agnostic reset map $\mds{T}$ clearing probability from a target subspace. This unified description encompasses diverse thermodynamic operations, including cooling, information erasure, and state copying. (b) The geometric complexity of $\mds{T}$ is defined as the minimal geodesic length between the identity map $\mds{1}$ and $\mds{T}$ on the space of physical maps, as induced by a suitable Riemannian metric.}\label{fig:Cover}
\end{figure}

To answer this, a universal measure that transcends specific underlying dynamics is required. Complexity is the natural notion for characterizing the difficulty of executing a task. Historically, the operational cost of completing specific tasks has been quantified using discrete metrics, such as algorithmic time steps for computation (computational complexity) \cite{Arora.2009} or primitive gate counts for unitary operations (gate complexity) \cite{Brandao.2021.PRXQ,Haferkamp.2022.NP}. However, to capture the continuous nature of physical processes, geometric complexity has recently emerged as a powerful framework \cite{Nielsen.2005.arxiv,Nielsen.2006.S,Dowling.2006.arxiv}. By measuring the geodesic length between the identity map and the target map within an abstract manifold, geometric complexity distills disparate physical constraints into a single, universal quantity. Despite the centrality of state-reset operations to thermodynamics, computation, and error correction, it has remained unknown whether a rigorous trade-off exists linking the complexity of a physical map to its execution fidelity.

In this work, we establish a complexity unattainability principle that bridges the thermodynamic and operational realms for dissipative maps (see Fig.~\ref{fig:Cover}). We consider the manifold of physical maps and introduce a Riemannian metric that is designed to satisfy three essential properties: physical penalization, protocol scaling, and resource divergence. By employing geometric complexity to quantify the cost of continuously implementing a reset map, we derive a fundamental trade-off relation between the geometric complexity of the map and its execution error [Eq.~\eqref{eq:geo.comp.3rd.law}]. This relation indicates that as the error of the reset map approaches zero, the required geometric complexity inevitably diverges, formally mirroring the thermodynamic unattainability of absolute zero. Crucially, this trade-off relation is universal: it holds strictly across the classical-quantum divide, governing both classical stochastic processes and quantum channels. Our results thus establish a fundamental principle independent of the underlying dynamics, imposing strict limits on basic operations in thermodynamics and computation.

\section*{Results}
We consider a stochastic map $\mds{T}$ on the space of $d$-dimensional probability distributions.
Specifically, the map $\mds{T}=[T_{mn}]\in\mbb{R}^{d\times d}$ is a matrix satisfying $T_{mn}\ge 0$ and $\sum_{m=1}^dT_{mn}=1$ for each $n=1,\dots,d$.
This stochastic map transforms any initial probability distribution $\vb{p}$ into a final distribution $\vb{p}'$ according to $\vb{p}'=\mds{T}\vb{p}$.
For example, when the underlying dynamics is a Markov jump process governed by a time-dependent transition matrix $\mds{W}_t$ over a total duration $\tau$, the map $\mds{T}$ is given by
\begin{equation}\label{eq:Markov.map}
	\mds{T}=\mca{T}e^{\int_0^\tau\dd{t}\mds{W}_t},
\end{equation}
where $\mca{T}$ denotes the time-ordering operator.
The stochastic maps described by Eq.~\eqref{eq:Markov.map} are important examples, but they are not exhaustive.
Notably, infinitely many stochastic maps cannot be realized by Markov jump processes.
For instance, $\mds{T}$ can represent classical non-Markovian dynamics, such as semi-Markov processes, in which the time-dependent distribution $\vb{p}_t$ evolves according to the generalized master equation \cite{Breuer.2008.PRL}
\begin{equation}
	\dot{\vb{p}}_t=\int_0^t\dd{s}\mds{W}_s\vb{p}_{t-s},
\end{equation}
where $\cdot$ denotes the time derivative.
More generally, a stochastic map can also be induced by a quantum channel $\Lambda$ by considering transitions between diagonal elements.
That is, $T_{mn}=\mel{m}{\Lambda(\dyad{n})}{m}$ for any $m$ and $n$, where $\{\ket{n}\}_{n\ge 1}$ is an orthonormal basis.
In this study, we do not focus on the underlying dynamics used to realize the map; rather, we focus on the stochastic map itself.
Although we primarily discuss discrete-state cases, the extension to continuous-state cases is straightforward and is presented in the Appendices.

Among all feasible maps, we are particularly interested in those that are designed to suppress the probabilities of specific states to zero.
For example, in information erasure, the target map should reset the probability of the logical state `1' to zero, regardless of the initial distribution.
In cooling, the populations of excited states should be reset to zero so that the system is driven to its ground state.
For the initialization of pure states, the reset map should reduce to zero the probability of finding the system in the subspace orthogonal to the target pure state.
In this work, we characterize the difficulty of implementing such reset maps from the perspective of complexity.

To characterize the complexity of a physical map, one possible quantifier is gate complexity.
In quantum computation, the gate complexity of a unitary circuit $U$ is defined as the minimum number of primitive gates (e.g., one- and two-qubit gates) required to implement $U$.
Although stochastic maps and unitary operators have different mathematical structures, it is still natural to ask whether an analogous notion of gate complexity can be defined for stochastic maps.
By analogy with local gates, we define primitive operations that act nontrivially only on two states:
\begin{equation}\label{eq:local.ops}
	\mds{T}_{mn}(\alpha,\beta)=\begin{pmatrix}
		1-\alpha & \beta\\
		\alpha & 1-\beta
	\end{pmatrix}\bigoplus\mds{1}_{\setminus\{m,n\}},
\end{equation}
where $\alpha,\beta\in[0,1]$ and $\mds{1}$ denotes the identity matrix.
This operation exchanges probability between states $m$ and $n$ according to $p_m\leftarrow (1-\alpha)p_m+\beta p_n$ and $p_n\leftarrow \alpha p_m+(1-\beta)p_n$.
Given these primitive operations, a natural question is whether there exists a finite sequence $\{\mds{T}_{m_kn_k}(\alpha_k,\beta_k)\}_{k\ge 1}$ such that the stochastic map $\mds{T}$ can be decomposed as
\begin{equation}\label{eq:decomp.map}
	\mds{T}\stackrel{?}{=}\prod_{k\ge 1}\mds{T}_{m_kn_k}(\alpha_k,\beta_k).
\end{equation}
However, although this decomposition is feasible for certain maps, there exist stochastic maps that cannot be decomposed in this form.
If each primitive operation is implemented by a finite-time Markov jump process, one can verify that the constraint $\alpha+\beta\le 1$ must hold.
Under this constraint, the impossibility of Eq.~\eqref{eq:decomp.map} follows from the fact that each local operation has a nonnegative determinant, $\det[\mds{T}_{mn}(\alpha,\beta)]=1-\alpha-\beta\ge 0$.
Hence, any stochastic map with a negative determinant cannot be decomposed as in Eq.~\eqref{eq:decomp.map}, because determinants multiply under composition: $\det(\mds{T})=\prod_{k\ge 1}\det[\mds{T}_{m_kn_k}(\alpha_k,\beta_k)]$. If a stochastic map has a positive determinant, swapping any two rows yields one with a negative determinant.
Even if the condition $\alpha+\beta\le 1$ is relaxed, one can still construct stochastic maps that are not decomposable in this way (see Appendix \ref{app:decomp.nogo} for the proof).
Therefore, it is not straightforward to define gate complexity for stochastic maps solely in terms of such primitive operations.

An alternative approach to quantifying complexity is grounded in geometry. Initially proposed for unitary operations in quantum computation \cite{Nielsen.2005.arxiv,Nielsen.2006.S,Dowling.2006.arxiv}, this framework has since been applied across diverse areas, including quantum many-body physics \cite{Balasubramanian.2020.JHEP,Balasubramanian.2021.JHEP}, quantum field theory \cite{Jefferson.2017.JHEP,Chapman.2018.PRL}, and holographic quantum gravity \cite{Brown.2016.PRL,Brown.2018.PRD} (see Ref.~\cite{Baiguera.2026.PR} for a review). Intuitively, geometric complexity captures the operational difficulty of a unitary map by evaluating the cost of a smooth path connecting the identity to the target map. It employs a tailored metric on the unitary group that stretches directions associated with resource-intensive operations, thereby placing complex unitaries geometrically farther away. Consequently, continuous path length replaces discrete gate counts, and the complexity of a target unitary is rigorously defined by the length of its geodesic---the shortest, optimal continuous path. In what follows, we move beyond unitary operations, introducing geometric complexity for general classical stochastic maps and quantum channels.

\subsection*{Geometric complexity of stochastic maps}
Consider the manifold of stochastic maps equipped with a Riemannian metric $g$.
The geometric complexity of a map $\mds{T}$ is defined as the geodesic length induced by $g$ between the identity map $\mds{1}$ and the target map $\mds{T}$:
\begin{equation}\label{eq:geo.com.def}
	\mca{C}(\mds{T})\coloneqq\min_{\{\mds{T}_t\}}\int_0^1\dd{t}\sqrt{g_{\mds{T}_t}(\dot{\mds{T}}_t,\dot{\mds{T}}_t)},
\end{equation}
where the minimum is taken over all continuous paths $\{\mds{T}_t\}_{0\le t\le 1}$ connecting the identity map to the given stochastic map (i.e., $\mds{T}_0=\mds{1}$ and $\mds{T}_1=\mds{T}$).
Intuitively, $\mca{C}$ characterizes the minimum path cost over all continuous implementations of the stochastic map $\mds{T}$ starting from the identity map.
In general, there are many possible choices of metric $g$ for defining complexity.
However, an appropriate metric and the corresponding complexity should satisfy three properties:
\begin{itemize}
	\item[(i)] {\it Physical penalization}: it should adequately penalize reset maps that are physically difficult to realize; 
	\item[(ii)] {\it Protocol scaling}: it should provide a linear lower bound on the number of protocols, thereby capturing the cumulative cost of map implementation; 
	\item[(iii)] {\it Divergent resources}: most importantly, it should capture the divergent physical resources required to achieve perfect reset maps.
\end{itemize}
For any $d\times d$ matrix $\mds{X}$, let $\vb{r}^{\mds{X}}\coloneqq \mds{X}\vb{1}/d$ denote its normalized row-sum vector, where $\vb{1}$ is the all-one vector.
Using this notation and incorporating properties (i)--(iii), we define the following Riemannian metric on the space of stochastic maps:
\begin{equation}\label{eq:Hess.metric}
	g_{\mds{T}}(\mds{X},\mds{Y})\coloneqq \sum_{n=1}^d\frac{r_n^{\mds{X}}r_n^{\mds{Y}}}{(r_n^{\mds{T}})^2},
\end{equation}
where $\mds{X}$ and $\mds{Y}$ are arbitrary elements of the tangent space.
Evidently, $g$ satisfies bilinearity, symmetry, and positive semidefiniteness.
Moreover, $g$ is a log-barrier Hessian metric, arising from the Hessian of the potential $\Phi(\mds{T})=-\sum_{n=1}^d\ln r_n^{\mds{T}}$.
Accordingly, this metric assigns a large weight when the map passes near the target region, where $r_n^{\mds{T}}\ll 1$ for $n\in\mfr{u}$ [cf.~property (i)], and $\mfr{u}\subset\{1,\dots,d\}$ denotes the set of undesired states whose probabilities are ideally zero after the transformation.
Although the geodesic distance induced by this metric is analytically intractable under the stochasticity constraint $\vb{1}^\top\vb{r}^{\mds{T}}=1$, we can still derive simple lower and upper bounds on the complexity (Appendix \ref{app:cla.proofs}):
\begin{equation}\label{eq:geo.comp.lub}
	\ell\le \mca{C}(\mds{T})\le (\sqrt{d}+1)\ell,
\end{equation}
where $\ell\coloneqq\sqrt{\sum_{n=1}^d(\ln d + \ln r_n^{\mds{T}})^2}$ is the geodesic distance induced by the metric $g$ in Eq.~\eqref{eq:Hess.metric} \emph{without} imposing the constraint.
The lower bound implies that the complexity of perfect reset maps diverges, since $r_n^{\mds{T}}=0$ for $n\in\mfr{u}$ and hence $\ell=\infty$.

To further justify this geometric quantity as a measure of complexity, we show its relation to the number of protocols required to implement a stochastic map [cf.~property (ii)].
To this end, consider a stochastic map $\mds{T}$ constructed from $N$ sequential protocols, where the $k$th protocol is realized by Markovian dynamics with a time-independent transition matrix $\mds{W}_{k}$ over a unit time interval.
The resulting stochastic map is
$\mds{T}=\prod_{k=1}^{N}e^{\mds{W}_{k}}$.
Because each protocol is subject to physical implementation constraints, it is natural to require that the transition rates in each matrix $\mds{W}_{k}$ are not arbitrarily large.
Accordingly, we assume that the maximum escape rate is always bounded above by $\gamma$.
Under this assumption, we prove that the geometric complexity provides a lower bound on the number of protocols (Appendix \ref{app:cla.proofs}):
\begin{equation}\label{eq:num.prot.lb}
	N\ge\frac{\mca{C}(\mds{T})}{(d+\sqrt{d})\ln(de^\gamma)}.
\end{equation}
This inequality provides an operational interpretation of geometric complexity, thereby clarifying the physical meaning of $\mca{C}$ as a complexity measure.
In Appendix \ref{app:cla.proofs}, we show that, within a parameterized class of Riemannian metrics that includes the classical Fisher information metric, the metric defined in Eq.~\eqref{eq:Hess.metric} is the unique one satisfying properties (i)--(iii).

In addition, we derive an entropic lower bound on geometric complexity, thereby clarifying its connection to thermodynamics.
Specifically, we prove that the complexity is always lower-bounded by the difference between the Shannon entropy of the uniform distribution before and after resetting by the map (Appendix \ref{app:cla.proofs}):
\begin{equation}\label{eq:comp.ent.rel}
	\mca{C}(\mds{T})\ge \qty\ln S(\vb{1}/d) - \ln S(\mds{T}\vb{1}/d),
\end{equation}
where $S(\vb{p})\coloneqq -\sum_{n=1}^dp_n\ln p_n$ is the Shannon entropy of the probability distribution $\vb{p}$.
Relations \eqref{eq:num.prot.lb} and \eqref{eq:comp.ent.rel} thus highlight the operational and thermodynamic aspects of geometric complexity.
Notably, relation \eqref{eq:comp.ent.rel} already sheds light on the unattainability of absolute zero entropy: achieving zero entropy, $S(\mds{T}\vb{1}/d)=0$, necessitates infinite complexity $\mca{C}$. Nonetheless, in what follows, we establish a universal principle for reset maps, even in scenarios where the entropy is not reset to zero.

\subsection*{Unattainability principle via geometric complexity}
With this continuous geometric framework in hand, we now investigate the fundamental difficulty of implementing state-reset maps.
The error $\epsilon$ associated with a reset map $\mds{T}$ is naturally defined as
\begin{equation}
	\epsilon(\mds{T})\coloneqq \sum_{m\in\mfr{u}}\sum_{n=1}^dT_{mn}=\sum_{m\in\mfr{u}}(\mds{T}\vb{1})_m.
\end{equation}
Since $T_{mn}$ is the conditional transition probability from state $n$ to state $m$, $\epsilon$ quantifies the extent to which the system remains in the undesired subspace $\mfr{u}$.
As our central result, we establish a rigorous trade-off relation between the geometric complexity of any stochastic map and its execution error:
\begin{equation}\label{eq:geo.comp.3rd.law}
	e^{\mca{C}(\mds{T})}\times\epsilon(\mds{T})\ge 1.
\end{equation}
This inequality dictates that achieving zero error, $\epsilon(\mds{T})=0$, requires the geometric complexity $\mca{C}(\mds{T})$ to diverge, thereby manifesting the third law of thermodynamics in the implementation of stochastic maps.
Therefore, this result can also be regarded as a \emph{geometric} third law of thermodynamics.
Together with relation \eqref{eq:num.prot.lb}, it also implies that infinitely many protocols are required to implement a perfect stochastic map via Markovian dynamics.

We illustrate the unattainability principle using a simple bit-reset map. Consider a reset map $\mds{T}$ implemented by a time-independent transition rate matrix $\mds{W}$ over a duration $\tau$, given by
\begin{equation}
	\mds{T}=e^{\mds{W}\tau},~\text{where}~\mds{W}=\begin{pmatrix}
		0 & w\\
		0 & -w
	\end{pmatrix}.
\end{equation}
Since the transition rate $w$ from the excited state to the ground state is positive and the reverse jump is forbidden, this map can bring the system close to the ground state regardless of the initial distribution.
From the error expression $\epsilon(\mds{T})=e^{-w\tau}$, it is evident that the key physical quantity governing reset accuracy is the product of the operation time and the kinetic transition rate.
For this stochastic map, the complexity can be calculated explicitly as
\begin{equation}\label{eq:two.lev.comp}
	\mca{C}(\mds{T})=2\atanh[\sqrt{2}f(w\tau)]-\sqrt{2}\atanh[f(w\tau)],
\end{equation}
where $f(u)=(1-e^{-u})/\sqrt{1+(1-e^{-u})^2}$.
From the asymptotic relation $\mca{C}(\mds{T})=w\tau + 2\ln 2 - \sqrt{2}\ln(1+\sqrt{2}) + O(e^{-w\tau})$, we conclude that $\mca{C}(\mds{T})$ captures both the operation time and the strength of the transition rates as relevant resources for reset maps.
Using $\epsilon(\mds{T})=e^{-w\tau}$ together with Eq.~\eqref{eq:two.lev.comp}, we can confirm, both analytically and numerically, that the complexity satisfies the trade-off relation \eqref{eq:geo.comp.3rd.law}.

In what follows, we provide a brief proof of the trade-off relation \eqref{eq:geo.comp.3rd.law}.
First, note that the error of the stochastic map $\mds{T}$ can be written as $\epsilon(\mds{T})=d\sum_{n\in\mfr{u}}r_n^{\mds{T}}$.
Since $e^{x}\ge e^{-|x|}$ for any real $x$ and $e^{-x}$ is convex, we obtain the following lower bound:
\begin{align}
	\epsilon(\mds{T})=\sum_{n\in\mfr{u}}e^{\ln(dr_n^{\mds{T}})}&\ge\sum_{n\in\mfr{u}}e^{-|\ln(dr_n^{\mds{T}})|}\notag\\
	&\ge|\mfr{u}|\,e^{-|\mfr{u}|^{-1}\sum_{n\in\mfr{u}}|\ln(dr_n^{\mds{T}})|},\label{eq:error.lb1}
\end{align}
where the last inequality follows from Jensen's inequality.
Next, by the Cauchy--Schwarz inequality, $\ell$ satisfies
\begin{equation}\label{eq:ell.lb1}
	\ell\ge\frac{1}{\sqrt{|\mfr{u}|}}\sum_{n\in\mfr{u}}|\ln(dr_n^{\mds{T}})|.
\end{equation}
Combining Eqs.~\eqref{eq:error.lb1} and \eqref{eq:ell.lb1}, we find
\begin{equation}
	\epsilon(\mds{T})\ge|\mfr{u}|e^{-\ell/\sqrt{|\mfr{u}|}}\ge e^{-\ell}.
\end{equation}
Finally, since $\ell\le\mca{C}(\mds{T})$ [Eq.~\eqref{eq:geo.comp.lub}], we immediately obtain the desired relation \eqref{eq:geo.comp.3rd.law}.

\subsection*{Unified quantum generalization}
Next, we derive a quantum generalization of geometric complexity and thereby establish the universality of the trade-off relation \eqref{eq:geo.comp.3rd.law} between complexity and error.
As a quantum analog of classical stochastic maps, we consider a quantum channel $\Lambda$, i.e., a completely positive trace-preserving (CPTP) map.
A relevant example is $\Lambda(\circ)=\tr_E[\mds{U}(\circ\otimes\pi_E)\mds{U}^\dagger]$, which describes the general reduced dynamics of the system after the environment $E$ is traced out. Here, $\pi_E$ denotes the thermal Gibbs state of the environment, and the unitary operator $\mds{U}$ acts jointly on the system and environment.

We consider a quantum channel that resets arbitrary initial states from an undesired subspace.
Because it is sufficient to evaluate the error for the maximally mixed state, we define the error as the probability that the system remains in the undesired subspace (scaled by a factor of $d$):
\begin{equation}
	\epsilon(\Lambda)\coloneqq \tr[\Pi\Lambda\qty(\mds{1})],
\end{equation}
where $\Pi$ is the projector onto the undesired subspace.
For example, for cooling to the ground state, $\Pi=\sum_{n:\,E_n>0}\dyad{E_n}$, where the sum runs over all excited energy eigenstates; for qubit erasure, $\Pi$ is the projector onto the logical-`1' subspace.
Since the error for an arbitrary initial state is always upper-bounded by $\epsilon(\Lambda)$, this quantity provides a reliable error measure for quantum channels.

To formalize the geometry of quantum channels, it is convenient to use the Choi-matrix representation \cite{Choi.1975.LAA,Jamiokowski.1972.RMP}, which provides a channel--state duality. For any continuous path of quantum channels $\{\Lambda_t\}$, we consider the corresponding path of bipartite density operators $\{\mds{M}_t\}$, where $\mds{M}_t=(\mds{1}\otimes\Lambda_t)(\dyad{\Omega})$ with $\ket{\Omega}=\sum_{n=1}^d\ket{n}\otimes\ket{n}$ for an orthonormal basis $\{\ket{n}\}$.
Then we define the geometric complexity of the channel $\Lambda$, associated with a metric $g$, as
\begin{equation}
	\mca{C}(\Lambda)\coloneqq\min_{\{\Lambda_t\}}\int_0^1\dd{t}\sqrt{g_{\mds{M}_t}(\dot{\mds{M}}_t,\dot{\mds{M}}_t)},
\end{equation}
where the minimum is taken over all paths connecting the identity channel to $\Lambda$ [i.e., $\Lambda_0=\msf{Id}$ and $\Lambda_1=\Lambda$, with $\msf{Id}(\circ)=\circ$].
The Riemannian metric $g$ on the space of the Choi matrices, which possesses three desired properties (i)--(iii), is defined by
\begin{equation}
	g_{\mds{M}}(\mds{X},\mds{Y})\coloneqq\tr(\phi_{\mds{M}}^{-1}\phi_{\mds{X}}\phi_{\mds{M}}^{-1}\phi_{\mds{Y}}),
\end{equation}
where $\phi_{\mds{X}}\coloneqq d^{-1}\tr_1\mds{X}$ for any bipartite operator $\mds{X}$.
In the classical limit, this geometric complexity and its associated Riemannian metric reduce exactly to their classical counterparts in Eqs.~\eqref{eq:geo.com.def} and \eqref{eq:Hess.metric}.
Notably, all geometric relations derived for the classical case carry over to the quantum case (Appendix \ref{app:qua.details}).
First, simple lower and upper bounds on the complexity, analogous to Eq.~\eqref{eq:geo.comp.lub}, can be obtained: $\ell\le \mca{C}(\Lambda)\le (\sqrt{d}+1)\ell$, where $\ell=\|\ln d+\ln\Lambda(\mds{1}/d)\|_F$ and $\|\mds{X}\|_F\coloneqq\sqrt{\tr(\mds{X}^\dagger\mds{X})}$ denotes the Frobenius norm of an operator $\mds{X}$.
Second, the relation \eqref{eq:num.prot.lb} between the complexity and the number of protocols can also be established for both Lindbladian maps and general dissipative maps.
Finally, the entropic bound \eqref{eq:comp.ent.rel} on geometric complexity can be derived analogously, with the Shannon entropy replaced by the von Neumann entropy.

Based on this quantum geometric framework, we arrive at the same complexity-error trade-off relation for quantum channels:
\begin{equation}\label{eq:qua.comp.3rd.law}
	e^{\mca{C}(\Lambda)}\times\epsilon(\Lambda)\ge 1.
\end{equation}
This structural parallelism between the classical and quantum domains highlights the universality of the geometric unattainability principle.

\section*{Conclusion}
Understanding the fundamental physical limits of thermodynamic and computational processes requires a unified framework that subsumes diverse operational resources---such as execution time, energetic cost, and control bandwidth---into a single measure. In this work, we showed that complexity provides such a unifying measure, extending the third law of thermodynamics from a constraint on state-to-state cooling to a universal limit on physical map implementation. Specifically, by leveraging a geometric framework to quantify operational burden, we developed a comprehensive notion of complexity for both classical stochastic maps and quantum channels. This approach enabled us to establish a universal trade-off relation between the geometric complexity of a state-reset map and its execution error. By subsuming disparate physical ingredients into a single quantity, our central result yields a dynamics-independent unattainability principle: a perfect state-reset operation can be physically implemented only at diverging operational complexity. More broadly, by formalizing the relation between operational complexity and execution error, we showed that the third law of thermodynamics is not merely a state-dependent constraint on entropy extraction, but a fundamental and universal complexity bound of nature.

\section*{Acknowledgments}
{T.V.V.} was supported by JSPS KAKENHI Grants No.~JP23K13032, No.~JP26K00022, and No.~JP26H02015.
{K.S.} was supported by JSPS KAKENHI Grants No.~JP23K25796, No.~JP26H02015, and No.~JP26H00388.

\section*{Data availability}
No data were created or analyzed in this work.

\appendix

\section{Proof of the impossibility of decomposing stochastic maps}\label{app:decomp.nogo}
\begin{lemma}
Consider primitive operations of the following form:
\begin{equation}
	\mds{T}_{mn}(\alpha,\beta)=\begin{pmatrix}
		1-\alpha & \beta\\
		\alpha & 1-\beta
	\end{pmatrix} \otimes \mds{1}_{\setminus \{m,n\}},
\end{equation}
where $\alpha,\beta\in[0,1]$.
Then there exists a stochastic map that cannot be decomposed into a product of these primitive operations.
\end{lemma}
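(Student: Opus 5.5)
The plan is to exhibit one explicit $3\times 3$ stochastic map and show, by an extremal (minimal-length) argument, that it cannot be a finite product of primitive operations. Padding with the identity on the remaining states then handles every $d\ge 3$. The candidate is
\begin{equation*}
\mds{T}_\star=\frac12\begin{pmatrix}0&1&1\\1&0&1\\1&1&0\end{pmatrix}.
\end{equation*}
It has $\det\mds{T}_\star=1/4\neq 0$. Each row has exactly one zero, and the zeros sit in distinct columns.

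The case analysis rests on the effect of left multiplication by $\mds{T}_{mn}(\alpha,\beta)$. Row $k\notin\{m,n\}$ of $\mds{T}_{mn}(\alpha,\beta)\mds{S}$ equals row $k$ of $\mds{S}$. Rows $m,n$ are replaced by $(1-\alpha)\mds{S}_m+\beta\mds{S}_n$ and $\alpha\mds{S}_m+(1-\beta)\mds{S}_n$, whose sum is $\mds{S}_m+\mds{S}_n$. The analogous statement holds for columns under right multiplication.

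The steps, in order, are as follows.
\begin{itemize}
\item[(1)] Since $\det\mds{T}_\star\neq0$ and determinants multiply, every factor in a putative decomposition must satisfy $\alpha+\beta\neq1$. Hence every factor is invertible, and so is every partial product.
\item[(2)] Take a decomposition $\mds{T}_\star=\mds{T}_{mn}(\alpha,\beta)\mds{S}$ of minimal length and examine its last factor. By symmetry of $\mds{T}_\star$ under permutations of the states, we may take $\{m,n\}=\{1,2\}$.
\item[(3)] Row $3$ of $\mds{S}$ is $(1/2,1/2,0)$. Rows $1$ and $2$ of $\mds{T}_\star$ are nonnegative combinations of $\mds{S}_1,\mds{S}_2$ and have zeros in positions $1$ and $2$ respectively. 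A vanishing nonnegative combination forces every vector with a positive coefficient to vanish in that position.
\item[(4)] Split into cases according to which of $\alpha,1-\alpha,\beta,1-\beta$ vanish. In each case, either $\mds{S}$ is forced to be singular, contradicting step (1), or $\mds{S}$ is forced to coincide with $\mds{T}_\star$ or with a row-permutation of it.
\end{itemize}
For instance, if $\alpha,\beta\in(0,1)$, then both $\mds{S}_1$ and $\mds{S}_2$ would vanish in positions $1$ and $2$. Stochasticity then puts all of their mass in column $3$, so $\mds{S}_1=\mds{S}_2$ and $\mds{S}$ is singular.

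The main obstacle is closing the induction in the degenerate cases. There the last factor is a swap ($\alpha=\beta=1$) or a one-sided transfer ($\alpha=0$ or $\beta=0$), and the peeled map $\mds{S}$ need not be singular; it may be a permuted copy of $\mds{T}_\star$. A bare minimal-length argument therefore does not immediately give a contradiction. To handle this, I would define the class $\mca{K}$ of all row- and column-permutations of $\mds{T}_\star$, together with the maps obtained from them by the allowed degenerate moves that keep the matrix invertible. I would then prove the following closure property: if $\mds{P}\mds{S}\in\mca{K}$ with $\mds{P}$ primitive, then $\mds{S}\in\mca{K}$ or $\mds{S}$ is singular.

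Iterating this peeling down to the empty product gives $\mds{1}\in\mca{K}$. The contradiction would come from a quantity that $\mds{1}$ and every member of $\mca{K}$ cannot share. My first attempt would be the number of zero entries, which is $3$ for $\mds{T}_\star$ versus $6$ for $\mds{1}$. Failing that, I would try the sign pattern of the inverse matrix, which is unchanged by permutations.

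If this invariant-based bookkeeping becomes unwieldy, the fallback is a local obstruction. The set of finite products is a countable union of images of compact parameter cubes, so one can look for a $\mds{T}_\star$ at which all these images miss a neighbourhood of the boundary face on which $\mds{T}_\star$ lies, using the zero-pattern constraints above.
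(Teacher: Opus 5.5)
You use the same counterexample as the paper, and your step (3) is exactly the right mechanism. However, step (4) is never carried out, and the obstacle you then build $\mca{K}$ around is misdiagnosed. One-sided transfers do not leave a nonsingular permuted copy of $\mds{T}_\star$; they are contradictory outright. If $\beta=0<\alpha$, the zero $(\mds{T}_\star)_{22}=\alpha S_{12}+S_{22}$ forces $S_{12}=S_{22}=0$. Then $(\mds{T}_\star)_{12}=(1-\alpha)S_{12}+\beta S_{22}=0\neq 1/2$.

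The same happens for every $(\alpha,\beta)\notin\{(0,0),(1,1)\}$:
\begin{itemize}
\item If $1-\alpha>0$ and $\beta>0$, the zero at $(1,1)$ kills column $1$ of both $\mds{S}_1$ and $\mds{S}_2$, contradicting $(\mds{T}_\star)_{21}=1/2$.
\item Otherwise $\alpha=1$ or $\beta=0$. Then the zero at $(2,2)$ kills column $2$ of both rows, contradicting $(\mds{T}_\star)_{12}=1/2$.
\end{itemize}
This uses only entrywise nonnegativity of $\mds{S}$. Step (1), minimal length, and the fact that $\mds{S}$ is itself a product are all unnecessary.

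The only genuine obstacle is therefore the swap, and there your proposal stops at a sketch. $\mca{K}$ is not precisely defined, its closure is not proved, the invariant is only guessed, and the fallback local-obstruction idea is not an argument as stated.

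The paper's way out is to peel the leftmost factor that is \emph{not} a swap. Together with the relabeling of its pair to $\{1,2\}$, the swaps to its left form a permutation $\mds{U}$. Since $\mds{U}^\top\mds{T}_\star\mds{U}=\mds{T}_\star$, this gives $\mds{T}_\star=\mds{T}_{12}(\alpha,\beta)\mds{O}$ with $\mds{O}$ merely stochastic. The case analysis above then finishes in one step; an all-swap product is a permutation matrix, not $\mds{T}_\star$.

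Your induction can also be closed, by taking $\mca{K}\coloneqq\{\mds{Q}\mds{T}_\star:\mds{Q}\text{ a permutation matrix}\}$. Any two rows of a member of $\mca{K}$ each have a single zero, in distinct columns, and $1/2$ elsewhere. The same case analysis then shows that only a swap can be peeled, and that peeling it keeps you in $\mca{K}$. By induction on length, the empty product $\mds{1}$ would lie in $\mca{K}$, which is absurd.

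Two smaller points:
\begin{itemize}
\item ``Stochasticity puts all of their mass in column $3$'' confuses rows with columns, since the maps here are column-stochastic. Singularity still follows, because both rows are multiples of the same unit vector.
\item Padding to $d>3$ is unjustified as stated. Primitives may couple the padded states with $\{1,2,3\}$, so the full permutation symmetry you rely on is lost. The lemma does not require this extension, however.
\end{itemize}
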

\begin{proof}
It suffices to provide a counterexample.
Specifically, we show that the following simple stochastic map cannot be decomposed into the given primitive operations:
\begin{equation}
	\mds{T}=\begin{pmatrix}
		0 & 1/2 & 1/2\\
		1/2 & 0 & 1/2\\
		1/2 & 1/2 & 0
	\end{pmatrix}.
\end{equation}
To this end, suppose that $\mds{T}$ can be decomposed into a product of primitive operations:
\begin{equation}
	\mds{T}=\prod_{k=1}^{N}\mds{T}_{m_kn_k}(\alpha_k,\beta_k),
\end{equation}
where each $\mds{T}_{m_kn_k}(\alpha_k,\beta_k)$ is a nontrivial primitive operation [i.e., $\mds{T}_{m_kn_k}(\alpha_k,\beta_k)\neq\mds{1}$, or equivalently, $(\alpha_k,\beta_k)\neq(0,0)$] that acts only on the $m_k$th and $n_k$th rows for any $k=1,\dots,N$.
Let $\imath\in[1,N]$ be the maximal index such that $(\alpha_{\imath},\beta_{\imath})\neq(1,1)$.
In other words, $\mds{T}_{m_kn_k}(\alpha_k,\beta_k)$ is a swap operation for any $\imath+1\le k\le N$.
Note that there always exist permutation matrices $\mds{P}_1$ and $\mds{P}_2$ (which are trivially stochastic maps) such that $\mds{T}_{m_{\imath}n_{\imath}}(\alpha_{\imath},\beta_{\imath})$ can be written as
\begin{align}
	\mds{T}_{m_{\imath}n_{\imath}}(\alpha_{\imath},\beta_{\imath})&=\mds{P}_1\begin{pmatrix}
		1-\alpha_{\imath} & \beta_{\imath} & 0\\
		\alpha_{\imath} & 1-\beta_{\imath} & 0\\
		0 & 0 & 1
	\end{pmatrix}\mds{P}_2\notag\\
	&=\mds{P}_1\mds{T}_{12}(\alpha_{\imath},\beta_{\imath})\mds{P}_2.
\end{align}
Define $\mds{U}\coloneqq[\prod_{k=\imath+1}^N\mds{T}_{m_kn_k}(\alpha_k,\beta_k)]\mds{P}_1$ and $\mds{V}\coloneqq\mds{P}_2\prod_{k=1}^{\imath-1}\mds{T}_{m_kn_k}(\alpha_k,\beta_k)$.
Both are valid stochastic maps, and it follows that $\mds{T}=\mds{U}\mds{T}_{12}(\alpha_{\imath},\beta_{\imath})\mds{V}$.
Note that $\mds{P}^\top\mds{P}=\mds{P}\mds{P}^\top=\mds{1}$ for any permutation matrix $\mds{P}$, and that the product of permutation matrices is also a permutation matrix.
Combining this with the fact that $\mds{T}_{m_kn_k}(\alpha_k,\beta_k)$ is a permutation matrix for any $k\in[\imath+1,N]$ implies that $\mds{U}$ is a permutation matrix.
Since $\mds{U}^\top\mds{T}$ is obtained from $\mds{T}$ by swapping rows, $\mds{U}^\top\mds{T}$ can be transformed back to $\mds{T}$ by swapping columns (due to the symmetry of $\mds{T}$).
Equivalently, there exists a permutation matrix $\mds{U}'$ (which is also a stochastic map) such that $\mds{U}^\top\mds{T}\mds{U}'=\mds{T}$.
Substituting $\mds{T}=\mds{U}\mds{T}_{12}(\alpha_{\imath},\beta_{\imath})\mds{V}$ into the left-hand side of this equality, we obtain $\mds{T}=\mds{T}_{12}(\alpha_{\imath},\beta_{\imath})\mds{V}\mds{U}'=\mds{T}_{12}(\alpha_{\imath},\beta_{\imath})\mds{O}$, where $\mds{O}\coloneqq \mds{V}\mds{U}'$ is a stochastic map.
The relation $\mds{T}=\mds{T}_{12}(\alpha_{\imath},\beta_{\imath})\mds{O}$ can be explicitly expressed as
\begin{widetext}
\begin{align}
	\begin{pmatrix}
		0 & 1/2 & 1/2\\
		1/2 & 0 & 1/2\\
		1/2 & 1/2 & 0
	\end{pmatrix} &= \begin{pmatrix}
		1-\alpha_{\imath} & \beta_{\imath} & 0\\
		\alpha_{\imath} & 1-\beta_{\imath} & 0\\
		0 & 0 & 1
	\end{pmatrix} \begin{pmatrix}
		O_{11} & O_{12} & O_{13}\\
		O_{21} & O_{22} & O_{23}\\
		O_{31} & O_{32} & O_{33}
	\end{pmatrix} \notag\\
	&= \begin{pmatrix}
		(1-\alpha_{\imath}) O_{11} + \beta_{\imath} O_{21} & (1-\alpha_{\imath}) O_{12} + \beta_{\imath} O_{22} & (1-\alpha_{\imath}) O_{13} + \beta_{\imath} O_{23}\\
		\alpha_{\imath} O_{11} + (1-\beta_{\imath}) O_{21} & \alpha_{\imath} O_{12} + (1-\beta_{\imath}) O_{22} & \alpha_{\imath} O_{13} + (1-\beta_{\imath})O_{23}\\
		O_{31} & O_{32} & O_{33}
	\end{pmatrix}.\label{eq:decomp.tmp1}
\end{align}
\end{widetext}
Note that $(\alpha_{\imath},\beta_{\imath})\neq (1,1)$ and $(\alpha_{\imath},\beta_{\imath})\neq (0,0)$.
If $\alpha_{\imath}\in[0,1)$ and $\beta_{\imath}\in(0,1]$, then the condition $0=(1-\alpha_{\imath}) O_{11} + \beta_{\imath} O_{21}$ immediately implies $O_{11}=O_{21}=0$.
As a result, $\alpha_{\imath} O_{11} + (1-\beta_{\imath}) O_{21}=0\neq 1/2$, which violates Eq.~\eqref{eq:decomp.tmp1}.
Therefore, either $\alpha_{\imath}=1$ or $\beta_{\imath}=0$.
If $\alpha_{\imath}=1$, the conditions $0=\alpha_{\imath} O_{12} + (1-\beta_{\imath}) O_{22}$ and $(\alpha_{\imath},\beta_{\imath})\neq(1,1)$ imply $O_{12}=O_{22}=0$; thus, $(1-\alpha_{\imath}) O_{12} + \beta_{\imath} O_{22}=0\neq 1/2$, which violates Eq.~\eqref{eq:decomp.tmp1}.
Otherwise, if $\beta_{\imath}=0$, the conditions $0=\alpha_{\imath} O_{12} + (1-\beta_{\imath}) O_{22}$ and $(\alpha_{\imath},\beta_{\imath})\neq(0,0)$ lead to $O_{12}=O_{22}=0$; thus, $(1-\alpha_{\imath}) O_{12} + \beta_{\imath} O_{22}=0\neq 1/2$, which also violates Eq.~\eqref{eq:decomp.tmp1}.
In other words, a contradiction always arises.
Therefore, $\mds{T}$ cannot be decomposed using the given primitive operations, which completes the proof.
\end{proof}

\section{Proof of results in the classical case}\label{app:cla.proofs}
\subsection{Proof of the lower and upper bounds \eqref{eq:geo.comp.lub}}
We first prove the lower bound.
The geodesic distance under the constraint $\vb{1}^\top\vb{r}=1$ is always larger than or equal to that without the constraint.
Therefore, the geometric complexity can be evaluated as
\begin{equation}
	\mca{C}(\mds{T})\ge\min_{\{\vb{r}_t\}_{0\le t\le 1}}\int_0^1\dd{t}\sqrt{\sum_{n=1}^d\qty(\frac{\dot r_n(t)}{r_n(t)})^2},
\end{equation}
where the minimum is over all paths $\{\vb{r}_t\}_{0\le t\le 1}$ connecting $\vb{r}^{\mds{1}}$ and $\vb{r}^{\mds{T}}$.
Noting that the geodesic path has constant speed and applying the Cauchy--Schwarz inequality, the lower bound is derived as follows:
\begin{align}
	\mca{C}(\mds{T})&\ge\qty[\min_{\{\vb{r}_t\}_{0\le t\le 1}}\int_0^1\dd{t}\sum_{n=1}^d\qty(\frac{\dot r_n(t)}{r_n(t)})^2]^{1/2}\notag\\
	&=\qty[\min_{\{\vb{r}_t\}_{0\le t\le 1}}\sum_{n=1}^d\int_0^1\dd{t}\qty(\dv{\ln r_n(t)}{t})^2]^{1/2}\notag\\
	&\ge \qty[\min_{\{\vb{r}_t\}_{0\le t\le 1}}\sum_{n=1}^d\qty(\int_0^1\dd{t}\dv{\ln r_n(t)}{t})^2]^{1/2}\notag\\
	&=\qty[\sum_{n=1}^d\qty(\ln r_n^{\mds{T}}-\ln r_n^{\mds{1}})^2]^{1/2}\notag\\
	&=\ell.
\end{align}

Next, we prove the upper bound $\mca{C}(\mds{T})\le(\sqrt{d}+1)\ell$.
We consider a specific path $\{\mds{T}_t\}_{0\le t\le 1}$ connecting $\mds{1}$ and $\mds{T}$ that satisfies
\begin{equation}
	\ln r_n^{\mds{T}_t}=\ln r_n^{\mds{1}} + (\ln r_n^{\mds{T}} - \ln r_n^{\mds{1}})t + c_t,
\end{equation}
where $c_t$ is a constant chosen to satisfy the constraint $\vb{1}^\top \vb{r}^{\mds{T}_t}=1$.
Specifically, $c_t$ is defined by
\begin{equation}
	c_t\coloneqq -\ln\qty[\sum_{n=1}^d e^{\ln r_n^{\mds{1}}+(\ln r_n^{\mds{T}} - \ln r_n^{\mds{1}})t}].
\end{equation}
As proved in Theorem \ref{theo:smap.exist} (Appendix \ref{app:cla.details}), such a path of stochastic maps always exists.
The time derivative of $c_t$ is
\begin{equation}
	\dot c_t=-\frac{\sum_{n=1}^d (\ln r_n^{\mds{T}} - \ln r_n^{\mds{1}}) e^{\ln r_n^{\mds{1}}+(\ln r_n^{\mds{T}} - \ln r_n^{\mds{1}})t}}{\sum_{n=1}^d e^{\ln r_n^{\mds{1}}+(\ln r_n^{\mds{T}} - \ln r_n^{\mds{1}})t}}.
\end{equation}
For this path, the complexity is upper bounded as
\begin{align}
	\mca{C}(\mds{T})&\le \int_0^1\dd{t}\sqrt{\sum_{n=1}^d\qty(\dv{\ln r_n^{\mds{T}_t}}{t})^2}\notag\\
	&=\int_0^1\dd{t}\sqrt{\sum_{n=1}^d(\ln r_n^{\mds{T}} - \ln r_n^{\mds{1}}+\dot{c}_t)^2}\notag\\
	&\le \int_0^1\dd{t}\qty[ \sqrt{\sum_{n=1}^d(\ln r_n^{\mds{T}} - \ln r_n^{\mds{1}})^2} + \sqrt{d(\dot c_t)^2}].\label{eq:C.ub.tmp1}
\end{align}
Here, we use the Minkowski inequality to obtain the last line.
Applying the inequality $|\sum_nx_ny_n/\sum_ny_n| \le \max_n|x_n|$ for any real numbers $\{x_n\}$ and $\{y_n\}$ satisfying $y_n>0~\forall n$, we obtain
\begin{equation}\label{eq:dotc.ub}
	|\dot c_t|\le\max_{1\le n\le d}|\ln r_n^{\mds{T}} - \ln r_n^{\mds{1}}|\le \ell.
\end{equation}
Combining Eqs.~\eqref{eq:C.ub.tmp1} and \eqref{eq:dotc.ub} yields the desired upper bound:
\begin{equation}
	\mca{C}(\mds{T})\le (\sqrt{d}+1)\ell.
\end{equation}

\subsection{Proof of the protocol-scaling relation \eqref{eq:num.prot.lb}}
Consider the stochastic map $\{\mds{T}_{t}\}_{0\le t\le N}$ generated by the sequence of transition rate matrices $\{\mds{W}_k\}_{k=1}^N$.
That is, $\mds{T}_t=\mca{T}e^{\int_0^t\dd{s}\mds{W}_s}$, where $\mds{W}_t=\mds{W}_{k}$ for $t\in[k-1,k)$ and $1\le k\le N$.
Evidently, $\mds{T}_0=\mds{1}$ and $\mds{T}_N=\mds{T}$.
The time evolution of $\vb{r}_t\coloneqq\vb{r}^{\mds{T}_t}$ is given by $\dot{\vb{r}}_t=\mds{W}_t\vb{r}_t$, from which we obtain
\begin{align}
	-\frac{\dot r_n(t)}{r_n(t)}&=-W_{nn}(t)-\frac{1}{r_n(t)}\sum_{m(\neq n)}W_{nm}(t)r_m(t)\notag\\
	&\le -W_{nn}(t).
\end{align}
Integrating this inequality over time, we obtain
\begin{equation}\label{eq:lndiff.ub1}
	-\ln r_n^{\mds{T}} + \ln r_n^{\mds{1}} \le -\int_0^N\dd{t}W_{nn}(t)\le N\gamma.
\end{equation}
On the other hand, since $r_n^{\mds{T}}\in[0,1]$, we have
\begin{equation}\label{eq:lndiff.ub2}
	\ln r_n^{\mds{T}} - \ln r_n^{\mds{1}}\le -\ln r_n^{\mds{1}}=\ln d.
\end{equation}
Combining Eqs.~\eqref{eq:lndiff.ub1} and \eqref{eq:lndiff.ub2}, we obtain
\begin{equation}
	|\ln r_n^{\mds{T}} - \ln r_n^{\mds{1}}| \le \max(N\gamma,\ln d) \le N\ln(de^\gamma).
\end{equation}
Consequently,
\begin{equation}
	\ell=\sqrt{\sum_{n=1}^d|\ln r_n^{\mds{T}} - \ln r_n^{\mds{1}}|^2}\le N\sqrt{d}\ln(de^\gamma).
\end{equation}
Combining this with the inequality $\mca{C}(\mds{T})\le (\sqrt{d}+1)\ell$ shown in Eq.~\eqref{eq:geo.comp.lub}, we obtain
\begin{equation}
	N\ge \frac{\mca{C}(\mds{T})}{(d+\sqrt{d})\ln(de^\gamma)},
\end{equation}
which is exactly Eq.~\eqref{eq:num.prot.lb}.

\subsection{Proof of the entropic bound \eqref{eq:comp.ent.rel}}
Let $\{\mds{T}_t\}_{0\le t\le 1}$ be the geodesic path, and define $\vb{r}_t\coloneqq\vb{r}^{\mds{T}_t}$. Then, the geometric complexity is evaluated as follows:
\begin{align}
	\mca{C}(\mds{T})&=\int_0^1\dd{t}\sqrt{\sum_{n=1}^d\qty[\frac{\dot r_n(t)}{r_n(t)}]^2}\notag\\
	&\ge\int_0^1\dd{t}\sqrt{\frac{[\sum_{n=1}^d\dot r_n(t)\ln r_n(t)]^2}{\sum_{n=1}^d [r_n(t)\ln r_n(t)]^2}}\notag\\
	&\ge\int_0^1\dd{t}\qty|\dv{t}\ln S(\vb{r}_t)|\notag\\
	&\ge |\ln S(\vb{r}_0) - \ln S(\vb{r}_1)|\notag\\
	&=\ln S(\vb{1}/d) - \ln S(\mds{T}\vb{1}/d).
\end{align}
Here, we use the Cauchy--Schwarz inequality $\sum_na_n^2/b_n\ge(\sum_na_n)^2/\sum_nb_n$ for any nonnegative $\{b_n\}$ and real $\{a_n\}$ to obtain the second line, the inequality $\sum_{n=1}^d(r_n\ln r_n)^2\le (\sum_{n=1}^dr_n\ln r_n)^2 = S(\vb{r})^2$ to obtain the third line, and the triangle inequality $\int_0^1\dd{t}|f(t)|\ge|\int_0^1\dd{t}f(t)|$ for any function $f(t)$ to obtain the fourth line.

\subsection{Geometric complexity of doubly stochastic maps and unital quantum channels}
A direct consequence of our geometric framework is that the complexity of any doubly stochastic classical map $\mds{T}$, as well as any unital quantum channel $\Lambda$, vanishes ($\mca{C}=0$). This is not merely a mathematical convention; it is physically necessary for quantifying the cost of thermodynamic control. To see this, consider the fundamental operational goal of a reset map: initializing an unknown mixed system into a pure state. Such a process necessarily requires entropy extraction from the system. Although the same logic applies to the quantum case, we focus on the classical setting for clarity. A doubly stochastic map $\mds{T}$ satisfies both $\mds{T}^\top\vb{1}=\vb{1}$ (probability preservation) and $\mds{T}\vb{1}=\vb{1}$ (uniform-state preservation). For such maps, entropy is always nondecreasing for any initial probability distribution $\vb{p}$:
\begin{equation}
	S(\vb{p})\le S(\mds{T}\vb{p}).
\end{equation}
Because doubly stochastic maps cannot reduce entropy for any state, they are ineffective for state initialization. In the resource theory of purity (nonuniformity) \cite{Gour.2015.PR}, they are classified as free operations, as they cannot increase purity from a free state (i.e., the uniform distribution). Therefore, any rigorous metric quantifying the operational and thermokinetic burden of resetting should assign zero cost to entropy-increasing processes. By assigning zero geometric complexity to doubly stochastic maps, our framework correctly captures their operational equivalence to the identity map in the context of state reset.

\subsection{Uniqueness of the metric within a parameterized class of Riemannian metrics}
Here, we show that the defined metric is the unique metric satisfying all three properties within a parameterized class of Riemannian metrics that includes the classical Fisher information metric.
Consider a general Riemannian metric parameterized by $\alpha>0$:
\begin{equation}
	g_{\mds{T},\alpha}(\mds{X},\mds{Y})\coloneqq\sum_{n=1}^d\frac{r_n^{\mds{X}}r_n^{\mds{Y}}}{(r_n^{\mds{T}})^\alpha}.
\end{equation}
The metric $g_{\mds{T},2}$ corresponds to the metric employed in the definition of complexity, whereas $g_{\mds{T},1}$ is precisely the classical Fisher information metric.
In what follows, we show that only $\alpha=2$ yields a metric satisfying all three properties.
Specifically, although the metric $g_{\mds{T},\alpha}$ satisfies property (i) for all $\alpha\in(0,+\infty)$, property (ii) cannot be satisfied for $\alpha\in(2,+\infty)$, and property (iii) cannot be satisfied for $\alpha\in(0,2)$.
To this end, we first show that, for any $\alpha\in(0,2)\cup(2,+\infty)$,
\begin{equation}\label{eq:lub.C.alpha}
	\ell_\alpha\le\mca{C}_\alpha(\mds{T})\le(\sqrt{d}+1)\ell_\alpha,
\end{equation}
where $\mca{C}_\alpha$ is the complexity associated with the metric $g_{\mds{T},\alpha}$, and $\ell_\alpha$ denotes the geodesic distance without the stochasticity constraint, given by
\begin{equation}
	\ell_\alpha\coloneqq \frac{1}{|\alpha'|}\sqrt{\sum_{n=1}^d|(r_n^{\mds{T}})^{\alpha'} - (r_n^{\mds{1}})^{\alpha'}|^2}.
\end{equation}
Here, $\alpha'\coloneqq 1-\alpha/2\neq 0$.
The lower bound follows immediately from the following expression for the complexity:
\begin{equation}
	\mca{C}_\alpha(\mds{T})=\min_{\{\mds{T}_t\}}\int_0^1\dd{t}\sqrt{\sum_{n=1}^d\qty[\frac{1}{\alpha'}\dv{t}(r_n^{\mds{T}_t})^{\alpha'}]^2}.
\end{equation}
To derive the upper bound, we consider a specific path $\{\mds{T}_t\}_{0\le t\le 1}$ connecting $\mds{1}$ and $\mds{T}$ and satisfying
\begin{equation}\label{eq:uni.c.def}
	(r_n^{\mds{T}_t})^{\alpha'}=(r_n^{\mds{1}})^{\alpha'} + [(r_n^{\mds{T}})^{\alpha'} - (r_n^{\mds{1}})^{\alpha'}]t + c_t,
\end{equation}
where $c_t$ is a constant chosen to satisfy the constraint $\vb{1}^\top \vb{r}^{\mds{T}_t}=1$.
Such a path always exists by Theorem \ref{theo:smap.exist}.
For this path, the complexity is bounded from above as follows:
\begin{align}
	\mca{C}_\alpha(\mds{T})&\le \frac{1}{|\alpha'|}\int_0^1\dd{t}\sqrt{\sum_{n=1}^d[(r_n^{\mds{T}})^{\alpha'} - (r_n^{\mds{1}})^{\alpha'}+\dot{c}_t]^2}\notag\\
	&\le \frac{1}{|\alpha'|}\int_0^1\dd{t}\qty[ \sqrt{\sum_{n=1}^d[(r_n^{\mds{T}})^{\alpha'} - (r_n^{\mds{1}})^{\alpha'}]^2} + \sqrt{d(\dot c_t)^2}].\label{eq:uni.C.ub.tmp1}
\end{align}
Here, we use the Minkowski inequality to obtain the second line.
From Eq.~\eqref{eq:uni.c.def}, we also have
\begin{align}
	|\dot{c}_t|&=\qty|\frac{\sum_{n=1}^d[(r_n^{\mds{T}})^{\alpha'} - (r_n^{\mds{1}})^{\alpha'}](r_n^{\mds{T}_t})^{1-\alpha'}}{\sum_{n=1}^d(r_n^{\mds{T}_t})^{1-\alpha'}}|\notag\\
	&\le \max_{1\le n\le d}|(r_n^{\mds{T}})^{\alpha'} - (r_n^{\mds{1}})^{\alpha'}|\notag\\
	&\le |\alpha'|\ell_\alpha.\label{eq:uni.dotc.ub}
\end{align}
Here, we use the inequality $|\sum_n x_n y_n/\sum_n y_n| \le \max_n|x_n|$ for any real numbers $\{x_n\}$ and $\{y_n\}$ satisfying $y_n>0$ for all $n$ to obtain the second line.
Combining Eqs.~\eqref{eq:uni.C.ub.tmp1} and \eqref{eq:uni.dotc.ub}, we obtain the desired upper bound:
\begin{equation}
	\mca{C}_\alpha(\mds{T})\le (\sqrt{d}+1)\ell_\alpha.
\end{equation}

We are now ready to show that property (ii) cannot be satisfied for $\alpha\in(2,+\infty)$.
Consider the map $\mds{T}=e^{\mds{W}N}$, which is realized by applying the same protocol $N$ times, each generated by the transition rate matrix $\mds{W}$ over a unit time interval.
The transition rate matrix $\mds{W}$ is given by
\begin{equation}
	\mds{W}=\begin{pmatrix}
		0 & 1\\
		0 & -1
	\end{pmatrix}.
\end{equation}
It is straightforward to verify that
\begin{equation}
	 \mds{T}=\begin{pmatrix}
	 	1 & 1-e^{-N}\\
	 	0 & e^{-N}
	 \end{pmatrix}
\end{equation}
and that $\vb{r}^{\mds{T}}=[1-e^{-N}/2,e^{-N}/2]^\top$.
Applying the lower bound on the complexity in Eq.~\eqref{eq:lub.C.alpha}, we obtain
\begin{equation}
	\mca{C}_\alpha(\mds{T})\ge\frac{1}{|\alpha'|}|(e^{-N}/2)^{\alpha'}-(1/2)^{\alpha'}|.
\end{equation}
For $\alpha>2$, we have $\alpha'=1-\alpha/2<0$.
Therefore,
\begin{align}
	\mca{C}_\alpha(\mds{T})\ge\frac{2^{|\alpha'|}}{|\alpha'|}(e^{N|\alpha'|}-1).
\end{align}
This lower bound on $\mca{C}_\alpha(\mds{T})$ is exponential in $N$; therefore, $\mca{C}_\alpha(\mds{T})$ cannot yield a linear lower bound on $N$ of the form given in Eq.~\eqref{eq:num.prot.lb}. 
Hence, property (ii) cannot be satisfied for $\alpha>2$.

Next, we show that property (iii) cannot be satisfied for $\alpha\in(0,2)$.
In this case, $\alpha'=1-\alpha/2>0$.
From the upper bound in Eq.~\eqref{eq:lub.C.alpha}, we obtain
\begin{equation}
	\mca{C}_\alpha(\mds{T})\le\frac{\sqrt{d}+1}{\alpha'}\sqrt{\sum_{n=1}^d|(r_n^{\mds{T}})^{\alpha'} - (1/d)^{\alpha'}|^2}.
\end{equation}
It follows that
\begin{equation}
	\mca{C}_\alpha(\mds{T})\le\frac{d+\sqrt{d}}{\alpha'}\max(1-d^{-\alpha'},d^{-\alpha'}).
\end{equation}
Therefore, the complexity $\mca{C}_\alpha(\mds{T})$ remains finite even when the error of the map $\mds{T}$ is zero, which violates property (iii).

\section{Detailed derivations for the classical case}\label{app:cla.details}

For convenience, we denote the spaces of $d$-dimensional probability distributions and stochastic maps by $\mca{P}_d$ and $\mca{M}_d$, respectively:
\begin{align}
	\mca{P}_d&=\{\vb{p}\in\mbb{R}_{\ge 0}^d\,|\,\vb{1}^\top\vb{p}=1\},\\
	\mca{M}_d&=\{\mds{T}\in\mbb{R}_{\ge 0}^{d\times d}\,|\,\mds{T}^\top\vb{1}=\vb{1}\}.
\end{align}
The subspace of strictly positive probability distributions is defined as
\begin{equation}
	\mca{P}_d^+=\{\vb{p}\in\mbb{R}_{>0}^d\,|\,\vb{1}^\top\vb{p}=1\}.
\end{equation}

\subsection{Proof of the existence of a path of stochastic maps}

We prove the following theorem.
\begin{theorem}\label{theo:smap.exist}
Let $\mds{T}$ be a stochastic map such that $\vb{r}^{\mds{T}}=\mds{T}\vb{1}/d\in\mca{P}_d^+$, and let $\{\vb{r}_t\}_{0\le t\le 1}$ be a smooth path in $\mca{P}_d^+$ connecting $\vb{r}^{\mds{1}}$ and $\vb{r}^{\mds{T}}$.
Then there exists a corresponding continuous path $\{\mds{T}_t\}_{0\le t\le 1}$ in $\mca{M}_d$ connecting $\mds{1}$ and $\mds{T}$ such that $\mds{T}_t\vb{1}/d=\vb{r}_t$ for all $t\in[0,1]$.
\end{theorem}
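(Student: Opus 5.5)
The construction goes through doubly stochastic maps and splits the path into two pieces. First we interpolate in the row-sum vector while keeping the map ``diagonal-like''. Then we deform, with the row sums held fixed, to the target map $\mds{T}$. Throughout we use two elementary facts about $\mca{M}_d$. First, $\mca{M}_d$ is convex. Second, the row-sum map $\mds{X}\mapsto\mds{X}\vb{1}/d$ is linear, so a convex combination of maps sharing the same row-sum vector $\vb{r}$ again has row-sum vector $\vb{r}$.

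\textbf{Step 1: realize every $\vb{r}\in\mca{P}_d$ by a map depending continuously on $\vb{r}$.} Take the rank-one stochastic map $\mds{R}(\vb{r})\coloneqq\vb{r}\vb{1}^\top$. Its columns all equal $\vb{r}$, so it is stochastic, and $\mds{R}(\vb{r})\vb{1}/d=\vb{r}$. The assignment $\vb{r}\mapsto\mds{R}(\vb{r})$ is linear, hence continuous.

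\textbf{Step 2: the three-segment path.} Let $\mds{U}\coloneqq\vb{1}\vb{1}^\top/d$. Both $\mds{1}$ and $\mds{U}$ have row-sum vector $\vb{r}^{\mds{1}}=\vb{1}/d$, and $\mds{U}=\mds{R}(\vb{1}/d)$. Reparametrize $[0,1]$ into three segments.

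On $[0,1/3]$, set $\mds{T}_t=(1-3t)\mds{1}+3t\,\mds{U}$. This is stochastic by convexity and has constant row sums $\vb{1}/d=\vb{r}_0$.

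On $[1/3,2/3]$, set $\mds{T}_t=\mds{R}(\vb{r}_{s(t)})$ with $s(t)=3t-1$. This traces the given path of row-sum vectors, and it starts at $\mds{U}$ and ends at $\mds{R}(\vb{r}^{\mds{T}})$.

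On $[2/3,1]$, set $\mds{T}_t=(3-3t)\mds{R}(\vb{r}^{\mds{T}})+(3t-2)\mds{T}$. Both endpoints have row-sum vector $\vb{r}^{\mds{T}}$, so the row sums stay equal to $\vb{r}_1$.

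\textbf{Step 3: matching the prescribed parametrization.} The statement demands $\mds{T}_t\vb{1}/d=\vb{r}_t$ at the same parameter $t$, not a reparametrized one. The constant-row-sum segments are therefore not literally allowed unless the path $\vb{r}_t$ is stationary there. This is the main obstacle. To remove it, we perform the row-sum-preserving deformations simultaneously with the interpolation. Define, for all $t\in[0,1]$,
\begin{equation*}
\mds{T}_t \coloneqq a(t)\,\mds{1}+b(t)\,\mds{T}+\bigl(1-a(t)-b(t)\bigr)\,\mds{R}(\vb{q}_t),
\end{equation*}
where $a(t)$ and $b(t)$ are continuous weights with $a=1-2t$ and $b=0$ on $[0,1/2]$, and $a=0$ and $b=2t-1$ on $[1/2,1]$. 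We choose $\vb{q}_t$ so that the row sums come out right:
\begin{equation*}
\vb{q}_t=\frac{\vb{r}_t-a(t)\vb{1}/d-b(t)\vb{r}^{\mds{T}}}{1-a(t)-b(t)}.
\end{equation*}
We must check that $\vb{q}_t\in\mca{P}_d$ and that $\vb{q}_t$ extends continuously to the endpoints. Its entries sum to one automatically. Positivity needs $\vb{r}_t\ge a\vb{1}/d+b\vb{r}^{\mds{T}}$ componentwise, which can fail for these linear weights. The remedy is to use weights that are not linear. Since $\vb{r}_t\in\mca{P}_d^+$ on the compact interval $[0,1]$, there is a $\delta>0$ with $r_n(t)\ge\delta$ for all $n$ and $t$. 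We then take $a(t)=\max\bigl(0,1-Kt\bigr)$ and $b(t)=\max\bigl(0,1-K(1-t)\bigr)$ with $K$ large. By smoothness of $\vb{r}_t$ near $t=0$, we have $\vb{r}_t-a\vb{1}/d=(1-a)\vb{1}/d+O(t)$, and $1-a=Kt$. So $\vb{q}_t=\vb{1}/d+O(1/K)$, which is positive for $K$ large. The symmetric argument at $t=1$ uses $\vb{r}^{\mds{T}}\in\mca{P}_d^+$. Away from both endpoints, $a=b=0$ and $\vb{q}_t=\vb{r}_t>0$.

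Hence $\mds{T}_t$ is a convex combination of stochastic maps, so it is stochastic. It is continuous in $t$, and it equals $\mds{1}$ at $t=0$ and $\mds{T}$ at $t=1$. By linearity of the row-sum map, $\mds{T}_t\vb{1}/d=a\vb{1}/d+b\vb{r}^{\mds{T}}+(1-a-b)\vb{q}_t=\vb{r}_t$. Making the $O(1/K)$ estimate uniform is the step requiring care; it uses the bound $\sup_t\|\dot{\vb{r}}_t\|$ together with $\delta$.
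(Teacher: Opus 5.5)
Your Step~3 construction is correct, and it takes a genuinely different route from the paper. Steps~1--2 can simply be dropped.

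The paper starts from an interpolation $\mds{A}_t$ that is entrywise strictly positive for $t\in(0,1)$. It then corrects the row sums \emph{multiplicatively}, via a generalized Sinkhorn scaling $\diag(e^{\vb{u}_t})\mds{A}_t\diag(e^{\vb{v}_t})$. This needs three ingredients: an existence-and-uniqueness argument for the scaling (minimizing a strictly convex, coercive potential), the implicit function theorem for regularity in $t$, and a uniqueness lemma for doubly scaled matrices (Proposition~\ref{prop:equiv.map}) to identify the endpoint limits. At the endpoints one must also know that $(\vb{u}_t,\vb{v}_t)$ converge as $\mds{A}_t$ loses strict positivity; the paper assumes these limits exist.

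You instead correct the row sums \emph{additively}. You use only convexity of $\mca{M}_d$, linearity of $\mds{X}\mapsto\mds{X}\vb{1}/d$, and the rank-one replacement map $\vb{q}\vb{1}^\top$. Note that $(1-a-b)\,\vb{q}_t\vb{1}^\top=(\vb{r}_t-a\vb{1}/d-b\vb{r}^{\mds{T}})\vb{1}^\top$. Hence continuity and the boundary values $\mds{T}_0=\mds{1}$, $\mds{T}_1=\mds{T}$ are automatic, and $\vb{q}_t$ never needs to be extended to the endpoints. The only thing to check is $\vb{r}_t\ge a\vb{1}/d+b\vb{r}^{\mds{T}}$ componentwise.

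The uniformity you flag is immediate. With $L=\sup_t\|\dot{\vb{r}}_t\|_\infty$, you get $q_n(t)\ge 1/d-L/K$ for $t\le 1/K$ and $q_n(t)\ge r_n^{\mds{T}}-L/K$ for $t\ge 1-1/K$. So any $K\ge\max\{2,\,Ld,\,L/\min_n r_n^{\mds{T}}\}$ works, and $\delta$ is not needed.

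Your route is explicit and elementary, and it needs only a Lipschitz path. It also transfers directly to quantum channels through the replacement channel $\rho\mapsto\tr(\rho)\,\sigma_t$ (Choi matrix $\mds{1}\otimes\sigma_t$), using operator inequalities. The paper's scaling route instead produces a path that is a multiplicative deformation of the straight-line interpolation and strictly positive in the interior. Its single optimization machinery is reused in the continuous-state and quantum appendices.
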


\begin{proof}
First, consider a smooth path of stochastic maps, $\{\mds{A}_t\}_{0\le t\le 1}$, defined by
\begin{equation}
	\mds{A}_t = a\sin(\pi t)\frac{\vb{1}\vb{1}^\top}{d} + [1-a\sin(\pi t)][(1-t)\mds{1} + t\mds{T}],
\end{equation}
where $0<a<1$ is an arbitrary constant.
Evidently, each $\mds{A}_t$ is a valid stochastic map, $\mds{A}_0=\mds{1}$, $\mds{A}_1=\mds{T}$, and all elements of $\mds{A}_t$ are strictly positive for $t\in(0,1)$.
However, this path of stochastic maps may not satisfy the required condition $\mds{A}_t\vb{1}/d=\vb{r}_t$ for all $t\in[0,1]$.
To construct the desired path, we consider the following path of double-scaled stochastic maps:
\begin{equation}\label{eq:scale.map}
	\mds{T}_t=\diag(e^{\vb{u}_t})\mds{A}_t\diag(e^{\vb{v}_t}),
\end{equation}
where $\vb{u}_t$ and $\vb{v}_t$ are $d$-dimensional scaling vectors.
The scaling in Eq.~\eqref{eq:scale.map} is also known as the generalized Sinkhorn transformation \cite{Sinkhorn.1964.AMS}.
According to Proposition \ref{prop:scale.sol}, there exists a path $\{(\vb{u}_t,\vb{v}_t)\}_{0<t<1}$ such that $\mds{T}_t\vb{1}/d=\vb{r}_t$, $\mds{T}_t^\top\vb{1}=\vb{1}$, and $\vb{1}^\top\vb{v}_t=0$.
The gauge condition $\vb{1}^\top\vb{v}_t=0$ resolves the translational invariance of the log-linear scaling (i.e., $\vb{u}_t\leftarrow\vb{u}_t+c$ and $\vb{v}_t\leftarrow\vb{v}_t-c$ yield the same map $\mds{T}_t$ for any real $c$).
By the implicit function theorem, the path $\{(\vb{u}_t,\vb{v}_t)\}_{0<t<1}$ is smooth; consequently, the corresponding path of stochastic maps $\{\mds{T}_t\}_{0<t<1}$ is continuous.

The remaining task is to verify the boundary conditions, namely,
\begin{equation}
	\mds{T}_0=\mds{1}~\text{and}~\mds{T}_1=\mds{T},
\end{equation}
where $\mds{T}_{0/1}\coloneqq\lim_{t\to 0/1}\mds{T}_t$.
Let $\vb{u}_{0/1}=\lim_{t\to 0/1}\vb{u}_t$ and $\vb{v}_{0/1}=\lim_{t\to 0/1}\vb{v}_t$.
Then $\mds{T}_0=\diag(e^{\vb{u}_0})\mds{A}_0\diag(e^{\vb{v}_0})$ and $\mds{T}_1=\diag(e^{\vb{u}_1})\mds{A}_1\diag(e^{\vb{v}_1})$.
Note that both maps $\mds{T}_0=\diag(e^{\vb{u}_0})\mds{A}_0\diag(e^{\vb{v}_0})$ and $\mds{1}=\diag(e^{\vb{0}})\mds{A}_0\diag(e^{\vb{0}})$ satisfy
\begin{align}
	\mds{T}_0\vb{1}&=\lim_{t\to 0}\mds{T}_t\vb{1}=d\lim_{t\to 0}\vb{r}_t=d\vb{r}_0=\mds{1}\vb{1},\\
	\mds{T}_0^\top\vb{1}&=\lim_{t\to 0}\mds{T}_t^\top\vb{1}=\vb{1}=\mds{1}^\top\vb{1}.
\end{align}
Applying Proposition \ref{prop:equiv.map}, we obtain $\mds{T}_0=\mds{1}$.
Similarly, both maps $\mds{T}_1=\diag(e^{\vb{u}_1})\mds{A}_1\diag(e^{\vb{v}_1})$ and $\mds{T}=\diag(e^{\vb{0}})\mds{A}_1\diag(e^{\vb{0}})$ satisfy
\begin{align}
	\mds{T}_1\vb{1}&=\lim_{t\to 1}\mds{T}_t\vb{1}=d\lim_{t\to 1}\vb{r}_t=d\vb{r}_1=\mds{T}\vb{1},\\
	\mds{T}_1^\top\vb{1}&=\lim_{t\to 1}\mds{T}_t^\top\vb{1}=\vb{1}=\mds{T}^\top\vb{1}.
\end{align}
Again, Proposition \ref{prop:equiv.map} gives $\mds{T}_1=\mds{T}$.
Therefore, $\{\mds{T}_t\}_{0\le t\le 1}$ is the desired continuous path connecting $\mds{1}$ and $\mds{T}$ and satisfying $\mds{T}_t\vb{1}/d=\vb{r}_t$ for all $t\in[0,1]$.
\end{proof}

\begin{proposition}\label{prop:scale.sol}
Let $\mds{A}=[A_{mn}]\in\mbb{R}_{>0}^{d\times d}$ be a stochastic map, and let $\vb{r}\in\mca{P}_d^+$ be a positive distribution.
Then there exists a unique pair of vectors $\vb{u}$ and $\vb{v}$ such that $\vb{1}^\top\vb{v}=0$ and the matrix $\mds{T}=\diag(e^{\vb{u}})\mds{A}\diag(e^{\vb{v}})$ satisfies $\mds{T}\vb{1}/d=\vb{r}$ and $\mds{T}^\top\vb{1}=\vb{1}$.
\end{proposition}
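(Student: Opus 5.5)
This is a matrix scaling problem with prescribed marginals. Let $\vb{a}\coloneqq d\vb{r}$, the target row sums, and $\vb{1}$ the target column sums; both have total mass $d$, so the problem is consistent. Since $\mds{A}$ has strictly positive entries, I would invoke the classical Sinkhorn theorem (Sinkhorn, and Menon/Brualdi for general marginals). For a positive matrix and positive marginals of equal total mass, it gives positive diagonal matrices $D_1,D_2$ such that $D_1\mds{A}D_2$ has exactly those marginals, unique up to $D_1\to cD_1$, $D_2\to c^{-1}D_2$. Writing $D_1=\diag(e^{\vb{u}})$ and $D_2=\diag(e^{\vb{v}})$, the one-parameter ambiguity is exactly the translation $\vb{u}\to\vb{u}+c$, $\vb{v}\to\vb{v}-c$. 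The gauge $\vb{1}^\top\vb{v}=0$ then fixes $c$ uniquely.

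To keep the argument self-contained, I would instead give a variational proof. Define the strictly convex-in-gauge potential
\begin{equation}
F(\vb{u},\vb{v})\coloneqq\sum_{m,n}A_{mn}e^{u_m+v_n}-\sum_m a_m u_m-\sum_n v_n .
\end{equation}
Its stationarity conditions $\partial_{u_m}F=0$ and $\partial_{v_n}F=0$ are exactly $\mds{T}\vb{1}=\vb{a}$ and $\mds{T}^\top\vb{1}=\vb{1}$. $F$ is invariant under the translation because $\sum_m a_m=\sum_n 1=d$, so I restrict to the hyperplane $\vb{1}^\top\vb{v}=0$.

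Existence reduces to showing that $F$ is coercive on this hyperplane. This is the main step. The first term is at least $A_{\min}e^{u_m+v_n}$ for every pair $(m,n)$. Suppose $\|(\vb{u},\vb{v})\|\to\infty$ within the hyperplane. Either some $u_m+v_n\to+\infty$, and then the exponential dominates the linear terms. Or all sums $u_m+v_n$ stay bounded above. In the second case, write $-\sum_m a_mu_m-\sum_n v_n=-\frac1d\sum_{m,n}a_m(u_m+v_n)$, using $\sum_n v_n=0$ and $\sum_m a_m=d$. Since every $a_m>0$, this combination tends to $+\infty$ unless every $u_m+v_n$ stays bounded below. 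Boundedness of all pairwise sums, together with the gauge, forces $(\vb{u},\vb{v})$ to be bounded, which is a contradiction. Positivity of $\vb{r}$ and of $\mds{A}$ is used essentially here. Coercivity plus continuity gives a minimizer, hence a solution.

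For uniqueness, I would compute the Hessian of $F$. It is the Laplacian-like form $\sum_{m,n}T_{mn}(x_m+y_n)^2$, which vanishes only when $x_m+y_n=0$ for all $m,n$ (all $T_{mn}>0$), i.e.\ only along the translation direction. That direction is excluded by the gauge, so $F$ is strictly convex on the hyperplane and the critical point is unique. The same nondegenerate Hessian also justifies the implicit-function-theorem smoothness claimed in Theorem \ref{theo:smap.exist}.
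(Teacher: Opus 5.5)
Your proposal is correct and follows essentially the same route as the paper: the same convex potential $F$ with target row sums $d\vb{r}$, stationarity giving exactly the two marginal conditions, the Hessian form $\sum_{m,n}T_{mn}(x_m+y_n)^2$ vanishing only along the gauge direction, and coercivity on the hyperplane $\vb{1}^\top\vb{v}=0$. Your coercivity argument is more explicit than the paper's, which only asserts it; it is cleanest to note at the outset that on the hyperplane $F=\sum_{m,n}\bigl[A_{mn}e^{s_{mn}}-a_m s_{mn}/d\bigr]$ with $s_{mn}=u_m+v_n$, where each summand is bounded below and blows up as $|s_{mn}|\to\infty$, which settles both of your cases at once (along subsequences).
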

\begin{proof}
To establish the unique existence of $(\vb{u},\vb{v})$, consider the function
\begin{equation}
	F(\vb{u},\vb{v})=\sum_{m,n}e^{u_m+v_n}A_{mn} - d\sum_m r_mu_m - \sum_nv_n.
\end{equation}
Since
\begin{align}
	\pdv{F(\vb{u},\vb{v})}{u_m}&=\sum_{n}e^{u_m+v_n}A_{mn}-dr_m,\\
	\pdv{F(\vb{u},\vb{v})}{v_n}&=\sum_{m}e^{u_m+v_n}A_{mn}-1,
\end{align}
a global minimum $(\vb{u},\vb{v})$ of $F$ must satisfy the stationarity conditions $\pdv*{F}{u_m}=0$ and $\pdv*{F}{v_n}=0$, which are exactly equivalent to $\mds{T}\vb{1}/d=\vb{r}$ and $\mds{T}^\top\vb{1}=\vb{1}$.
Therefore, it suffices to prove the uniqueness of the global minimum of $F(\vb{u},\vb{v})$.
Since the Hessian matrix of $F$ has the form
\begin{equation}
	\mds{H}=\begin{pmatrix}
		\diag(\mds{T}\vb{1}) & \mds{T}\\
		\mds{T}^\top & \diag(\mds{T}^\top\vb{1})
	\end{pmatrix},
\end{equation}
its quadratic form $\vb{z}^\top\mds{H}\vb{z}$ reads
\begin{equation}
	\vb{z}^\top\mds{H}\vb{z}=\sum_{m,n}T_{mn}(x_m+y_n)^2\ge 0,
\end{equation}
where $\vb{z} = [ x_1, \dots, x_d, y_1, \dots, y_d ]^\top$.
Because $A_{mn}>0$ and $T_{mn}>0$ for all $m$ and $n$, $F$ is strictly convex on the subspace orthogonal to $[\vb{1}^\top,-\vb{1}^\top]^\top$.
Moreover, $F(\vb{u},\vb{v})\to\infty$ as $|u_m|\to\infty$ or $|v_n|\to\infty$.
Hence, $F(\vb{u},\vb{v})$ is continuous, strictly convex, and coercive on the restricted subspace $\vb{1}^\top\vb{v}=0$.
Consequently, it has a unique global minimum $(\vb{u}^*,\vb{v}^*)$, which completes the proof.
\end{proof}

\begin{proposition}\label{prop:equiv.map}
Let $\mds{A}=[A_{mn}]\in\mbb{R}_{\ge 0}^{d\times d}$ be a stochastic map.
Suppose there exist pairs $(\vb{u},\vb{v})$ and $(\widetilde{\vb{u}},\widetilde{\vb{v}})$ such that the two double-scaled matrices $\mds{T}=\diag(e^{\vb{u}})\mds{A}\diag(e^{\vb{v}})$ and $\widetilde{\mds{T}}=\diag(e^{\widetilde{\vb{u}}})\mds{A}\diag(e^{\widetilde{\vb{v}}})$ satisfy $\mds{T}\vb{1}=\widetilde{\mds{T}}\vb{1}$ and $\mds{T}^\top\vb{1}=\widetilde{\mds{T}}^\top\vb{1}=\vb{1}$.
Then $\mds{T}=\widetilde{\mds{T}}$.
\end{proposition}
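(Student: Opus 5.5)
The plan is to compare the two scalings entrywise. The comparison goes through a symmetric, Kullback--Leibler-type sum that vanishes exactly because the row and column marginals coincide. We do not need strict positivity of $\mds{A}$, which matters because the proposition is applied at the endpoints $t\to 0,1$, where $\mds{A}_0=\mds{1}$ and $\mds{A}_1=\mds{T}$ may have zeros.

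\textbf{Step 1 (common support and ratios).} Write $x_m\coloneqq e^{\widetilde u_m-u_m}>0$ and $y_n\coloneqq e^{\widetilde v_n-v_n}>0$. Then $\widetilde T_{mn}=x_m T_{mn} y_n$ for all $m,n$. Since the scaling factors are strictly positive, $T_{mn}>0$ if and only if $A_{mn}>0$, if and only if $\widetilde T_{mn}>0$. So both matrices share the support $S\coloneqq\{(m,n): A_{mn}>0\}$. On $S$ we have $\ln\widetilde T_{mn}-\ln T_{mn}=\ln x_m+\ln y_n$.

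\textbf{Step 2 (vanishing symmetric divergence).} Consider
\begin{equation}
	J\coloneqq\sum_{(m,n)\in S}(\widetilde T_{mn}-T_{mn})(\ln\widetilde T_{mn}-\ln T_{mn}).
\end{equation}
Substituting Step 1 gives
\begin{equation}
	J=\sum_m \ln x_m\,(\widetilde{\mds{T}}\vb{1}-\mds{T}\vb{1})_m+\sum_n \ln y_n\,(\widetilde{\mds{T}}^\top\vb{1}-\mds{T}^\top\vb{1})_n .
\end{equation}
Entries outside $S$ vanish in both matrices, so the row and column sums may be taken over $S$ alone. By hypothesis both bracketed vectors are zero, hence $J=0$.

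\textbf{Step 3 (conclusion).} Each summand of $J$ has the form $(a-b)(\ln a-\ln b)$ with $a,b>0$. Since $\ln$ is strictly increasing, each summand is nonnegative and equals zero only when $a=b$. From $J=0$ it follows that $\widetilde T_{mn}=T_{mn}$ on $S$. Off $S$ both entries are zero, so $\mds{T}=\widetilde{\mds{T}}$.

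The only delicate point is Step 1's support argument. It is what lets the proof avoid any positivity assumption on $\mds{A}$ and any appeal to the strict convexity used in Proposition~\ref{prop:scale.sol}. Once the supports are identified, the cancellation in Step 2 is purely algebraic. Note also that the scaling vectors themselves need not be unique, since only the products $x_m y_n$ on $S$ are determined; the proposition claims only equality of the matrices, which is exactly what the argument gives.
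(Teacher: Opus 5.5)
Your proof is correct, and it takes a genuinely different route from the paper's.

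The paper argues combinatorially. It splits the bipartite support graph of $\mds{A}$ into connected components. On each component it takes the row index maximizing $\widetilde u_m-u_m$ and the column index minimizing $\widetilde v_n-v_n$ (these are the logarithms of your $x_m,y_n$). It uses the row-sum identity at the former and the column-sum identity at the latter to show that these two extremal values sum to zero. It then propagates the resulting equalities along edges to conclude that the scaling ratios are constant on each component with $x_my_n=1$, hence $\widetilde T_{mn}=T_{mn}$.

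Your symmetric-divergence identity replaces this extremal-plus-propagation mechanism with a single global cancellation. Set $\Psi(\vb{u},\vb{v})=\sum_{m,n}e^{u_m+v_n}A_{mn}$, whose gradient is the pair $(\mds{T}\vb{1},\mds{T}^\top\vb{1})$. Your $J$ is then exactly the monotonicity expression $[\nabla\Psi(\widetilde{\vb{u}},\widetilde{\vb{v}})-\nabla\Psi(\vb{u},\vb{v})]\cdot(\widetilde{\vb{u}}-\vb{u},\widetilde{\vb{v}}-\vb{v})$ for the same convex potential that underlies Proposition~\ref{prop:scale.sol}. The strict convexity that proposition extracts from $A_{mn}>0$ is not needed, because strictness is recovered entrywise from the strict monotonicity of $\ln$ on the support.

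What your route buys:
\begin{itemize}
\item It is shorter and needs no component bookkeeping, no separate treatment of edgeless components, and no propagation step.
\item The paper's structural information comes for free: $\widetilde T_{mn}=T_{mn}$ on $S$ means $x_my_n=1$ there, which forces $x$ and $y$ to be constant on each connected component of the support graph.
\item It generalizes cleanly. With Klein's inequality $\tr[(\mds{X}-\mds{Y})(\ln\mds{X}-\ln\mds{Y})]\ge 0$ (equality iff $\mds{X}=\mds{Y}$, for positive-definite operators) in place of the scalar monotonicity of $\ln$, the same computation gives uniqueness of $\mds{M}$ in the quantum scaling of Proposition~\ref{prop:scale.sol.qu}.
\end{itemize}

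The paper's argument, in exchange, uses only sign and extremal comparisons of the entries. It also exhibits the gauge freedom of $(\vb{u},\vb{v})$ directly within the proof rather than as a corollary.
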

\begin{proof}
We consider a bipartite graph $G=(U,V,E)$, where $U=\{1,\dots,d\}$ denotes the set of row nodes, $V=\{1,\dots,d\}$ denotes the set of column nodes, and $E=\{(m,n)\in U\times V\,|\,A_{mn}>0\}$ denotes the set of edges connecting $U$ and $V$ whenever $A_{mn}$ is positive.
This graph can always be decomposed into connected components, $G=\bigcup_k G_k$, where $G_k=(U_k,V_k,E_k)$.
Note that $\{U_k\}_k$ are disjoint subsets, and the same holds for $\{V_k\}_k$ and $\{G_k\}_k$.
If $(m,n)\notin E$, then $A_{mn}=0$, which immediately implies $T_{mn}=\widetilde{T}_{mn}$.
Thus, to show $\mds{T}=\widetilde{\mds{T}}$, it suffices to prove $T_{mn}=\widetilde{T}_{mn}$ for $(m,n)\in E$.

To this end, consider each component $G_k$.
If $E_k=\emptyset$, then $A_{mn}=0$ for any $m\in U_k$ or $n\in V_k$, which immediately implies $T_{mn}=\widetilde{T}_{mn}$.
Therefore, we need only consider components $G_k$ with nonempty $E_k$.
Define $x_m=\widetilde{u}_m-u_m$ and $y_n=\widetilde{v}_n-v_n$ for $m\in U_k$ and $n\in V_k$.
Let $\hat{m}=\argmax_{m\in U_k}x_m$ and $\hat{n}=\argmin_{n\in V_k}y_n$.
From $\sum_nT_{\hat{m}n}=\sum_n\widetilde{T}_{\hat{m}n}$ and
$\widetilde{T}_{mn}=e^{\widetilde{u}_m+\widetilde{v}_n}A_{mn}=e^{x_m+y_n}e^{u_m+v_n}A_{mn}=e^{x_m+y_n}T_{mn}$ for all $m,n$, we have
\begin{equation}\label{eq:tmp1}
	\sum_{n\in V_k}T_{\hat{m}n}=\sum_{n\in V_k}T_{\hat{m}n}e^{x_{\hat{m}}+y_n}\Leftrightarrow \sum_{n\in V_k}T_{\hat{m}n}(e^{x_{\hat{m}}+y_n}-1)=0.
\end{equation}
Thus, $x_{\hat{m}}+y_n\le 0$ holds for some $n\in V_k$ (otherwise, the left-hand side of Eq.~\eqref{eq:tmp1} would be strictly positive); hence $x_{\hat{m}}+y_{\hat{n}}\le 0$.
Similarly, from $\sum_mT_{m\hat{n}}=\sum_m\widetilde{T}_{m\hat{n}}$, we have
\begin{equation}\label{eq:tmp2}
	\sum_{m\in U_k}T_{m\hat{n}}=\sum_{m\in U_k}T_{m\hat{n}}e^{x_m+y_{\hat{n}}}\Leftrightarrow \sum_{m\in U_k}T_{m\hat{n}}(e^{x_m+y_{\hat{n}}}-1)=0.
\end{equation}
Thus, $x_m+y_{\hat{n}}\ge 0$ holds for some $m\in U_k$ (otherwise, the left-hand side of Eq.~\eqref{eq:tmp2} would be strictly negative); hence $x_{\hat{m}}+y_{\hat{n}}\ge 0$.
Consequently, $x_{\hat{m}}+y_{\hat{n}}=0$.
It follows that
\begin{equation}
	x_{\hat{m}}+y_n\ge 0=x_{\hat{m}}+y_{\hat{n}}\ge x_m+y_{\hat{n}}.
\end{equation}
This implies that $x_{\hat{m}}+y_n=0$ whenever $T_{\hat{m}n}>0$, and $x_m+y_{\hat{n}}=0$ whenever $T_{m\hat{n}}>0$.
In other words, $y_n=-x_{\hat{m}}=y_{\hat{n}}$ whenever $(\hat{m},n)\in E_k$, and $x_m=-y_{\hat{n}}=x_{\hat{m}}$ whenever $(m,\hat{n})\in E_k$.
By propagating this relation to all nodes in $G_k$, we obtain $x_m=x_{\hat{m}}$ for any $m\in U_k$ and $y_n=y_{\hat{n}}$ for any $n\in V_k$.
Since $x_{\hat{m}}+y_{\hat{n}}=0$, we have, for any $(m,n)\in E_k$,
\begin{equation}
	\widetilde{T}_{mn}=e^{x_m+y_n}T_{mn}=e^{x_{\hat{m}}+y_{\hat{n}}}T_{mn}=T_{mn}.
\end{equation}
This completes the proof.
\end{proof}

\subsection{Calculation of the geometric complexity of the two-level reset map}
The two-level reset map $\mds{T}$ is explicitly given by
\begin{equation}
	\mds{T}=\begin{pmatrix}
		1 & 1-e^{-w\tau}\\
		0 & e^{-w\tau}
	\end{pmatrix}.
\end{equation}
To evaluate the geometric complexity $\mca{C}(\mds{T})$, it is convenient to consider the curve (C), $e^x+e^y=1$, on the two-dimensional $xy$ plane.
The vectors $\vb{r}^{\mds{1}}=[1/2,1/2]^\top$ and $\vb{r}^{\mds{T}}=[1-e^{-w\tau}/2,e^{-w\tau}/2]^\top$ are mapped to this plane as the points $P(x_P,y_P)$ and $Q(x_Q,y_Q)$, where $x_P=y_P=\ln(1/2)$, $x_Q=\ln(1-e^{-w\tau}/2)$, and $y_Q=\ln(e^{-w\tau}/2)$.
Then $\mca{C}(\mds{T})$ is exactly the arc length between points $P$ and $Q$ along curve (C).
From $y=\ln(1-e^x)$, the arc length between $P$ and $Q$ is
\begin{equation}
	L=\int_{x_P}^{x_Q}\dd{x}\sqrt{1+\qty(\dv{y}{x})^2}=\int_{x_P}^{x_Q}\dd{x}\frac{\sqrt{2e^{2x}-2e^x+1}}{1-e^x}.
\end{equation}
Setting $z=2e^x-1$, the integral becomes
\begin{equation}
	\int\dd{x}\frac{\sqrt{2e^{2x}-2e^x+1}}{1-e^x}=\sqrt{2}\int\dd{z}\frac{\sqrt{1+z^2}}{1-z^2}.
\end{equation}
By direct integration, we obtain
\begin{equation}
	\int\dd{z}\frac{\sqrt{1+z^2}}{1-z^2}=\sqrt{2}\atanh\frac{\sqrt{2}z}{\sqrt{1+z^2}}-\atanh\frac{z}{\sqrt{1+z^2}}+\text{const}.
\end{equation}
Since $z=0$ at $x=x_P$ and $z=1-e^{-w\tau}$ at $x=x_Q$, the arc length is
\begin{align}
	L&=2\atanh[\sqrt{2}f(w\tau)]-\sqrt{2}\atanh[f(w\tau)],
\end{align}
where $f(u)=(1-e^{-u})/\sqrt{1+(1-e^{-u})^2}$.
Therefore, the complexity is $\mca{C}(\mds{T})=L$.

\subsubsection{Asymptotic expression of the complexity}
We evaluate $\mca{C}(\mds{T})$ in the regime $w\tau\gg 1$.
Let $z=e^{-w\tau}\ll 1$. The complexity is then evaluated as follows:
\begin{align}
	\mca{C}(\mds{T})&=2\atanh\qty(\frac{\sqrt{2}(1-z)}{\sqrt{1+(1-z)^2}})-\sqrt{2}\atanh\qty(\frac{1-z}{\sqrt{1+(1-z)^2}})\notag\\
	&=\ln\frac{4-z+O(z^2)}{z+O(z^2)}-\frac{1}{\sqrt{2}}\ln\frac{2+2\sqrt{2}-z+O(z^2)}{2\sqrt{2}-2+z+O(z^2)}\notag\\
	&=-\ln z + \ln 4 - \frac{1}{\sqrt{2}}\ln \frac{1+\sqrt{2}}{\sqrt{2}-1} + O(z)\notag\\
	&=w\tau + 2\ln 2 - \sqrt{2}\ln(\sqrt{2}+1) + O(e^{-w\tau}).
\end{align}

\subsubsection{Analytical proof of $e^{\mca{C}(\mds{T})}\times\epsilon(\mds{T})\ge 1$}
Since $\epsilon(\mds{T})=e^{-w\tau}$, it suffices to prove
\begin{equation}\label{eq:ana.proof}
	\mca{C}(\mds{T})\ge w\tau.
\end{equation}
Let $g(u)=2\atanh[\sqrt{2}f(u)]-\sqrt{2}\atanh[f(u)]$, then $\mca{C}(\mds{T})=g(w\tau)$.
Setting $z=1-e^{-u}\ge 0$, we obtain $f(u)=z/\sqrt{1+z^2}$.
Differentiating $g(u)$ with respect to $u$ gives
\begin{align}
	\dv{g}{u}&=\dv{g}{f}\dv{f}{z}\dv{z}{u}\notag\\
	&=\sqrt{2}\frac{(1+z^2)^2}{1-z^2}\frac{1}{(1+z^2)^{3/2}}(1-z)\notag\\
	&=\frac{\sqrt{2(1+z^2)}}{1+z}\ge 1.
\end{align}
Therefore, $g(u)\ge g(0)+u=u$, which immediately proves Eq.~\eqref{eq:ana.proof}.

\subsection{Generalization to the continuous-state case}
Here, we generalize the complexity theory to the continuous-variable setting.
For simplicity, we consider stochastic maps acting on probability distributions over the one-dimensional interval $[-1,1]$.
A stochastic map $\mds{T}$ is defined via a transition kernel $\mca{K}(x,y)$ as
\begin{equation}
	(\mds{T}p)(x)=\int_{-1}^1\dd{y}\mca{K}(x,y)p(y),
\end{equation}
where $\mca{K}(x,y)$ satisfies the conditions $\mca{K}(x,y)\ge 0$ and $\int_{-1}^1\dd{x}\mca{K}(x,y)=1$.
As in the discrete-state case, for each stochastic map $\mds{T}$, we define $r^{\mds{T}}\coloneqq\mds{T}\bar{p}$, where $\bar{p}(x)=1/2$ denotes the uniform distribution.
In other words, $r^{\mds{T}}$ is the output distribution obtained by applying the map $\mds{T}$ to the uniform distribution.
For any reset map $\mds{T}$, we define its geometric complexity as
\begin{equation}
	\mca{C}(\mds{T})\coloneqq\min_{\{\mds{T}_t\}}\int_0^1\dd{t}\sqrt{g_{p_t}(\dot p_t,\dot p_t)},
\end{equation}
where $p_t=r^{\mds{T}_t}$, and the minimum is taken over all continuous paths $\{\mds{T}_t\}_{0\le t\le 1}$ of maps connecting the identity map $\mds{1}$ to the given map $\mds{T}$.
Here, the identity map $\mds{1}$ is associated with the kernel $\mca{K}(x,y)=\delta(x-y)$.
The Riemannian metric $g_p$ is defined analogously to that in the discrete-state case:
\begin{equation}
	g_{p}(u,v)=\int_{-1}^1\dd{x}\frac{u(x)v(x)}{p(x)^2}.
\end{equation}

\subsubsection{Protocol-scaling relation}
Consider a reset map $\mds{T}$ constructed by sequentially applying $N$ protocols, where the $k$th protocol is realized by an overdamped Langevin process with a fixed potential $U_k(x)$ over a unit time interval.
We impose the physically relevant constraint that the maximum height of the potential $U_k(x)$ remains finite at all times, namely, $0\le U_k(x)\le U_m$ for any $-1<x<1$ and $1\le k\le N$, where $U_m$ is a positive constant.
Let $\mca{F}_k$ denote the Fokker--Planck generator corresponding to the $k$th protocol, defined by
\begin{equation}
	\mca{F}_k(p)=\pdv{x}(\pdv{U_k}{x}p + D\pdv{x}p).
\end{equation}
Then, the map $\mds{T}$ is determined by the kernel
\begin{equation}
	\mca{K}(x,y)=e^{\mca{F}_N}\dots e^{\mca{F}_1}[\delta(x-y)].
\end{equation}
In what follows, we show how the complexity of this map is related to the number of protocols.

First, let $p_k(x)=e^{\mca{F}_k}[p_{k-1}(x)]$, where $p_0=\bar{p}$ is the uniform distribution.
Define the equilibrium distribution associated with the potential $U_k(x)$ by $\pi_k(x)=e^{-\beta U_k(x)}/Z_k$.
We prove by induction that
\begin{equation}
	e^{-2\beta kU_m}\pi_k(x)\le p_k(x)\le e^{2\beta kU_m}\pi_k(x),
\end{equation}
where $U_0(x)=0$.
This is evident for $k=0$.
Assume that $e^{-2\beta kU_m}\pi_k(x)\le p_k(x)\le e^{2\beta kU_m}\pi_k(x)$ holds for some $k\ge 0$.
Since $0\le U_{k}(x),U_{k+1}(x)\le U_m$, we have
\begin{align}
	\pi_k(x)&=\frac{e^{-\beta U_k(x)}}{\int_{-1}^1\dd{x}e^{-\beta U_k(x)}}\notag\\
	&\ge e^{-2\beta U_m}\frac{e^{-\beta U_{k+1}(x)}}{\int_{-1}^1\dd{x}e^{-\beta U_{k+1}(x)}}\notag\\
	&=e^{-2\beta U_m}\pi_{k+1}(x)
\end{align}
and
\begin{align}
	\pi_k(x)&\le e^{2\beta U_m}\frac{e^{-\beta U_{k+1}(x)}}{\int_{-1}^1\dd{x}e^{-\beta U_{k+1}(x)}}\notag\\
	&=e^{2\beta U_m}\pi_{k+1}(x).
\end{align}
Therefore,
\begin{equation}
	e^{-2\beta(k+1)U_m}\pi_{k+1}(x)\le p_{k}(x)\le e^{2\beta(k+1)U_m}\pi_{k+1}(x).
\end{equation}
It then follows immediately from Proposition \ref{prop:pdf.lb} that $p_{k+1}(x)=e^{\mca{F}_{k+1}}[p_k(x)]$ satisfies
\begin{equation}
	e^{-2\beta(k+1)U_m}\pi_{k+1}(x)\le p_{k+1}(x)\le e^{2\beta(k+1)U_m}\pi_{k+1}(x).
\end{equation}

We are now ready to establish the connection between the number of protocols and the geometric complexity by deriving a lower bound on $N$ in terms of $\mca{C}(\mds{T})$.
To this end, consider a smooth path of stochastic maps $\{\mds{T}_t\}_{0\le t\le 1}$ connecting $\mds{1}$ and $\mds{T}$ such that
\begin{equation}
	\ln r^{\mds{T}_t}(x)=\ln r^{\mds{1}}(x) + [\ln r^{\mds{T}}(x) - \ln r^{\mds{1}}(x)]t + c_t,
\end{equation}
where $c_t$ is a constant chosen to satisfy the normalization condition $\int_{-1}^1\dd{x}r^{\mds{T}_t}(x)=1$.
Such a path always exists by Proposition \ref{prop:ovd.path.exist}.
Defining $q(x)\coloneqq\ln r^{\mds{T}}(x) - \ln r^{\mds{1}}(x)=\ln[2r^{\mds{T}}(x)]$, we can write $c_t$ as
\begin{equation}
	c_t=-\ln\qty(\int_{-1}^1\dd{x}e^{-\ln 2 + q(x)t}).
\end{equation}
The time derivative of $c_t$ is given by
\begin{align}
	\dot c_t&=-\frac{\int_{-1}^1\dd{x}q(x)e^{q(x)t}}{\int_{-1}^1\dd{x}e^{q(x)t}}.
\end{align}
Applying the Cauchy--Schwarz inequality, we obtain
\begin{align}
	&\int_{-1}^1\dd{x}q(x)^2e^{q(x)t}\int_{-1}^1\dd{x}e^{q(x)t}\notag\\
	&\ge \qty[\int_{-1}^1\dd{x}q(x)e^{q(x)t}]^2,
\end{align}
which implies that $\ddot{c}_t\le 0$.
Therefore, $|\dot{c}_t|\le\max\qty(|\dot{c}_0|,|\dot{c}_1|)$.
Since $p_N(x)=(\mds{T}\bar{p})(x)=r^{\mds{T}}(x)$ satisfies
\begin{equation}
	e^{-2\beta NU_m}\pi_{N}(x)\le p_{N}(x)\le e^{2\beta NU_m}\pi_{N}(x),
\end{equation}
it follows that
\begin{equation}
	\frac{1}{2}e^{-\beta(2N+1)U_m}\le r^{\mds{T}}(x)\le \frac{1}{2}e^{\beta(2N+1)U_m}.
\end{equation}
Using this inequality, we obtain
\begin{align}
	|\dot{c}_0|&=\frac{1}{2}\qty|\int_{-1}^1\dd{x}[\ln r^{\mds{T}}(x) - \ln r^{\mds{1}}(x)]|\notag\\
	&\le \beta(2N+1)U_m,\\
	|\dot{c}_1|&=\qty|\int_{-1}^1\dd{x}[\ln r^{\mds{T}}(x) - \ln r^{\mds{1}}(x)]r^{\mds{T}}(x)|\notag\\
	&\le \beta(2N+1)U_m.
\end{align}
In other words, $|\dot{c}_t|\le \beta(2N+1)U_m$.
Combining this with the bound $|\ln r^{\mds{T}}(x) - \ln r^{\mds{1}}(x)|\le \beta(2N+1)U_m$, we obtain the following upper bound on the complexity:
\begin{align}
	\mca{C}(\mds{T})&\le \int_0^1\dd{t}\sqrt{\int_{-1}^1\dd{x}\qty(\dv{\ln r^{\mds{T}_t}(x)}{t})^2}\notag\\
	&=\int_0^1\dd{t}\sqrt{\int_{-1}^1\dd{x}(\ln r^{\mds{T}}(x) - \ln r^{\mds{1}}(x)+\dot{c}_t)^2}\notag\\
	&\le 2\sqrt{2}\beta(2N+1)U_m.
\end{align}
Consequently, the number of protocols is lower bounded in terms of the complexity as
\begin{equation}
	N\ge \frac{\mca{C}(\mds{T})}{4\sqrt{2}\beta U_m}-\frac{1}{2}.
\end{equation}
This relation also indicates that the geometric complexity captures the maximum height of the potential, since perfect resetting can be achieved in a finite number of operations provided that $U_m$ diverges.

\subsubsection{The complexity-error trade-off relation}
The error of the reset map $\mds{T}$ associated with an undesired region $\mfr{u}=[0,1]$ is defined as
\begin{equation}
	\epsilon(\mds{T})\coloneqq\int_{\mfr{u}}\dd{x}(\mds{T}\vb{1})(x),
\end{equation}
where $\vb{1}(x)=1$ for all $x\in[-1,1]$.
This error quantifies how effectively the reset map suppresses the probability weight in the undesired region when acting on the maximally mixed distribution.
First, the geometric complexity is lower bounded by
\begin{equation}
	\mca{C}(\mds{T})\ge\sqrt{\int_{-1}^1\dd{x}|\ln r^{\mds{T}}(x) - \ln r^{\mds{1}}(x)|^2}\eqqcolon\ell.
\end{equation}
Since $e^{x}\ge e^{-|x|}$ for any real $x$ and $e^{-x}$ is convex, we obtain the lower bound
\begin{align}
	\epsilon(\mds{T})&=\int_{\mfr{u}}\dd{x}e^{\ln[2r^{\mds{T}}(x)]}\notag\\
	&\ge \int_{\mfr{u}}\dd{x}e^{- |\ln[2r^{\mds{T}}(x)]|}\notag\\
	&\ge e^{-\int_{\mfr{u}}\dd{x}|\ln[2r^{\mds{T}}(x)]|},\label{eq:ovd.error.lb1}
\end{align}
where the last inequality follows from Jensen's inequality.
Next, by the Cauchy--Schwarz inequality, $\ell$ is lower bounded as
\begin{equation}\label{eq:ovd.ell.lb1}
	\ell\ge\int_{\mfr{u}}\dd{x}|\ln[2r^{\mds{T}}(x)]|.
\end{equation}
Combining Eqs.~\eqref{eq:ovd.error.lb1} and \eqref{eq:ovd.ell.lb1}, we obtain
\begin{equation}
	\epsilon(\mds{T})\ge e^{-\ell}.
\end{equation}
Finally, since $\ell\le\mca{C}(\mds{T})$, we immediately obtain the desired relation \eqref{eq:geo.comp.3rd.law}:
\begin{equation}
	e^{\mca{C}(\mds{T})}\times\epsilon(\mds{T})\ge 1.
\end{equation}

\begin{proposition}\label{prop:pdf.lb}
Consider a particle confined by a time-independent potential $U(x)$ and governed by overdamped Langevin dynamics.
Let $p_0(x)$ be an initial probability distribution satisfying $c_1\pi(x)\le p_0(x)\le c_2\pi(x)$, where $\pi(x)=e^{-\beta U(x)}/Z$ is the equilibrium distribution and $c_1,c_2$ are positive constants.
Then $c_1\pi(x)\le p_t(x)\le c_2\pi(x)$ for all times.
\end{proposition}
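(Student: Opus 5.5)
The plan is to work with the ratio $h_t(x)\coloneqq p_t(x)/\pi(x)$ and show that it obeys a parabolic equation satisfying a maximum principle. Then the bounds $c_1\le h_0\le c_2$ propagate to all $t\ge 0$. The Fokker--Planck equation is
\begin{equation*}
\partial_t p_t=\partial_x\bigl(U'p_t+D\partial_x p_t\bigr),
\end{equation*}
with the fluctuation--dissipation relation $D\beta=1$ (after rescaling the mobility), so that $\pi\propto e^{-\beta U}$ is stationary. Substituting $p_t=\pi h_t$ and using $D\partial_x\pi=-U'\pi$, the flux becomes $D\pi\,\partial_x h_t$. This gives
\begin{equation*}
\pi\,\partial_t h_t=D\,\partial_x\bigl(\pi\,\partial_x h_t\bigr),
\end{equation*}
which is equivalently the backward equation $\partial_t h_t=D\,\partial_x^2h_t-U'\partial_x h_t$. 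The reflecting boundary conditions at $x=\pm1$ (zero probability flux) translate into Neumann conditions $\partial_x h_t=0$ at the endpoints.

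Second, I will apply the weak maximum principle to $h_t-c_2$ and $c_1-h_t$. The cleanest route is an energy argument that avoids smoothness assumptions at interior extrema. Set $\phi_t\coloneqq(h_t-c_2)_+$ and compute
\begin{equation*}
\frac{d}{dt}\int_{-1}^1 \pi\,\phi_t^2\,dx = 2\int_{-1}^1 \phi_t\,D\,\partial_x(\pi\,\partial_x h_t)\,dx = -2D\int_{-1}^1 \pi\,(\partial_x\phi_t)^2\,dx\le 0 .
\end{equation*}
The integration by parts uses the no-flux boundary, so the boundary term vanishes, and uses $\partial_x\phi_t=\mathbb{1}_{h_t>c_2}\,\partial_x h_t$. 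Since $\phi_0=0$, we get $\phi_t=0$ for all $t$, that is, $p_t\le c_2\pi$. The same argument with $(c_1-h_t)_+$ gives $p_t\ge c_1\pi$. An alternative route is linearity plus positivity preservation of the Fokker--Planck semigroup. The differences $c_2\pi-p_0\ge0$ and $p_0-c_1\pi\ge0$ evolve under the same equation and remain nonnegative, while $\pi$ is stationary. This would be my backup argument.

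The main obstacle is regularity. The paper's potentials are only assumed bounded, $0\le U\le U_m$, so $\pi$ is bounded above and below but possibly nonsmooth, and the PDE must then be understood weakly. I would handle this in one of two ways. The first is to assume $U$ smooth enough for classical solutions, which is implicit in the paper. The second is to work in the weighted space $L^2(\pi^{-1}dx)$, where the generator is self-adjoint and the Dirichlet form is $D\int\pi(\partial_x h)^2$. In that space the positive-part truncation is a normal contraction, so the resulting semigroup is sub-Markovian, and this yields the bounds directly.
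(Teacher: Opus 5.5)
Your argument is correct, but your main route differs from the paper's. The paper's proof is essentially your backup argument. It sets $q_t=p_t-c_1\pi$ and $s_t=c_2\pi-p_t$, notes that $\dot q_t=\mathcal{F}(q_t)$ and $\dot s_t=\mathcal{F}(s_t)$ because $\mathcal{F}(\pi)=0$, and concludes by invoking positivity preservation of the Fokker--Planck generator, which it takes as known.

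Your primary argument instead passes to the ratio $h_t=p_t/\pi$. There the zero-flux equilibrium condition gives the divergence form $\pi\,\partial_t h_t=D\,\partial_x(\pi\,\partial_x h_t)$ with Neumann boundary conditions. You then run a Stampacchia truncation estimate on $(h_t-c_2)_+$ and $(c_1-h_t)_+$ in $L^2(\pi\,dx)$. In effect you prove the positivity/maximum principle that the paper assumes.

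Your version also survives weak regularity. This is relevant because the paper only assumes $0\le U\le U_m$, yet writes $\partial_x U$ in the generator. Your Dirichlet-form formulation gives both the dynamics and the bounds a meaning for merely bounded $U$. You also make explicit the Einstein relation $D\beta=1$, which the paper needs implicitly for $\pi$ to be stationary.

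The paper's route is shorter and general in a different direction. It uses only linearity, stationarity of $\pi$, and positivity preservation, so it applies verbatim to any linear positivity-preserving dynamics with stationary state $\pi$ (e.g.\ Markov jump processes). Your energy identity, as written, leans on the zero-flux relation $U'\pi+D\pi'=0$ and the no-flux boundary. Both are automatic here for a one-dimensional diffusion on $[-1,1]$ with reflecting walls.
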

\begin{proof}
Let $q_t(x)\coloneqq p_t(x)-c_1\pi(x)$ and $s_t(x)\coloneqq c_2\pi(x)-p_t(x)$.
Then $q_0(x)\ge 0$ and $s_0(x)\ge 0$ for any $x\in[-1,1]$.
Let $\mca{F}$ denote the Fokker--Planck generator. Since
\begin{align}
	\dv{t}q_t&=\mca{F}(p_t)=\mca{F}(q_t) - c_1\mca{F}(\pi)=\mca{F}(q_t),\\
	\dv{t}s_t&=-\mca{F}(p_t)=\mca{F}(s_t) - c_2\mca{F}(\pi)=\mca{F}(s_t),
\end{align}
and the Fokker--Planck generator $\mca{F}$ is positivity-preserving, the initial conditions $q_0(x)\ge 0$ and $s_0(x)\ge 0$ imply that $q_t(x)\ge 0$ and $s_t(x)\ge 0$ for any $t$.
Therefore, $c_1\pi(x)\le p_t(x)\le c_2\pi(x)$.
\end{proof}

\begin{proposition}\label{prop:ovd.path.exist}
For any smooth path $\{p_t\}_{0\le t\le 1}$ of strictly positive probability distributions connecting $r^{\mds{1}}$ and $r^{\mds{T}}$, there exists a corresponding continuous path $\{\mds{T}_t\}_{0\le t\le 1}$ of stochastic maps connecting $\mds{1}$ and $\mds{T}$ such that $r^{\mds{T}_t}=p_t$ for all $t\in[0,1]$.
\end{proposition}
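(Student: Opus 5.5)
The plan is to mirror the discrete construction of Theorem~\ref{theo:smap.exist}, replacing matrices by kernels and the Sinkhorn scaling by its continuous analogue. First I build an auxiliary path of kernels that interpolates between the identity kernel $\delta(x-y)$ and the kernel $\mca{K}_{\mds{T}}$ of $\mds{T}$, and is strictly positive in the interior. Concretely,
\begin{equation}
	\mca{A}_t(x,y)=a\sin(\pi t)\,\tfrac{1}{2}+[1-a\sin(\pi t)]\,[(1-t)\delta(x-y)+t\,\mca{K}_{\mds{T}}(x,y)],
\end{equation}
with $0<a<1$. Each $\mca{A}_t$ is a valid stochastic kernel, $\mca{A}_0=\mds{1}$ and $\mca{A}_1=\mds{T}$. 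For $t\in(0,1)$ it dominates the constant $a\sin(\pi t)/2>0$.

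I then seek scaling functions $u_t(x)$ and $v_t(y)$ with $\int_{-1}^1 v_t=0$ and set
\begin{equation}
	\mca{K}_t(x,y)=e^{u_t(x)}\mca{A}_t(x,y)e^{v_t(y)},
\end{equation}
subject to the two marginal conditions $\int\dd{x}\mca{K}_t(x,y)=1$ (stochasticity) and $\tfrac12\int\dd{y}\mca{K}_t(x,y)=p_t(x)$.

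For fixed $t\in(0,1)$, existence and uniqueness of the scaling come from the continuous analogue of Proposition~\ref{prop:scale.sol}. I minimize the convex functional
\begin{equation}
	F(u,v)=\iint e^{u(x)+v(y)}\mca{A}_t(x,y)\,\dd{x}\dd{y}-2\int p_t u-\int v.
\end{equation}
Its Euler--Lagrange equations are exactly the two marginal conditions. Strict convexity modulo the gauge $(u+c,v-c)$ follows from $\iint \mca{K}(\xi(x)+\eta(y))^2\ge0$, with equality only for constant shifts because the kernel is strictly positive. Coercivity uses the lower bound $\mca{A}_t\ge a\sin(\pi t)/2$ together with the strict positivity of $p_t$. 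In practice I would restrict to bounded $u,v$ and use the fact that $p_t$ is bounded above and below on the compact interval, so that the fixed-point (Sinkhorn/Fortet) iteration contracts in the Hilbert projective metric. This yields $u_t,v_t\in L^\infty$. Smooth dependence on $t$ then follows from the implicit function theorem in $L^\infty\times L^\infty_0$, since the linearization is the positive-definite Hessian restricted to the gauge-fixed subspace. Hence $t\mapsto\mca{K}_t$ is continuous on $(0,1)$.

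The main obstacle is the boundary behaviour as $t\to0,1$. There the positivity lower bound degenerates and, at $t=0$, $\mca{A}_t$ contains a delta. I would show that the limits exist and equal $\mds{1}$ and $\mds{T}$ via an analogue of Proposition~\ref{prop:equiv.map}: any limit kernel of the form $e^{u}\mca{A}_{0/1}e^{v}$ with the same two marginals as the target must coincide with it. Near $t=0$, the identity part forces $e^{u(x)+v(x)}\approx1$ on the diagonal and the marginals pin the remaining scaling. I would derive uniform bounds on $\|u_t\|_\infty,\|v_t\|_\infty$ from the max/min propagation argument of Proposition~\ref{prop:equiv.map}. This takes $\hat x=\arg\max(u)$ and $\hat y=\arg\min(v)$, adapted to a connected support; the support is connected since $\mca{A}_t$ contains the diagonal for $t<1$ and the full square for $t\in(0,1)$. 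Those bounds give compactness, so subsequential limits exist, and the uniqueness argument identifies every limit. This completes the proof.
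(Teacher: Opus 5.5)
Your interior step is fine at the paper's level of rigor: it is Proposition~\ref{prop:ovd.scale.sol} applied to your $\mca{A}_t$. The endpoint argument, which you rightly flag as the main obstacle, has a genuine gap.

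The max/min propagation in Proposition~\ref{prop:equiv.map} is a \emph{qualitative uniqueness} argument. It compares two scalings of the \emph{same} kernel with identical marginals and uses only which entries vanish, never their size. So it yields equalities, not estimates, and nothing uniform in $t$. For $t\in(0,1)$ the support is the whole square, but the uniform part that guarantees connectivity has weight $\tfrac12 a\sin(\pi t)$, which vanishes at both ends. In the limit the support graph disintegrates. For $\delta(x-y)$, each row $x$ is linked only to column $x$, so the $t=0$ problem fixes $u+v$ and leaves $u\mapsto u+c$, $v\mapsto v-c$ free for an arbitrary function $c$. The individual sizes of $u_t,v_t$ are therefore decided by an $O(t)$ balance between the vanishing off-diagonal part and the marginal defect $2p_t-1=O(t)$, which your argument never analyzes. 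The same happens near $t=1$ whenever $\mca{K}_{\mds{T}}$ has disconnected support, e.g.\ a deterministic kernel.

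Even granting uniform bounds, a bounded set of $L^\infty$ is only weak-$*$ compact. Weak-$*$ limits do not pass through the nonlinear diagonal factor $e^{u_t(x)+v_t(x)}$, nor through singular parts of $\mca{K}_{\mds{T}}$: for $\mca{K}_{\mds{T}}=\delta(x-f(y))$ you need the limit of $e^{u_t(f(y))+v_t(y)}$. So ``subsequential limits exist and uniqueness identifies them'' does not go through. Separately, the Hilbert-metric contraction you mention for the interior is unavailable, because the $\delta$ component makes the projective diameter of the kernel's image infinite.

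The paper's route differs exactly here. It adds $\kappa\mca{I}$ to the interpolating kernel and shifts the target marginal to $\kappa\bar p+(1-\kappa)p_t$. Then $\mca{A}_{t,\kappa}$ is strictly positive on the \emph{closed} interval, and $\mca{A}_{0,\kappa}$, $\mca{A}_{1,\kappa}$ already have the target marginals. Uniqueness in Proposition~\ref{prop:ovd.scale.sol} therefore forces zero scaling at $t=0,1$, and the endpoints are exact for every $\kappa>0$ with no compactness needed. The difficulty moves into the one-line limit $\kappa\to0$. For $t\in(0,1)$ that limit is just your path, so continuity at $t=0,1$ still has to be argued; the paper's exchange of limits tacitly assumes it.

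The cleanest way to supply it in either route needs no bounds on the potentials. Use the variational characterization: $\pi_t=\tfrac12\mca{K}_t$ minimizes $\mathrm{KL}(\pi\|\tfrac12\mca{A}_t)$ over couplings with marginals $(p_t,\tfrac12)$, since $\mathrm{KL}(\pi\|\tfrac12\mca{A}_t)=\mathrm{KL}(\pi\|\pi_t)+\int u_t\,p_t+\tfrac12\int v_t$ for every such $\pi$. Two explicit competitors have relative entropy tending to zero:
\begin{itemize}
\item as $t\to0$: $\tfrac12\min(2p_t,1)\,\delta(x-y)$ plus the product coupling of the leftover marginals, which has mass $O(t)$;
\item as $t\to1$: $(1-s)\tfrac12\mca{K}_{\mds{T}}+\tfrac12 s\,q(x)$ with $s=O(1-t)$ and $q=p_1+(p_t-p_1)/s\ge0$.
\end{itemize}
Pinsker's inequality, together with $\tfrac12\mca{A}_t\to\tfrac12\delta$ and $\tfrac12\mca{A}_t\to\tfrac12\mca{K}_{\mds{T}}$, then gives $\mca{K}_t\to\delta$ and $\mca{K}_t\to\mca{K}_{\mds{T}}$ in total variation.
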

\begin{proof}
The proof strategy is similar to that in the discrete-state case.
Let $\mca{K}$ and $\delta$ denote the transition kernels associated with the map $\mds{T}$ and the identity map $\mds{1}$, respectively.
We consider a smooth path of stochastic maps $\{\mds{A}_{t,\kappa}\}_{0\le t\le 1}$ with kernels $\{\mca{A}_{t,\kappa}\}_{0\le t\le 1}$ defined by
\begin{equation}
	\mca{A}_{t,\kappa} = [\kappa + a\sin(\pi t)]\mca{I} + [1- \kappa - a\sin(\pi t)][(1-t)\delta + t\mca{K}],
\end{equation}
where $\mca{I}(x,y)=1/2$ for all $x,y\in[-1,1]$, and $0<a<1$ and $0<\kappa<1-a$ are arbitrary constants.
Clearly, each kernel $\mca{A}_{t,\kappa}$ defines a valid stochastic map $\mds{A}_{t,\kappa}$, $\lim_{\kappa\to 0}\mca{A}_{0,\kappa}=\delta$, $\lim_{\kappa\to 0}\mca{A}_{1,\kappa}=\mca{K}$, and all transition probabilities $\mca{A}_{t,\kappa}(x,y)$ are strictly positive for $x,y\in[-1,1]$ and $t\in[0,1]$.
However, this path of stochastic maps does not satisfy the required condition $r^{\mds{A}_{t,\kappa}}=p_t$ for all $t\in[0,1]$.
To construct the desired path, we consider a path of stochastic maps determined by the following double-scaled kernels:
\begin{equation}\label{eq:ovd.scale.map}
	\mca{K}_{t,\kappa}(x,y)=e^{u_{t,\kappa}(y)+v_{t,\kappa}(x)}\mca{A}_{t,\kappa}(x,y),
\end{equation}
where $u_{t,\kappa}(x)$ and $v_{t,\kappa}(x)$ are real-valued functions on $[-1,1]$.
According to Proposition \ref{prop:ovd.scale.sol}, there exists a path $\{(u_{t,\kappa},v_{t,\kappa})\}_{0\le t\le 1}$ such that $\int_{-1}^1\dd{x}\mca{K}_{t,\kappa}(x,y)=1$, $r^{\mds{T}_{t,\kappa}}=\kappa\bar{p} + (1-\kappa) p_t$, and $\int_{-1}^1\dd{x}v_{t,\kappa}(x)=0$, where $\mds{T}_{t,\kappa}$ denotes the stochastic map associated with the kernel $\mca{K}_{t,\kappa}$.
The gauge condition $\int_{-1}^1\dd{x}v_{t,\kappa}(x)=0$ removes the translational invariance of the log-linear scaling; that is, $u_{t,\kappa}\leftarrow u_{t,\kappa}+c$ and $v_{t,\kappa}\leftarrow v_{t,\kappa}-c$ yield the same map $\mds{T}_{t,\kappa}$ for any real $c$.
By the implicit function theorem, the path $\{(u_{t,\kappa},v_{t,\kappa})\}_{0\le t\le 1}$ is smooth. Consequently, the corresponding path of stochastic maps $\{\mds{T}_{t,\kappa}\}_{0\le t\le 1}$ is continuous.

Next, we show that $\mca{K}_{0,\kappa}=\mca{A}_{0,\kappa}$ and $\mca{K}_{1,\kappa}=\mca{A}_{1,\kappa}$.
Since
\begin{align}
	\mca{A}_{0,\kappa}&=e^{0+0}\mca{A}_{0,\kappa},~r^{\mds{A}_{0,\kappa}}=r^{\mds{T}_{0,\kappa}},\\
	\mca{A}_{1,\kappa}&=e^{0+0}\mca{A}_{1,\kappa},~r^{\mds{A}_{1,\kappa}}=r^{\mds{T}_{1,\kappa}},
\end{align}
the uniqueness of $(u,v)$ in Proposition \ref{prop:ovd.scale.sol} implies that $u_{0,\kappa}=u_{1,\kappa}=0$ and $v_{0,\kappa}=v_{1,\kappa}=0$.
Therefore, the path $\{\mds{T}_{t,\kappa}\}_{0\le t\le 1}$ smoothly connects $\mds{A}_{0,\kappa}$ and $\mds{A}_{1,\kappa}$.
We now take the limit $\kappa\to 0$ and define $\mds{T}_t\coloneqq\lim_{\kappa\to 0}\mds{T}_{t,\kappa}$.
The resulting path $\{\mds{T}_t\}_{0\le t\le 1}$ is continuous, connects $\mds{1}$ and $\mds{T}$, and satisfies $r^{\mds{T}_t}=p_t$ for all $t\in[0,1]$.
This completes the proof.
\end{proof}

\begin{proposition}\label{prop:ovd.scale.sol}
Let $\mca{A}$ be a strictly positive transition kernel and let $p$ be a positive probability distribution.
Then there exists a unique pair of real-valued functions $u(x)$ and $v(x)$ such that $\int_{-1}^1\dd{x}v(x)=0$ and $\mca{K}(x,y)=e^{u(y)+v(x)}\mca{A}(x,y)$ is a transition kernel satisfying $r^{\mds{T}}=p$, where $\mds{T}$ denotes the stochastic map associated with the kernel $\mca{K}$.
\end{proposition}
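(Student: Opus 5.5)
The plan mirrors the proof of Proposition~\ref{prop:scale.sol}, but it replaces the finite-dimensional convexity and coercivity argument with a fixed-point argument, since compactness is not automatic for functions on $[-1,1]$. Throughout I assume, as in the application in Proposition~\ref{prop:ovd.path.exist}, that $\mca{A}$ is bounded above and below: $0<a\le\mca{A}(x,y)\le b$. I also assume $p$ is bounded above and below: $0<p_-\le p(x)\le p_+$.

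Writing out the two requirements on $\mca{K}(x,y)=e^{u(y)+v(x)}\mca{A}(x,y)$ gives a Schr\"odinger-type system. Column normalization, $\int_{-1}^1\dd{x}\mca{K}(x,y)=1$, reads
\begin{equation*}
e^{u(y)}=\Big[\int_{-1}^1\dd{x}\,e^{v(x)}\mca{A}(x,y)\Big]^{-1}.
\end{equation*}
The output condition $r^{\mds{T}}=\mds{T}\bar p=p$, with $\bar p=1/2$, reads
\begin{equation*}
e^{v(x)}\int_{-1}^1\dd{y}\,e^{u(y)}\mca{A}(x,y)=2p(x).
\end{equation*}
The first equation determines $u$ from $v$. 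Substituting it into the second yields a single equation $e^{v}=\Psi(e^{v})$ for the positive function $\phi=e^{v}$, where
\begin{equation*}
\Psi(\phi)(x)=2p(x)\Big/\int_{-1}^1\dd{y}\,\mca{A}(x,y)\Big/\!\int_{-1}^1\dd{x'}\,\phi(x')\mca{A}(x',y).
\end{equation*}
The map $\Psi$ is positively homogeneous of degree one, which is exactly the gauge freedom $u\leftarrow u+c$, $v\leftarrow v-c$. It is fixed by the condition $\int_{-1}^1\dd{x}\,v(x)=0$.

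For existence I work on the cone of bounded positive functions with the Hilbert projective metric $d_H(\phi,\psi)=\ln\big(\sup(\phi/\psi)\,\sup(\psi/\phi)\big)$. The map $\Psi$ is a composition of four operations: two positive integral operators with kernel $\mca{A}$, one pointwise inversion, and one multiplication by $2p$. Inversion and multiplication by a positive function are isometries for $d_H$. By Birkhoff's theorem, each integral operator with kernel bounded between $a$ and $b$ is a strict contraction, with coefficient $\tanh(\Delta/4)<1$, where the projective diameter satisfies $\Delta\le 2\ln(b/a)$. Hence $\Psi$ is a strict contraction on projective rays.

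The a priori bounds $\ln(p_-a/b)-C\le v\le \ln(p_+b/a)+C$ show that the normalized iterates stay in a $d_H$-complete set of bounded functions. The Banach fixed-point theorem then produces a ray of solutions, and the gauge condition selects a unique $v$ on it. Continuity of $\mca{A}$ gives measurability and boundedness of the resulting $u$ and $v$, so $\mca{K}$ is a genuine transition kernel with $r^{\mds{T}}=p$.

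Uniqueness follows from the same contraction, since two fixed points would lie on the same ray. As an independent check, one can adapt the argument of Proposition~\ref{prop:equiv.map}. Set $x=\tilde v-v$ and $y=\tilde u-u$, and replace the maximizing and minimizing indices by essential suprema and infima; strict positivity of $\mca{A}$ makes the bipartite ``graph'' connected, which forces $x+y\equiv0$.

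The main obstacle is this existence step. The natural convex functional $F(u,v)=\iint e^{u(y)+v(x)}\mca{A}\,\dd{x}\dd{y}-\int u-\int 2pv$ has the right stationarity conditions, but a minimizer in infinite dimensions is not guaranteed without extra compactness. That is why I would rely on the Hilbert-metric contraction, whose only real input is the two-sided bound on $\mca{A}$, rather than on coercivity.
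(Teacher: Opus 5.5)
Your route differs from the paper's. The paper minimizes the convex functional $F(u,v)=\iint e^{u(y)+v(x)}\mca{A}(x,y)\,\dd{x}\dd{y}-\int u-2\int vp$. Its stationarity conditions are the two normalization conditions, strict convexity on the gauge-fixed subspace follows from $\iint[u'(y)+v'(x)]^2\mca{K}(x,y)\ge 0$, and coercivity is then asserted to get a unique minimizer. You instead eliminate $u$, reduce to the degree-one homogeneous fixed-point problem $e^{v}=\Psi(e^{v})$, and use the Birkhoff--Hopf contraction of Hilbert's projective metric.

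Your objection to the variational route is fair: in infinite dimensions, coercivity plus strict convexity does not give a minimizer without a lower-semicontinuity and compactness argument, which the paper omits. Where your argument applies, it gives existence, uniqueness, and a geometrically convergent Sinkhorn-type iteration. Completeness is not a problem, since $d_H$ on rays is the oscillation of $\ln\phi-\ln\psi$, which is a norm on $L^\infty/\mathbb{R}$. Your lower bound on $p$ is also needed, not merely technical: $v=\ln p+O(1)$, so without $\ln p\in L^1$ the gauge $\int v=0$ cannot be imposed.

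You should add one step. Banach's theorem on rays only gives $\Psi(\phi^*)=\lambda\phi^*$. Integrating $\phi^*(x)\int e^{u(y)}\mca{A}(x,y)\,\dd{y}=2p(x)/\lambda$ over $x$ and using the column normalization gives $2=2/\lambda$, so $\lambda=1$.

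The paper's formal route does have one advantage over yours. It needs only $\mca{K}>0$, i.e., a positive lower bound on the kernel, so singular components of $\mca{A}$ do not affect it.

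That is exactly where your proposal has a genuine gap. Your contraction coefficient depends on the upper bound $\mca{A}\le b$. This is not a hypothesis of the proposition, and, contrary to your remark, it fails in the application. The kernel $\mca{A}_{t,\kappa}$ of Proposition~\ref{prop:ovd.path.exist} contains a positive multiple of $\delta(x-y)$ for every $t<1$. It also contains the term $t\mca{K}$, which need not be bounded either.

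For such a kernel, the image of the positive cone has infinite projective diameter; compare $g\equiv 1$ with a narrow normalized bump. Birkhoff's coefficient is then $\tanh(\infty)=1$, the integral operators are only non-expansive, and the Banach step fails. Your appeal to continuity of $\mca{A}$ is unavailable there as well.

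As written, you have proved the proposition only for kernels with $0<a\le\mca{A}\le b$ (for example, continuous strictly positive ones), not for the kernels the paper actually applies it to. To close the gap you need an existence argument that uses only a positive lower bound on $\mca{A}$. One option is an entropy-minimization argument over couplings with the prescribed marginals, where compactness comes from weak compactness of probability measures on $[-1,1]^2$ rather than from a bound on the kernel. Otherwise you must restrict the statement to bounded kernels and change the path construction so the $\delta$ component never appears.
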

\begin{proof}
Let $\mca{K}(x,y)=e^{u(y)+v(x)}\mca{A}(x,y)$ and consider the following functional:
\begin{align}
	F(u,v)&=\int_{-1}^1\dd{x}\int_{-1}^1\dd{y}e^{u(y)+v(x)}\mca{A}(x,y)\notag\\
	&-\int_{-1}^1\dd{y}u(y)-2\int_{-1}^1\dd{x}v(x)p(x).
\end{align}
It is straightforward to verify that $F(u,v)$ is a convex functional.
Since
\begin{align}
	\frac{\delta F}{\delta u(y)}&=\int_{-1}^1\dd{x}\mca{K}(x,y)-1,\\
	\frac{\delta F}{\delta v(x)}&=\int_{-1}^1\dd{y}\mca{K}(x,y)-2p(x),
\end{align}
the global minimum of $F(u,v)$ is achieved only when ${\delta F}/{\delta u(y)}=0$ and ${\delta F}/{\delta v(x)}=0$, which are exactly the required conditions $\int_{-1}^1\dd{x}\mca{K}(x,y)=1$ and $r^{\mds{T}}=p$.
Let $f(t)=F(u+tu',v+tv')$, where $u'$ and $v'$ are arbitrary real-valued functions.
The second derivative of $f(t)$ with respect to $t$ is
\begin{align}
	\eval{\dv[2]{f(t)}{t}}_{t=0}&=\int_{-1}^1\dd{x}\int_{-1}^1\dd{y}[u'(y)+v'(x)]^2\mca{K}(x,y)\ge 0.
\end{align}
The equality holds only when $u'(y)+v'(x)=0$ for all $x,y\in[-1,1]$.
If we restrict to the subspace where $\int_{-1}^1\dd{x}v(x)=0$, then $u'(y)=v'(x)=0$.
Therefore, $F$ is strictly convex on the restricted subspace.
Furthermore, since $F$ is continuous and coercive, it admits a unique global minimum, which completes the proof.
\end{proof}

\section{Detailed derivations for the quantum case}\label{app:qua.details}

For convenience, we introduce the following notation.
Let $\mca{S}_d$ denote the set of density operators on the $d$-dimensional complex Hilbert space $\mca{H}_d$, and let $\Upsilon_d$ denote the space of quantum channels $\Lambda:\mca{S}_d\mapsto\mca{S}_d$.
Specifically,
\begin{align}
	\mca{S}_d&=\{\rho\in\mca{H}_d\,|\,\rho\succeq 0~\text{and}~\tr\rho=1\},\\
	\Upsilon_d&=\{\Lambda\,|\,\Lambda(\rho)\succeq 0~\text{and}~\tr[\Lambda(\rho)]=1~\forall\rho\in\mca{S}_d\}.
\end{align}
In addition, let $\mca{S}_d^+=\{\rho\in\mca{S}_d\,|\,\rho\succ 0\}$ denote the subset of positive-definite density operators.
The identity quantum channel is denoted by $\Id$ [i.e., $\Id(\circ)=\circ$].

We use the Choi-matrix representation of quantum channels.
For each quantum channel $\Lambda$, the corresponding Choi matrix is defined by
\begin{equation}
	\mds{M}^\Lambda \coloneqq (\mds{1}\otimes\Lambda)(\dyad{\Omega}),
\end{equation}
where $\ket{\Omega}=\sum_{n=1}^d\ket{n}\otimes\ket{n}$ and $\{\ket{n}\}$ is an orthonormal basis of $\mca{H}_d$.
The Choi matrix can be written as $\mds{M}^\Lambda=\sum_{m,n}\dyad{m}{n}\otimes\Lambda(\dyad{m}{n})$, and the channel $\Lambda$ is related to it via
\begin{equation}
	\Lambda(\rho)=\tr_1[(\rho^\top\otimes\mds{1})\mds{M}^\Lambda],
\end{equation}
where $\rho^\top=\sum_{m,n}\mel{n}{\rho}{m}\dyad{m}{n}$.
Since $\Lambda$ is CPTP, the Choi matrix is positive semidefinite and satisfies $\tr_2\mds{M}^\Lambda=\mds{1}$.
By the Choi--Jamiolkowski isomorphism, there is a one-to-one correspondence between Choi matrices and quantum channels (CPTP maps) \cite{Choi.1975.LAA,Jamiokowski.1972.RMP}.
It also follows that $\phi_{\mds{M}^\Lambda}=d^{-1}\tr_1\mds{M}^\Lambda=\Lambda(\mds{1}/d)$ is a density operator.

\subsection{Geodesic distance of the defined Riemannian metric in the unconstrained case}\label{app:geo.dis.pos.Herm}

For each Choi matrix $\mds{M}$ representing a quantum channel, the corresponding reduced operator $\phi_{\mds{M}}=d^{-1}\tr_1\mds{M}$ is a density operator.
Therefore, the metric $g_{\mds{M}}(\mds{X},\mds{Y})$ on the space of Choi matrices is equivalent to the metric $g_{\phi_{\mds{M}}}(\phi_{\mds{X}},\phi_{\mds{Y}})=\tr(\phi_{\mds{M}}^{-1}\phi_{\mds{X}}\phi_{\mds{M}}^{-1}\phi_{\mds{Y}})$ on the space of reduced density operators.
In what follows, we compute the geodesic distance induced by the metric $g_{\phi_{\mds{M}}}$ on the space of positive Hermitian operators.
We show that in the \emph{absence} of the trace constraint (i.e., $\tr\phi_{\mds{M}}=1$), this geodesic distance can be obtained analytically.

Let $\mca{D}(\mds{A},\mds{B})$ denote the geodesic distance between two positive Hermitian operators $\mds{A}$ and $\mds{B}$ induced by the above metric.
Note that, in general, neither $\tr\mds{A}$ nor $\tr\mds{B}$ is required to equal $1$.
We prove that $\mca{D}(\mds{A},\mds{B})=\|\ln(\mds{A}^{-1/2}\mds{B}\mds{A}^{-1/2})\|_F$ \cite{Bhatia.2006.LAA} by establishing both $\mca{D}(\mds{A},\mds{B})\ge \|\ln(\mds{A}^{-1/2}\mds{B}\mds{A}^{-1/2})\|_F$ and $\mca{D}(\mds{A},\mds{B})\le \|\ln(\mds{A}^{-1/2}\mds{B}\mds{A}^{-1/2})\|_F$.

We first prove the lower bound.
Let $\{\mds{O}_t\}_{0\le t\le 1}$ be any smooth path of positive Hermitian operators connecting $\mds{A}$ and $\mds{B}$.
Defining $\mds{J}_t\coloneqq\mds{O}_0^{-1/2}\mds{O}_t\mds{O}_0^{-1/2}=e^{\mds{H}_t}$, the metric can be written as
\begin{align}
	g_{\mds{O}_t}(\dot{\mds{O}}_t,\dot{\mds{O}}_t)&=\tr(\mds{O}_t^{-1}\dot{\mds{O}}_t\mds{O}_t^{-1}\dot{\mds{O}}_t)\notag\\
	&=\tr(\mds{J}_t^{-1}\dot{\mds{J}}_t\mds{J}_t^{-1}\dot{\mds{J}}_t)\notag\\
	&=\|e^{-\mds{H}_t/2}\dot{\mds{J}}_te^{-\mds{H}_t/2}\|_F^2.
\end{align}
Using the spectral decomposition $\mds{H}_t=\sum_n h_n\dyad{n}$, where $\{h_n\}$ are real eigenvalues, we obtain
\begin{widetext}
\begin{align}
	\|e^{-\mds{H}_t/2}\dot{\mds{J}}_te^{-\mds{H}_t/2}\|_F^2&=\left\|\int_0^1\dd{s}e^{-\mds{H}_t/2}e^{s\mds{H}_t}\dot{\mds{H}}_te^{(1-s)\mds{H}_t}e^{-\mds{H}_t/2}\right\|_F^2\notag\\
	&=\tr\qty[ \int_0^1\dd{s}\int_0^1\dd{s'} e^{(1/2-s')\mds{H}_t}\dot{\mds{H}}_te^{(s'-1/2)\mds{H}_t} e^{(s-1/2)\mds{H}_t}\dot{\mds{H}}_te^{(1/2-s)\mds{H}_t}]\notag\\
	&=\sum_{m,n}|\mel{m}{\dot{\mds{H}}_t}{n}|^2\int_0^1\dd{s}\int_0^1\dd{s'}e^{(s+s'-1)(h_n-h_m)}\notag\\
	&\ge\sum_{m,n}|\mel{m}{\dot{\mds{H}}_t}{n}|^2\notag\\
	&=\|\dot{\mds{H}}_t\|_F^2.
\end{align}
\end{widetext}
Here, we use Proposition \ref{prop:int.ine} to derive the fourth line.
Therefore, the geodesic distance is lower bounded as
\begin{align}
	\mca{D}(\mds{A},\mds{B})&=\min_{\{\mds{O}_t\}}\int_0^1\dd{t}\sqrt{g_{\mds{O}_t}(\dot{\mds{O}}_t,\dot{\mds{O}}_t)}\notag\\
	&\ge \min_{\{\mds{H}_t\}}\int_0^1\dd{t}\|\dot{\mds{H}}_t\|_F\notag\\
	&= \|\ln\mds{1}-\ln(\mds{A}^{-1/2}\mds{B}\mds{A}^{-1/2})\|_F\notag\\
	&=\|\ln(\mds{A}^{-1/2}\mds{B}\mds{A}^{-1/2})\|_F.\label{eq:D.lb}
\end{align}
Next, we prove the upper bound.
We consider the smooth path $\{\mds{O}_t=\mds{A}^{1/2}(\mds{A}^{-1/2}\mds{B}\mds{A}^{-1/2})^t\mds{A}^{1/2}\}_{0\le t\le 1}$, which connects $\mds{A}$ and $\mds{B}$.
The time derivative of $\mds{O}_t$ is
\begin{align}
	\dot{\mds{O}}_t&=\mds{A}^{1/2}(\mds{A}^{-1/2}\mds{B}\mds{A}^{-1/2})^t\ln(\mds{A}^{-1/2}\mds{B}\mds{A}^{-1/2})\mds{A}^{1/2}\notag\\
	&=\mds{O}_t\mds{A}^{-1/2}\ln(\mds{A}^{-1/2}\mds{B}\mds{A}^{-1/2})\mds{A}^{1/2}.
\end{align}
Therefore, the metric can be written as
\begin{align}
	g_{\mds{O}_t}(\dot{\mds{O}}_t,\dot{\mds{O}}_t)&=\tr(\mds{O}_t^{-1}\dot{\mds{O}}_t\mds{O}_t^{-1}\dot{\mds{O}}_t)\notag\\
	&=\tr[(\ln(\mds{A}^{-1/2}\mds{B}\mds{A}^{-1/2}))^2]\notag\\
	&=\|\ln(\mds{A}^{-1/2}\mds{B}\mds{A}^{-1/2})\|_F^2,
\end{align}
which is constant in time.
Consequently, the geodesic distance is upper bounded as
\begin{align}
	\mca{D}(\mds{A},\mds{B})&\le\int_0^1\dd{t}\sqrt{g_{\mds{O}_t}(\dot{\mds{O}}_t,\dot{\mds{O}}_t)}\notag\\
	&=\|\ln(\mds{A}^{-1/2}\mds{B}\mds{A}^{-1/2})\|_F.\label{eq:D.ub}
\end{align}
Combining Eqs.~\eqref{eq:D.lb} and \eqref{eq:D.ub}, we obtain $\mca{D}(\mds{A},\mds{B})=\|\ln(\mds{A}^{-1/2}\mds{B}\mds{A}^{-1/2})\|_F$.

\begin{proposition}\label{prop:int.ine}
The following inequality holds for any real numbers $x$ and $y$:
\begin{equation}
	I\coloneqq\int_0^1\dd{s}\int_0^1\dd{s'}e^{(s+s'-1)(x-y)}\ge 1.
\end{equation}
\end{proposition}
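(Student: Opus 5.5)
The plan is to compute the double integral in closed form and then invoke the elementary inequality $\sinh u\ge u$. Set $a\coloneqq x-y\in\mathbb{R}$. If $a=0$, the integrand is identically $1$, so $I=1$ and there is nothing to prove. For $a\neq 0$, the integrand factorizes as $e^{-a}e^{as}e^{as'}$, and the double integral becomes a product of two identical one-dimensional integrals:
\begin{equation*}
	I=e^{-a}\qty(\int_0^1\dd{s}e^{as})^2=e^{-a}\qty(\frac{e^{a}-1}{a})^2=\qty(\frac{e^{a/2}-e^{-a/2}}{a})^2=\qty(\frac{\sinh(a/2)}{a/2})^2 .
\end{equation*}

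It therefore suffices to show that $|\sinh u|\ge |u|$ for every real $u$, applied with $u=a/2$. Because $\sinh$ is odd, it is enough to treat $u\ge 0$. Let $h(u)=\sinh u-u$. Then $h(0)=0$ and $h'(u)=\cosh u-1\ge 0$, so $h$ is nondecreasing and $h(u)\ge 0$ on $[0,\infty)$. Equivalently, one can compare the Taylor series $\sinh u/u=\sum_{k\ge 0}u^{2k}/(2k+1)!\ge 1$, whose terms are all nonnegative.

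Squaring gives $I\ge 1$, with equality exactly when $x=y$.

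An alternative route avoids the explicit computation. Substitute $\sigma=s+s'-1$. The integrand $e^{\sigma a}$ is convex in $\sigma$, and under the uniform measure on $[0,1]^2$ the variable $\sigma$ has mean $0$. Jensen's inequality then gives $I\ge e^{0}=1$ directly. This is arguably the cleanest argument, and I would present it as the main proof, keeping the closed form as a check.

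None of these steps is a genuine obstacle. The only points that need care are the removable singularity at $a=0$ in the closed form, which the Jensen argument avoids entirely, and keeping the sign conventions straight in the factorization.
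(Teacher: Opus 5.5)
Your proposal is correct, and your closed-form computation is precisely the paper's proof. The paper sets $z=x-y$, disposes of $z=0$ separately, factorizes to get $I=e^{-z}\bigl((e^{z}-1)/z\bigr)^{2}=\bigl[\sinh(|z|/2)/(|z|/2)\bigr]^{2}$, and finishes with $\sinh|z|\ge|z|$.

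The Jensen argument you would lead with is a genuinely different route. Treat $\sigma=s+s'-1$ as a random variable under the uniform measure on $[0,1]^2$. It has mean zero, and convexity of $\sigma\mapsto e^{\sigma(x-y)}$ gives $I\ge 1$ in one line.

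This route buys brevity and robustness. There is no case split at $x=y$ and no explicit integration. The argument uses only that $\sigma$ has mean zero, so it would apply verbatim to any other weighting of $(s,s')$ with that property. Strict convexity, together with nondegeneracy of $\sigma$, also gives $I>1$ for $x\neq y$, so the equality case comes for free.

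What the paper's computation buys in return is the exact value of $I$. This is the precise weight $\bigl[\sinh((h_n-h_m)/2)/((h_n-h_m)/2)\bigr]^{2}$ attached to each matrix element $|\langle m|\dot{H}_t|n\rangle|^{2}$ in the metric lower bound where the proposition is used. It quantifies how much slack is discarded there, whereas Jensen only certifies that each weight is at least one.

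Presenting Jensen as the main proof, with the closed form as a check, is perfectly sound.
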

\begin{proof}
If $x=y$, then $I=1$, so the inequality holds trivially.
Otherwise, letting $z=x-y$, we evaluate the integral as follows:
\begin{align}
	I&=e^{-z}\qty(\int_0^1\dd{s}e^{sz})^2=e^{-z}\qty(\frac{e^{z}-1}{z})^2=\qty[\frac{\sinh(|z|/2)}{|z|/2}]^2\ge 1.
\end{align}
Here, we use $\sinh|z|\ge |z|$ for all $z$ to derive the final inequality.
\end{proof}

\subsection{Error bound for arbitrary initial states}
We show that it suffices to examine the maximally mixed state in order to bound the error of the quantum channel for arbitrary initial states.
Indeed, let $\rho$ be an arbitrary initial state.
Using the linearity of quantum channels, we obtain
\begin{align}
	\epsilon(\Lambda)-\tr\qty[\Pi\Lambda(\rho)]&=\tr\qty[\Pi\Lambda(\mds{1})] - \tr\qty[\Pi\Lambda(\rho)]\notag\\
	&=\tr\qty[\Pi\Lambda(\mds{1}-\rho)].
\end{align}
Since $\Lambda$ is a CPTP map and $\mds{1}-\rho\succeq 0$, we have $\tr\qty[\Pi\Lambda(\mds{1}-\rho)]\ge 0$.
Therefore,
\begin{equation}
	\tr\qty[\Pi\Lambda(\rho)]\le \epsilon(\Lambda).
\end{equation}
This means that the error for any initial state $\rho$ is always upper bounded by $\epsilon(\Lambda)$, which is $d$ times the error for the maximally mixed state.
Hence, if the maximally mixed state is reset with sufficiently small error, other states can also be reset reliably.

\subsection{Proof of the existence of a path of quantum channels}

We prove the following theorem.
\begin{theorem}\label{theo:qmap.exist}
Let $\Lambda$ be a quantum channel such that $\Lambda(\mds{1}/d)$ is full-rank, and let $\{\phi_t\}_{0\le t\le 1}$ be a smooth path in $\mca{S}_d^+$ connecting $\mds{1}/d$ and $\Lambda(\mds{1}/d)$.
Then there exists a continuous path of quantum channels $\{\Lambda_t\}_{0\le t\le 1}$ in $\Upsilon_d$ that connects $\Id$ and $\Lambda$, and satisfies $\Lambda_t(\mds{1}/d)=\phi_t$ for all $t\in[0,1]$.
\end{theorem}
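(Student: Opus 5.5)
We mirror the proof of Theorem~\ref{theo:smap.exist}, working at the level of Choi matrices. The classical Sinkhorn scaling $\diag(e^{\vb{u}})\mds{A}\diag(e^{\vb{v}})$ is replaced by its operator analogue
\begin{equation*}
	\mds{M}\mapsto (\mds{Y}^\top\otimes\mds{X})\,\mds{N}\,(\mds{Y}^\top\otimes\mds{X})^\dagger,
\end{equation*}
with $\mds{X},\mds{Y}$ invertible. Such a congruence preserves positivity of a Choi matrix. The two marginal conditions we must impose are trace preservation, $\tr_2\mds{M}_t=\mds{1}$, and the prescribed output, $d^{-1}\tr_1\mds{M}_t=\phi_t$.

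\textbf{Step 1 (interior path).} Define the reference path
\begin{equation*}
	\mds{N}_t=a\sin(\pi t)\,\frac{\mds{1}\otimes\mds{1}}{d}+[1-a\sin(\pi t)]\,[(1-t)\mds{M}^{\Id}+t\mds{M}^\Lambda],\qquad 0<a<1.
\end{equation*}
This is the Choi matrix of the convex combination of the completely depolarizing channel with $(1-t)\Id+t\Lambda$. Each $\mds{N}_t$ is a valid Choi matrix, and it is strictly positive definite for $t\in(0,1)$. The endpoints are $\mds{N}_0=\mds{M}^{\Id}$ and $\mds{N}_1=\mds{M}^\Lambda$.

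\textbf{Step 2 (operator scaling).} We prove a quantum analogue of Proposition~\ref{prop:scale.sol}. For strictly positive $\mds{N}\succ0$ and $\phi\succ0$, we claim there exist positive definite $\mds{X},\mds{Y}$, unique up to the gauge $\mds{X}\to c\mds{X}$, $\mds{Y}\to c^{-1}\mds{Y}$, such that $\mds{M}=(\mds{Y}\otimes\mds{X})\mds{N}(\mds{Y}\otimes\mds{X})$ satisfies $\tr_2\mds{M}=\mds{1}$ and $\tr_1\mds{M}=d\phi$.

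We get this from Gurvits' operator scaling theory. Consider the completely positive map $\Phi_{\mds{N}}$ whose Choi matrix is $\mds{N}$. Since $\mds{N}\succ0$, the map $\Phi_{\mds{N}}$ is strictly positive, so scalability with arbitrary positive-definite marginals holds. Variationally, we minimize the convex functional (after substituting $\mds{X}=e^{\mds{H}/2}$, $\mds{Y}=e^{\mds{K}/2}$ and restricting to suitable geodesically convex coordinates)
\begin{equation*}
	F(\mds{X},\mds{Y})=\tr[(\mds{Y}^2\otimes\mds{X}^2)\mds{N}]-\tr(d\phi\ln\mds{X}^2)-\tr\ln\mds{Y}^2 .
\end{equation*}
Its stationarity conditions are exactly the two marginal constraints. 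Coercivity should follow from $\mds{N}\succ0$.

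Strict geodesic convexity, needed for uniqueness, is the delicate point, since $\mds{X}$ and $\mds{Y}$ need not commute with $\mds{N}$. If a direct convexity argument fails, we fall back on Gurvits' capacity and uniqueness results for positive-definite marginals. With uniqueness in hand, the implicit function theorem gives continuity of $(\mds{X}_t,\mds{Y}_t)$ on $(0,1)$, hence continuity of the scaled Choi path $\mds{M}_t$ and of the corresponding channels $\Lambda_t$.

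\textbf{Step 3 (endpoints), the main obstacle.} We must show that $\lim_{t\to0}\mds{M}_t=\mds{M}^{\Id}$ and $\lim_{t\to1}\mds{M}_t=\mds{M}^\Lambda$. Classically this was handled by Proposition~\ref{prop:equiv.map}, a uniqueness statement for scalings of a possibly singular matrix via bipartite connectivity. The quantum endpoints $\mds{M}^{\Id}=\dyad{\Omega}$ are rank-deficient, so we need an operator version of that proposition. Concretely: if two scalings of the same $\mds{N}\succeq0$ have identical marginals, they coincide.

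\begin{itemize}
	\item For $t\to0$, any limit point of $\mds{M}_t$ has the form $(\mds{Y}_0\otimes\mds{X}_0)\dyad{\Omega}(\mds{Y}_0\otimes\mds{X}_0)$, which is rank one, equal to $\dyad{\Omega'}$ with $\ket{\Omega'}=(\mds{1}\otimes\mds{X}_0\mds{Y}_0^\top)\ket{\Omega}$. Its marginals $\mds{1}$ and $\mds{1}$ force $\mds{X}_0\mds{Y}_0^\top$ to be unitary. This fixes the limit only up to a unitary channel, and positivity of $\mds{X}_0,\mds{Y}_0$ would then need to force that unitary to equal the identity.
	\item For $t\to1$, we apply the same reasoning on the support of $\mds{M}^\Lambda$.
\end{itemize}

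We must also control the case where the scalings $(\mds{X}_t,\mds{Y}_t)$ diverge as $t\to0,1$. Compactness of the set of Choi matrices gives limit points, and marginal matching identifies them.

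If this operator uniqueness argument proves intractable, we use a fallback: first use the classical Theorem~\ref{theo:smap.exist} in the eigenbasis of $\phi_t$, then conjugate by a continuous path of unitaries that diagonalizes $\phi_t$, and finally append a unitary segment that returns to $\Lambda$ while keeping the output $\Lambda(\mds{1}/d)$ fixed. A unitary segment of this kind alters the output unless it commutes with $\Lambda(\mds{1}/d)$, so it would need care.
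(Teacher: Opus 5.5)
\textbf{There is a genuine gap: Step 3 (the endpoints) does not go through as sketched, and Step 2 has a concrete defect.} Your overall architecture matches the paper's: a strictly positive interior reference path, a two-sided Sinkhorn-type scaling that fixes both marginals, and uniqueness plus the implicit function theorem for continuity.

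\textbf{Step 3.} Your $t\to0$ argument presupposes that $(\mds{X}_t,\mds{Y}_t)$ converge to invertible limits. Granted that, the conclusion is immediate: a unitary $\mds{X}_0\mds{Y}_0^\top$ with both factors positive definite must equal $\mds{1}$, by uniqueness of the polar decomposition. The real problem is that nothing bounds the scalings. At $t=0$, every pair $(\mds{X},(\mds{X}^\top)^{-1})$ with $\mds{X}\succ0$ maps $\dyad{\Omega}$ to itself. So the solution set there is a noncompact $d^2$-dimensional family, not just a gauge orbit. For $d=2$, the pair $\mds{X}=\diag(\lambda,\lambda^{-1})$, $\mds{Y}=\diag(\lambda^{-1},\lambda)$ lies in this family. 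It multiplies the $\ket{2}\otimes\ket{1}$ diagonal entry of $\mds{N}_t$, which is $\Theta(t)$, by $\lambda^4$. So a priori an $O(1)$ remnant can survive as $t\to0$.

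Compactness does not rescue this. A limit point with both marginals equal to $\mds{1}$ is merely the Choi matrix of some unital channel, for instance $\mds{1}\otimes\mds{1}/d$. Marginal matching therefore cannot single out $\dyad{\Omega}$. At $t\to1$ you would also need an operator analogue of Proposition~\ref{prop:equiv.map} for singular $\mds{M}^\Lambda$, which you do not supply. The eigenbasis fallback breaks $\Lambda_t(\mds{1}/d)=\phi_t$, as you note yourself.

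\textbf{Step 2.} Write $P=\mds{X}^2$. The gradient of $\tr(d\phi\ln P)$ is $\int_0^\infty(P+s)^{-1}d\phi(P+s)^{-1}\dd{s}$. The constraint $\tr_1\mds{M}=d\phi$, however, requires $\tr_1[(\mds{Y}^2\otimes\mds{1})\mds{N}]=P^{-1/2}d\phi P^{-1/2}$. These agree only when $[P,\phi]=0$, and changing coordinates does not move critical points. Citing operator-scaling results for positivity-improving maps could replace this functional. Even then, the implicit function theorem needs a nondegenerate Hessian modulo the gauge, not just uniqueness.

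\textbf{What the paper does instead.} The paper removes the singular endpoints rather than analyzing them. It uses
\[
\mds{A}_{t,\kappa}=[\kappa+a\sin(\pi t)]\,\mds{1}\otimes\mds{1}/d+[1-\kappa-a\sin(\pi t)]\,[(1-t)\mds{M}^{\Id}+t\mds{M}^\Lambda],
\]
which is strictly positive definite on all of $[0,1]$. It simultaneously deforms the target marginal to $\kappa\mds{1}+(1-\kappa)d\phi_t$. Then $\mds{A}_{0,\kappa}$ and $\mds{A}_{1,\kappa}$ already satisfy the constraints, so uniqueness forces the trivial scaling at $t=0,1$. The scaled path therefore passes exactly through $\mds{A}_{0,\kappa}$ and $\mds{A}_{1,\kappa}$, with no uniqueness statement for singular Choi matrices needed, and $\kappa\to0$ is taken at the end.

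The paper also scales log-linearly, $\mds{M}=e^{\ln\mds{A}+\mds{U}\otimes\mds{1}+\mds{1}\otimes\mds{V}}$, rather than by congruence. The potential $\tr e^{\ln\mds{A}+\mds{U}\otimes\mds{1}+\mds{1}\otimes\mds{V}}-\tr\mds{U}-d\tr(\mds{V}\phi)$ is jointly convex in the Hermitian pair $(\mds{U},\mds{V})$, strictly so on $\tr\mds{V}=0$. By cyclicity, $\frac{\mathrm{d}}{\mathrm{d}t}\tr e^{\mds{H}+t\mds{W}}=\tr(\mds{W}e^{\mds{H}+t\mds{W}})$, so its gradient is exactly $(\tr_2\mds{M}-\mds{1},\,\tr_1\mds{M}-d\phi)$. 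The noncommutativity problem of your Step 2 never arises.

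In fairness, the paper only asserts that the $\kappa\to0$ limit path is continuous at $t=0,1$, which tacitly requires uniform control of exactly the kind you were worried about. Your diagnosis of where the difficulty lies is sound, but this regularization is the device your proposal is missing.
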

\begin{proof}
For convenience, we use the Choi-matrix representation of quantum channels.
It suffices to prove the existence of a continuous path $\{\mds{M}_t\}_{0\le t\le 1}$ in the space of Choi matrices that connects $\mds{M}^{\Id}$ and $\mds{M}^\Lambda$, and satisfies $\tr_1\mds{M}_t=d\phi_t$ and $\tr_2\mds{M}_t=\mds{1}$ for all $t\in[0,1]$.
Note that $\mds{M}^{\Id}=\dyad{\Omega}$.

To this end, we first consider a smooth path $\{\mds{A}_{t,\kappa}\}_{0\le t\le 1}$, defined by
\begin{align}
	\mds{A}_{t,\kappa}&=[\kappa+a\sin(\pi t)]\frac{\mds{1}\otimes\mds{1}}{d} \notag\\
	&+ [1-\kappa - a\sin(\pi t)][(1-t)\mds{M}^{\Id} + t\mds{M}^\Lambda],
\end{align}
where $0<a<1$ and $0<\kappa<1-a$ are arbitrary constants.
Evidently, $\lim_{\kappa\to 0}\mds{A}_{0,\kappa}=\mds{M}^\Id$ and $\lim_{\kappa\to 0}\mds{A}_{1,\kappa}=\mds{M}^\Lambda$, and each $\mds{A}_{t,\kappa}$ is a Choi matrix corresponding to a quantum channel.
Furthermore, $\mds{A}_{t,\kappa}$ is strictly positive definite for all $t\in[0,1]$.
However, the condition $\tr_1\mds{A}_{t,\kappa}=d\phi_t$ may not hold.
To construct the desired smooth path, we consider the following scaled operators:
\begin{equation}
	\mds{M}_{t,\kappa}=e^{\ln\mds{A}_{t,\kappa} + \mds{U}_{t,\kappa}\otimes\mds{1} + \mds{1}\otimes\mds{V}_{t,\kappa}},
\end{equation}
where $\mds{U}_{t,\kappa}$ and $\mds{V}_{t,\kappa}$ are Hermitian operators.
According to Proposition \ref{prop:scale.sol.qu}, there exists a path $\{(\mds{U}_{t,\kappa},\mds{V}_{t,\kappa})\}_{0\le t\le 1}$ such that $\tr_1\mds{M}_{t,\kappa}=\kappa\mds{1} + (1-\kappa)d\phi_t$, $\tr_2\mds{M}_{t,\kappa}=\mds{1}$, and $\tr\mds{V}_{t,\kappa}=0$.
The gauge condition $\tr\mds{V}_{t,\kappa}=0$ resolves the translational invariance of the log-linear scaling (i.e., $\mds{U}_{t,\kappa}\leftarrow\mds{U}_{t,\kappa}+c\mds{1}$ and $\mds{V}_{t,\kappa}\leftarrow\mds{V}_{t,\kappa}-c\mds{1}$ yield the same operator $\mds{M}_{t,\kappa}$ for any real $c$).
By the implicit function theorem, the path $\{(\mds{U}_{t,\kappa},\mds{V}_{t,\kappa})\}_{0\le t\le 1}$ is smooth; therefore, the corresponding path $\{\mds{M}_{t,\kappa}\}_{0\le t\le 1}$ is also smooth.

Next, we prove that $\mds{M}_{0,\kappa}=\mds{A}_{0,\kappa}$ and $\mds{M}_{1,\kappa}=\mds{A}_{1,\kappa}$.
Since
\begin{widetext}
\begin{align}
	\mds{A}_{0,\kappa}&=e^{\ln\mds{A}_{0,\kappa} + 0\otimes\mds{1} + \mds{1}\otimes 0},~\tr_1\mds{M}_{0,\kappa}=\tr_1\mds{A}_{0,\kappa}=\kappa\mds{1}+(1-\kappa)d\phi_0,~\tr_2\mds{M}_{0,\kappa}=\tr_2\mds{A}_{0,\kappa}=\mds{1},\\
	\mds{A}_{1,\kappa}&=e^{\ln\mds{A}_{1,\kappa} + 0\otimes\mds{1} + \mds{1}\otimes 0},~\tr_1\mds{M}_{1,\kappa}=\tr_1\mds{A}_{1,\kappa}=\kappa\mds{1}+(1-\kappa)d\phi_1,~\tr_2\mds{M}_{1,\kappa}=\tr_2\mds{A}_{1,\kappa}=\mds{1},
\end{align}
\end{widetext}
the uniqueness of $(\mds{U},\mds{V})$ in Proposition \ref{prop:scale.sol.qu} implies that $\mds{U}_{0,\kappa}=\mds{U}_{1,\kappa}=0$ and $\mds{V}_{0,\kappa}=\mds{V}_{1,\kappa}=0$.
Therefore, the path $\{\mds{M}_{t,\kappa}\}_{0\le t\le 1}$ smoothly connects $\mds{A}_{0,\kappa}$ and $\mds{A}_{1,\kappa}$.
Now, we take the limit $\kappa\to 0$ and define $\mds{M}_t\coloneqq\lim_{\kappa\to 0}\mds{M}_{t,\kappa}$.
The path $\{\mds{M}_t\}_{0\le t\le 1}$ is then continuous, connects $\mds{M}^{\Id}$ and $\mds{M}^\Lambda$, and satisfies $\tr_1\mds{M}_t=d\phi_t$ and $\tr_2\mds{M}_t=\mds{1}$ for all $t\in[0,1]$.
By converting each Choi matrix $\mds{M}_t$ into the corresponding quantum channel $\Lambda_t$, we obtain the desired path of quantum channels and complete the proof.
\end{proof}

\begin{proposition}\label{prop:scale.sol.qu}
Let $\mds{A}$ be a strictly positive Hermitian operator and let $\phi$ be a positive density operator.
Then there exists a unique pair of Hermitian operators, $\mds{U}$ and $\mds{V}$, such that $\tr\mds{V}=0$ and the operator $\mds{M}=e^{\ln\mds{A} + \mds{U}\otimes\mds{1} + \mds{1}\otimes\mds{V}}$ satisfies $\tr_1\mds{M}=d\phi$ and $\tr_2\mds{M}=\mds{1}$.
\end{proposition}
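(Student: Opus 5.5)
We follow the classical argument of Proposition~\ref{prop:scale.sol} and realize the solution as the unique minimizer of a strictly convex, coercive functional; the matrix exponential replaces the entrywise scaling. For Hermitian $\mds{U},\mds{V}$ on $\mca{H}_d$ define
\begin{equation*}
	F(\mds{U},\mds{V})\coloneqq\tr e^{\ln\mds{A}+\mds{U}\otimes\mds{1}+\mds{1}\otimes\mds{V}}-\tr(\mds{U})-d\,\tr(\phi\mds{V}).
\end{equation*}
Write $\mds{K}\coloneqq\ln\mds{A}+\mds{U}\otimes\mds{1}+\mds{1}\otimes\mds{V}$ and $\mds{M}=e^{\mds{K}}$. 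The Fr\'echet derivative of $\tr e^{\mds{K}}$ in the direction $\mds{X}$ is $\tr(e^{\mds{K}}\mds{X})$, by the cyclicity of the trace in the Duhamel formula. Hence the directional derivatives are
\begin{equation*}
	\tr[(\tr_2\mds{M}-\mds{1})\mds{U}'],\qquad \tr[(\tr_1\mds{M}-d\phi)\mds{V}'].
\end{equation*}
The stationarity conditions are therefore exactly $\tr_2\mds{M}=\mds{1}$ and $\tr_1\mds{M}=d\phi$.

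The linear terms are compatible with the gauge symmetry $(\mds{U},\mds{V})\mapsto(\mds{U}+c\mds{1},\mds{V}-c\mds{1})$. Under this shift $\mds{K}$ is unchanged, and the linear part changes by $-cd+cd\tr\phi=0$. So $F$ descends to the slice $\tr\mds{V}=0$, and it suffices to find a unique critical point of $F$ on this slice.

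\textbf{Convexity.} The map $\mds{K}\mapsto\tr e^{\mds{K}}$ is convex on Hermitian matrices, and its second derivative in the direction $\mds{X}$ is
\begin{equation*}
	\int_0^1\dd{s}\,\tr\bigl(e^{s\mds{K}}\mds{X}e^{(1-s)\mds{K}}\mds{X}\bigr)=\sum_{i,j}|X_{ij}|^2\,\frac{e^{k_i}-e^{k_j}}{k_i-k_j}\ \ge\ 0,
\end{equation*}
computed in the eigenbasis of $\mds{K}$. Every weight is strictly positive, so the second derivative vanishes only if $\mds{X}=0$. Here $\mds{K}$ is affine in $(\mds{U},\mds{V})$ with increment $\mds{X}=\mds{U}'\otimes\mds{1}+\mds{1}\otimes\mds{V}'$. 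Now $\mds{U}'\otimes\mds{1}+\mds{1}\otimes\mds{V}'=0$ forces $\mds{U}'=c\mds{1}$ and $\mds{V}'=-c\mds{1}$, which can be seen by taking partial traces. Once $\tr\mds{V}'=0$ is imposed, this gives $c=0$. Therefore $F$ is strictly convex on the slice, and there is at most one critical point there. This gives uniqueness.

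\textbf{Existence via coercivity (main obstacle).} We must show $F\to\infty$ as $\|(\mds{U},\mds{V})\|\to\infty$ on the slice. We use the Golden--Thompson-type lower bound $\tr e^{\mds{K}}\ge e^{\lambda_{\max}(\mds{K})}$, together with $\lambda_{\max}(\mds{K})\ge\lambda_{\min}(\ln\mds{A})+\lambda_{\max}(\mds{U}\otimes\mds{1}+\mds{1}\otimes\mds{V})$. The second term equals $\lambda_{\max}(\mds{U})+\lambda_{\max}(\mds{V})$. For the linear part we have the bound
\begin{equation*}
	\tr\mds{U}+d\tr(\phi\mds{V})\le d\,\lambda_{\max}(\mds{U})+d\,\lambda_{\max}(\mds{V}).
\end{equation*}
Setting $s\coloneqq\lambda_{\max}(\mds{U})+\lambda_{\max}(\mds{V})$, this yields $F\ge C e^{s}-ds$, which grows to infinity only when $s\to+\infty$. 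That alone is insufficient, so a sharper argument is needed in the regime where $s$ stays bounded while the matrices blow up.

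For that regime we use the strict positivity of $\phi$: $d\tr(\phi\mds{V})\le d\lambda_{\max}(\mds{V})-d\,\phi_{\min}\,(\lambda_{\max}(\mds{V})-\lambda_{\min}(\mds{V}))$. Similarly, $\tr\mds{U}\le d\lambda_{\max}(\mds{U})-(\lambda_{\max}(\mds{U})-\lambda_{\min}(\mds{U}))$. These bounds show that $F\ge Ce^{s}-ds+\delta\cdot(\text{spreads of }\mds{U},\mds{V})$ with $\delta>0$. On the slice $\tr\mds{V}=0$, any divergence of $\|(\mds{U},\mds{V})\|$ with $s$ bounded forces one of the spreads to diverge, because $\tr\mds{V}=0$ ties the level of $\mds{V}$ to its spread, and bounded $s$ then ties the level of $\mds{U}$. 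Hence $F$ is coercive on the slice.

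Since $F$ is continuous, strictly convex and coercive on the slice, it attains a unique minimum. That minimum is the desired pair $(\mds{U},\mds{V})$, and by the stationarity computation it satisfies $\tr_1\mds{M}=d\phi$ and $\tr_2\mds{M}=\mds{1}$.
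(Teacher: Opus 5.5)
Your proposal is correct and follows the paper's proof essentially line by line: the same functional $F$, the same Duhamel computation showing that stationarity is equivalent to $\tr_2\mds{M}=\mds{1}$ and $\tr_1\mds{M}=d\phi$, and the same strict convexity on the gauge slice $\tr\mds{V}=0$. The one addition is your spectral bound $F\ge Ce^{s}-ds+\delta\cdot(\text{spreads})$, which actually proves the coercivity the paper only asserts; note, however, that your remark that $Ce^{s}-ds$ diverges only as $s\to+\infty$ is wrong, since it also diverges as $s\to-\infty$ because of the $-ds$ term, and this is exactly what covers the $s\to-\infty$ regime your case split (bounded $s$ versus $s\to+\infty$) otherwise leaves unaddressed.
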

\begin{proof}
We consider the following function:
\begin{equation}
	F(\mds{U},\mds{V})=\tr(e^{\ln\mds{A} + \mds{U}\otimes\mds{1} + \mds{1}\otimes\mds{V}}) - \tr(\mds{U}) - d\tr(\mds{V}\phi).
\end{equation}
We show that $F(\mds{U},\mds{V})$ is a convex function.
To this end, consider $f(t)=F(\mds{U}+t\mds{U}',\mds{V}+t\mds{V}')$ as a function of $t$, where $\mds{U}'$ and $\mds{V}'$ are arbitrary Hermitian operators.
It suffices to prove that $\dv*[2]{f(0)}{t}\ge 0$ for any Hermitian operators $\mds{U}$, $\mds{U}'$, $\mds{V}$, and $\mds{V}'$.
Using Duhamel's formula
\begin{equation}
	\dv{t}e^{\mds{X}_t}=\int_0^1\dd{s}e^{s\mds{X}_t}\dot{\mds{X}}_te^{(1-s)\mds{X}_t},
\end{equation}
we obtain
\begin{widetext}
\begin{align}
	\dv{f(t)}{t}&=\dv{t}\qty[\tr(e^{\ln\mds{A} + \mds{W} + t\mds{W}'}) - \tr(\mds{U}+t\mds{U}') - d\tr((\mds{V}+t\mds{V}')\phi)]\notag\\
	&=\int_0^1\dd{s}\tr[e^{s(\ln\mds{A}+\mds{W}+t\mds{W}')}\mds{W}'e^{(1-s)(\ln\mds{A}+\mds{W}+t\mds{W}')}] -\tr(\mds{U}') - d\tr(\mds{V}'\phi)\notag\\
	&=\tr(\mds{W}'e^{\ln\mds{A}+\mds{W}+t\mds{W}'}) -\tr(\mds{U}') - d\tr(\mds{V}'\phi),
\end{align}
\end{widetext}
where $\mds{W}=\mds{U}\otimes\mds{1}+\mds{1}\otimes\mds{V}$ and $\mds{W}'=\mds{U}'\otimes\mds{1}+\mds{1}\otimes\mds{V}'$.
A global minimum must satisfy $\dv*{f(0)}{t}=0$ for any $\mds{U}'$ and $\mds{V}'$.
Since
\begin{align}
	\eval{\dv{f(t)}{t}}_{t=0}&=\tr[\mds{U}'(\tr_2(e^{\ln\mds{A}+\mds{W}}) - \mds{1})] \notag\\
	&+ \tr[\mds{V}'(\tr_1(e^{\ln\mds{A}+\mds{W}}) - d\phi)],
\end{align}
$\dv*{f(0)}{t}=0$ is achieved only when $\tr_1e^{\ln\mds{A}+\mds{W}}=d\phi$ and $\tr_2e^{\ln\mds{A}+\mds{W}}=\mds{1}$, which are exactly the required conditions $\tr_1\mds{M}=d\phi$ and $\tr_2\mds{M}=\mds{1}$.
Similarly, the second derivative of $f(t)$ with respect to $t$ is
\begin{align}
	\eval{\dv[2]{f(t)}{t}}_{t=0}&=\int_0^1\dd{s}\tr[\mds{W}'e^{s(\ln\mds{A}+\mds{W})}\mds{W}'e^{(1-s)(\ln\mds{A}+\mds{W})}].
\end{align}
Since $\ln\mds{A}+\mds{W}$ is a Hermitian operator, it can be decomposed as $\ln\mds{A}+\mds{W}=\sum_\imath a_{\imath}\dyad{\imath}$, where $\{a_\imath\}_\imath$ are real eigenvalues.
Substituting this into the above expression, we obtain
\begin{align}
	\eval{\dv[2]{f(t)}{t}}_{t=0}&=\sum_{\imath,\jmath}\int_0^1\dd{s}|\mel{\imath}{\mds{W}'}{\jmath}|^2e^{sa_\jmath + (1-s)a_\imath}.
\end{align}
Since
\begin{equation}
	\int_0^1\dd{s}e^{sa_\jmath + (1-s)a_\imath}=\begin{cases}
		e^{a_\imath} & \text{if}~a_\imath=a_\jmath\\
		\frac{e^{a_\imath}-e^{a_\jmath}}{a_\imath-a_\jmath} & \text{if}~a_\imath\neq a_\jmath
	\end{cases},
\end{equation}
which is always positive, $\dv*[2]{f(0)}{t}\ge 0$.
The equality can be achieved only when $\mds{W}'=0$, or $\mds{U}'=c\mds{1}=-\mds{V}'$.
If we restrict to the subspace where $\tr\mds{V}=0$, then $c=0$, and $\mds{U}'=\mds{V}'=0$.
Therefore, $F$ is strictly convex on the restricted subspace.
Furthermore, since $F$ is continuous and coercive, it follows that there is a unique global minimum, which completes the proof.
\end{proof}

\subsection{Lower and upper bounds on the geometric complexity}
We prove the following lower and upper bounds:
\begin{equation}
	\ell\le\mca{C}(\Lambda)\le(\sqrt{d}+1)\ell,
\end{equation}
where $\ell=\|\ln d + \ln\Lambda(\mds{1}/d)\|_F$.

\subsubsection{Lower bound}
The lower bound follows directly from the properties of the defined Riemannian metric.
Let $\{\mds{M}_t\}_{0 \le t \le 1}$ be a smooth path of Choi matrices connecting $\mds{M}^{\Id}$ and $\mds{M}^\Lambda$.
Then $\phi_{\mds{M}_0}=\mds{1}/d$ and $\phi_{\mds{M}_1}=\Lambda(\mds{1}/d)$.
Using the result in Sec.~\ref{app:geo.dis.pos.Herm} that, on the unrestricted space of positive operators, the geodesic distance induced by $g_{\phi_{\mds{M}}}(\phi_{\mds{X}},\phi_{\mds{Y}})=\tr(\phi_{\mds{M}}^{-1}\phi_{\mds{X}}\phi_{\mds{M}}^{-1}\phi_{\mds{Y}})$ is $\mca{D}(\mds{A},\mds{B})=\|\ln(\mds{A}^{-1/2}\mds{B}\mds{A}^{-1/2})\|_F$, we obtain the following lower bound on the geometric complexity:
\begin{align}
	\mca{C}(\Lambda)&=\min_{\{\Lambda_t\}}\int_0^1\dd{t}\sqrt{g_{\mds{M}_t}(\dot{\mds{M}}_t,\dot{\mds{M}}_t)}\notag\\
	&=\min_{\{\phi_{\mds{M}_t}\}}\int_0^1\dd{t}\sqrt{g_{\phi_{\mds{M}_t}}(\dot{\phi}_{\mds{M}_t},\dot{\phi}_{\mds{M}_t})}\notag\\
	&\ge \mca{D}(\phi_{\mds{M}_0},\phi_{\mds{M}_1})\notag\\
	&=\|\ln(d) + \ln\Lambda(\mds{1}/d)\|_F\notag\\
	&=\ell.
\end{align}

\subsubsection{Upper bound}
We consider a specific path $\{\phi_t\}_{0\le t\le 1}$, defined by
\begin{align}
	\phi_t &= \frac{\phi_0^{1/2}(\phi_0^{-1/2}\phi_1\phi_0^{-1/2})^t\phi_0^{1/2}}{\tr[\phi_0^{1/2}(\phi_0^{-1/2}\phi_1\phi_0^{-1/2})^t\phi_0^{1/2}]},\notag\\
	\phi_0&=\mds{1}/d,~\text{and}~\phi_1=\Lambda(\mds{1}/d).
\end{align}
Evidently, $\phi_t$ is always a density matrix, and the path $\{\phi_t\}_{0\le t\le 1}$ smoothly connects the initial and final states $\phi_0$ and $\phi_1$.
For notational convenience, we define $\widetilde{\phi}_t\coloneqq\phi_0^{1/2}(\phi_0^{-1/2}\phi_1\phi_0^{-1/2})^t\phi_0^{1/2}$, $Z_t\coloneqq \tr \widetilde{\phi}_t$, and $A\coloneqq \phi_0^{-1/2}\phi_1\phi_0^{-1/2}$.

First, we evaluate the Riemannian metric $g_{\phi_t}(\dot\phi_t,\dot\phi_t)$.
To this end, we evaluate the following quantities:
\begin{align}
	\dot{\widetilde{\phi}}_t&=\phi_0^{1/2}A^t\ln A\phi_0^{1/2}=\widetilde{\phi}_tB,\\
	\dot Z_t&=\tr(\phi_0^{1/2}A^t\ln A\phi_0^{1/2})=\tr(\widetilde{\phi}_tB),
\end{align}
where $B\coloneqq\phi_0^{-1/2}\ln A\phi_0^{1/2}$.
Using these expressions, the time derivative of $\phi_t=\widetilde{\phi}_t/Z_t$ can be calculated as
\begin{align}
	\dot\phi_t&=\frac{\dot{\widetilde{\phi}}_t}{Z_t}-\phi_t\frac{\dot Z_t}{Z_t}\notag\\
	&=\frac{\widetilde{\phi}_t}{Z_t}B-\phi_t\tr(\frac{\widetilde{\phi}_t}{Z_t}B)\notag\\
	&=\phi_tB-\phi_t\tr(\phi_tB).
\end{align}
It follows that $\phi_t^{-1}\dot\phi_t=B-\tr(\phi_tB)$.
Consequently, the metric $g_{\phi_t}(\dot\phi_t,\dot\phi_t)$ can be written as
\begin{align}
	g_{\phi_t}(\dot\phi_t,\dot\phi_t)&=\tr{[B-\tr(\phi_tB)]^2}\notag\\
	&=\tr(B^2)-2\tr(B)\tr(\phi_tB)+d\tr(\phi_tB)^2.
\end{align}
Next, by applying the inequalities $|\tr(X^\dagger Y)|^2\le\tr(X^\dagger X)\tr(Y^\dagger Y)$ and $\tr(\phi_t^2)\le 1$, we have
\begin{align}
	g_{\phi_t}(\dot\phi_t,\dot\phi_t)&=\tr(B^2)-2\tr(B)\tr(\phi_tB)+d\tr(\phi_tB)^2\notag\\
	&\le (\sqrt{d}+1)^2\tr(B^2).
\end{align}
Since $\phi_0=\mds{1}/d$, $B=\ln A=\ln(d\phi_1)$. 
Using these facts, the geometric complexity is upper bounded as follows:
\begin{align}
	\mca{C}(\Lambda)&=\min_{\{\Lambda_t\}}\int_0^1\dd{t}\sqrt{g_{\mds{M}_t}(\dot{\mds{M}}_t,\dot{\mds{M}}_t)}\notag\\
	&=\min_{\{\phi_{\mds{M}_t}\}}\int_0^1\dd{t}\sqrt{g_{\phi_{\mds{M}_t}}(\dot{\phi}_{\mds{M}_t},\dot{\phi}_{\mds{M}_t})}\notag\\
	&\le \int_0^1\dd{t}\sqrt{g_{\phi_t}(\dot\phi_t,\dot\phi_t)}\notag\\
	&\le (\sqrt{d}+1)\sqrt{\tr{[\ln(d\phi_1)]^2}}\notag\\
	&=(\sqrt{d}+1)\|\ln d + \ln\phi_1\|_F\notag\\
	&=(\sqrt{d}+1)\ell.
\end{align}
Here, the minimum in the second line is taken over all continuous paths in the space of density operators connecting $\phi_0$ and $\phi_1$, due to the existence of a path of quantum channels proved in Theorem \ref{theo:qmap.exist}.

\subsection{Protocol-scaling relations for quantum maps}
\subsubsection{Lindbladian maps}
Let $\Lambda(\circ)=e^{\mca{L}_N}\dots e^{\mca{L}_1}(\circ)$ be a quantum channel realized by sequentially applying $N$ protocols, where each protocol is generated by Lindblad dynamics of the form:
\begin{equation}
	\mca{L}_k(\circ)\coloneqq -i[\mds{H}_k,\circ] + \sum_{c}\qty[\mds{L}_{k,c}\circ \mds{L}_{k,c}^\dagger - \frac{1}{2}\{\mds{L}_{k,c}^\dagger \mds{L}_{k,c},\circ\}].
\end{equation}
For simplicity, we assume that each protocol is applied over a unit time interval.
As a physically reasonable constraint on the coupling strength between the system and the environment, we impose that $\|\sum_c\mds{L}_{k,c}^\dagger \mds{L}_{k,c}\|_\infty\le\gamma$ for any $k$.
Let $\phi_t$ be the density operator at time $t$ obtained by applying these protocols to the maximally mixed state, i.e., $\phi_t=\mca{T}e^{\int_0^t\dd{s}\mca{L}_t}(\mds{1}/d)$, where $\mca{L}_t$ is one of the superoperators $\{\mca{L}_1,\dots,\mca{L}_N\}$ applied at time $t$. Let $\phi_t=\sum_{n}p_n(t)\dyad{n_t}$ be the spectral decomposition of $\phi_t$. Then the time evolution of $p_n(t)$ is governed by
\begin{equation}
	\dot p_n(t)=\sum_{c,m}[w_{nm}^{k,c}(t)p_m(t) - w_{mn}^{k,c}(t)p_n(t)],
\end{equation}
where $w_{mn}^{k,c}(t)\coloneqq|\mel{m_t}{\mds{L}_{k,c}}{n_t}|^2$, and the index $k$ depends on time $t$.
Applying H{\"o}lder's inequality, $|\tr(\mds{A}^\dagger \mds{B})|\le\|\mds{A}\|_1\|\mds{B}\|_\infty$, we obtain
\begin{align}
	-\frac{\dot p_n(t)}{p_n(t)}&=\sum_{c,m}w_{mn}^{k,c}(t) - \frac{1}{p_n(t)}\sum_{c,m}w_{nm}^{k,c}(t)p_m(t)\notag\\
	&\le \sum_{c,m}w_{mn}^{k,c}(t)\notag\\
	&=\mel{n_t}{\sum_c\mds{L}_{k,c}^\dagger \mds{L}_{k,c}}{n_t}\notag\\
	&\le \|\sum_{c}\mds{L}_{k,c}^\dagger \mds{L}_{k,c}\|_\infty\notag\\
	&\le\gamma.
\end{align}
Integrating this inequality over time, we obtain
\begin{equation}
	\ln(1/d) - \ln p_n(N) \le \int_0^N\dd{t}\gamma = N\gamma.
\end{equation}
On the other hand, since $p_n(N)\in[0,1]$, we have
\begin{equation}
	\ln p_n(N) - \ln(1/d) \le \ln d.
\end{equation}
Therefore,
\begin{equation}
	|\ln p_n(N) - \ln(1/d)| \le \max(N\gamma,\ln d) \le N\ln(de^\gamma).
\end{equation}
Since $\phi_N=\Lambda(\mds{1}/d)$, the Frobenius norm is upper bounded as
\begin{align}
	\|\ln d + \ln\Lambda(\mds{1}/d)\|_F&=\sqrt{\tr[(\ln d + \ln\phi_N)^2]}\notag\\
	&=\sqrt{\sum_n[\ln d + \ln p_n(N)]^2}.
\end{align}
Consequently,
\begin{equation}
	\ell=\|\ln d + \ln\Lambda(\mds{1}/d)\|_F\le N\sqrt{d}\ln(de^\gamma).
\end{equation}
Combining this inequality with the fact $\mca{C}(\Lambda)\le (\sqrt{d}+1)\ell$ yields the following lower bound on the number of protocols:
\begin{equation}
	N\ge \frac{\mca{C}(\Lambda)}{(d+\sqrt{d})\ln(de^\gamma)}.
\end{equation}

\subsubsection{General unitary dynamics}

We consider the case $\Lambda=\Lambda_N\dots\Lambda_1$, where each map $\Lambda_k$ is implemented through general unitary dynamics of the system and an environment $E_k$.
That is, $\Lambda_k[\circ]=\tr_{E_k}[\mds{U}_k(\circ\otimes\pi_k)\mds{U}_k^\dagger]$, where $\pi_k=e^{-\beta \mds{H}_k}/Z$ is the thermal Gibbs state of the environment $E_k$, and $\mds{U}_k$ is the unitary operator acting on the joint system composed of the system and the environment.
Under physical constraints, it is natural to assume the following limit on the control bandwidth: $\beta\Delta\mds{H}_k\le\gamma$.
Here, $\Delta\mds{H}\coloneqq\max_n E_n-\min_n E_n$, where $\{E_n\}_n$ are the eigenvalues of the Hermitian operator $\mds{H}$.

Defining $\phi_0\coloneqq\mds{1}/d$ and $\phi_k\coloneqq\Lambda_k\dots\Lambda_1(\phi_0)$, we obtain
\begin{align}
	\mca{C}(\Lambda)
	&\le(\sqrt{d}+1)\mca{D}(\phi_0,\phi_N)\notag\\
	&=(\sqrt{d}+1)\|\ln(d\phi_N)\|_F\notag\\
	&=(\sqrt{d}+1)\sqrt{\sum_n(\ln[d\lambda_n(\phi_N)])^2},
\end{align}
where $0\le\lambda_1(\phi)\le\dots\le\lambda_d(\phi)\le 1$ are the eigenvalues of the density operator $\phi$ in increasing order.
Since $|\ln[d\lambda_n(\phi)]|\le\max\!\qty(|\ln[d\lambda_1(\phi)]|,\ln d)$, we readily obtain
\begin{align}
	\mca{C}(\Lambda)
	&\le \sqrt{d}(\sqrt{d}+1)\max\!\qty(|\ln[d\lambda_1(\phi_N)]|,\ln d)\notag\\
	&=(d+\sqrt{d})\max\!\qty(\ln\qty[\frac{1}{d\lambda_1(\phi_N)}],\ln d),
\end{align}
where we used the fact that $d\lambda_1(\phi)\le 1$.
Next, we lower-bound the eigenvalue $\lambda_1(\phi_N)$.
From the relation $\phi_k=\Lambda_k(\phi_{k-1})=\tr_{E_k}[\mds{U}_k(\phi_{k-1}\otimes\pi_k)\mds{U}_k^\dagger]$, we have
\begin{align}
	\lambda_1(\phi_k)\ge \lambda_1(\phi_{k-1})\lambda_1(\pi_k).
\end{align}
Therefore, $\lambda_1(\phi_N)\ge\lambda_1(\phi_0)\prod_{k=1}^N\lambda_1(\pi_k)$.
Since $\lambda_1(\pi_k)\ge e^{-\beta\Delta\mds{H}_k}/d\ge e^{-\gamma}/d$, we obtain the following lower bound on $\lambda_1(\phi_N)$:
\begin{align}
	\lambda_1(\phi_N)\ge \frac{1}{d}\qty(\frac{e^{-\gamma}}{d})^N=\frac{1}{d}\qty(\frac{1}{de^\gamma})^N,
\end{align}
which immediately yields
\begin{align}
	\ln\qty[\frac{1}{d\lambda_1(\phi_N)}]\le N\ln(de^{\gamma}).
\end{align}
Note that $N\ln(de^{\gamma})\ge \ln d$.
Consequently, the geometric complexity is upper-bounded as
\begin{align}
	\mca{C}(\Lambda)
	&\le N(d+\sqrt{d})\ln(de^{\gamma}).
\end{align}
This yields the following lower bound on the number of protocols in terms of geometric complexity:
\begin{equation}
	N\ge\frac{\mca{C}(\Lambda)}{(d+\sqrt{d})\ln(de^\gamma)}.
\end{equation}

\subsection{Proof of the entropic bound on the geometric complexity}

We prove the following lower bound on the geometric complexity:
\begin{equation}
	\mca{C}(\Lambda)\ge\ln S(\mds{1}/d) - \ln S(\Lambda[\mds{1}/d]).
\end{equation}
Here, $S(\rho)=-\tr(\rho\ln\rho)$ is the von Neumann entropy of the quantum state $\rho$.
Let $\{\Lambda_t\}_{0\le t\le 1}$ be the geodesic path, and define $\phi_t\coloneqq\phi_{\mds{M}_t}=\Lambda_t(\mds{1}/d)$, where $\mds{M}_t$ denotes the Choi matrix of the quantum channel $\Lambda_t$.
We first show that
\begin{equation}\label{eq:meb.tmp1}
	g_{\phi_t}(\dot\phi_t,\dot\phi_t)\ge \qty|\dv{t}\ln S(\phi_t)|^2.
\end{equation}
Using the spectral decomposition $\phi_t=\sum_n p_n(t)\dyad{n_t}$, the right-hand side is upper bounded as
\begin{align}
	\qty|\dv{t}\ln S(\phi_t)|^2&=\qty|\frac{\tr(\dot\phi_t\ln\phi_t)}{\tr(\phi_t\ln\phi_t)} |^2\notag\\
	&=\qty|\frac{\sum_n\mel{n_t}{\dot\phi_t}{n_t}\ln p_n(t)}{\sum_np_n(t)\ln p_n(t)}|^2\notag\\
	&\le\frac{[\sum_n\mel{n_t}{\dot\phi_t}{n_t}\ln p_n(t)]^2}{\sum_n[p_n(t)\ln p_n(t)]^2}\notag\\
	&\le\sum_n\qty[\frac{\mel{n_t}{\dot\phi_t}{n_t}\ln p_n(t)}{p_n(t)\ln p_n(t)}]^2\notag\\
	&=\sum_n\qty[\frac{\mel{n_t}{\dot\phi_t}{n_t}}{p_n(t)}]^2.\label{eq:meb.tmp2}
\end{align}
Here, we use $(\sum_n p_n\ln p_n)^2\ge\sum_n(p_n\ln p_n)^2$ for any distribution $\{p_n\}$ to obtain the third line, and $\sum_n a_n^2/b_n\ge(\sum_n a_n)^2/\sum_n b_n$ for nonnegative $\{b_n\}$ and real $\{a_n\}$ to obtain the fourth line.
On the other hand, the metric can be lower bounded as
\begin{align}
	g_{\phi_t}(\dot\phi_t,\dot\phi_t)&=\tr(\phi_t^{-1}\dot\phi_t\phi_t^{-1}\dot\phi_t)\notag\\
	&=\sum_{m,n}\frac{|\mel{m_t}{\dot\phi_t}{n_t}|^2}{p_m(t)p_n(t)}\notag\\
	&\ge\sum_n\qty[\frac{\mel{n_t}{\dot\phi_t}{n_t}}{p_n(t)}]^2.\label{eq:meb.tmp3}
\end{align}
Combining Eqs.~\eqref{eq:meb.tmp2} and \eqref{eq:meb.tmp3}, we immediately obtain Eq.~\eqref{eq:meb.tmp1}.
Consequently, the geometric complexity is bounded as follows:
\begin{align}
	\mca{C}(\Lambda)&=\int_0^1\dd{t}\sqrt{g_{\phi_t}(\dot{\phi}_t,\dot{\phi}_t)}\notag\\
	&\ge \int_0^1\dd{t}\qty|\dv{t}\ln S(\phi_t)|\notag\\
	&\ge \qty|\int_0^1\dd{t}\dv{t}\ln S(\phi_t)|\notag\\
	&=|\ln S(\phi_0)-\ln S(\phi_1) |\notag\\
	&=\ln S(\mds{1}/d) - \ln S(\Lambda[\mds{1}/d]).
\end{align}

\subsection{Proof of the trade-off relation \eqref{eq:qua.comp.3rd.law}}
Let $\phi=\Lambda(\mds{1}/d)$, and let $0\le\lambda_1\le\dots\le\lambda_d\le 1$ be its eigenvalues.
Since $\sum_n\lambda_n=1$, we have $d\lambda_1\le 1$.
Using this fact and the inequality $\sqrt{\sum_n x_n^2}\ge\max_n|x_n|$, $\ell$ can be lower bounded as
\begin{align}
	\ell&=\|\ln(d\phi)\|_F=\sqrt{\sum_{n=1}^d\ln(d\lambda_n)^2}\notag\\
	&\ge |\ln(d\lambda_1)|=-\ln(d\lambda_1).
\end{align}
Therefore, $e^{\mca{C}(\Lambda)}\ge e^{\ell}\ge 1/(d\lambda_1)$.
On the other hand,
\begin{align}
	\epsilon(\Lambda)=d\tr(\Pi\phi)\ge d\lambda_1.
\end{align}
Consequently, multiplying these two inequalities yields the desired trade-off relation:
\begin{equation}
	e^{\mca{C}(\Lambda)}\times \epsilon(\Lambda)\ge 1.
\end{equation}

\subsection{Calculation of the geometric complexity for a quantum reset channel}
We derive an explicit expression for the geometric complexity of the following quantum channel:
\begin{equation}
	\Lambda(\circ)=\tr_E[\mds{U}(\circ\otimes\rho_E)\mds{U}^\dagger],
\end{equation}
where the unitary operator $\mds{U}$ swaps the states of the system and the environment, and $\rho_E=\kappa\dyad{1}+(1-\kappa)\dyad{0}$ is a quantum state close to the ground state.
In particular,
\begin{equation}\label{eq:swap.uni}
	\mds{U}\qty(\mds{1}/2\otimes\rho_E)\mds{U}^\dagger=\rho_E\otimes\mds{1}/2.
\end{equation}
Thus, for arbitrary initial states, the quantum channel $\Lambda$ reliably resets the system close to the ground state, with an error bounded from above by $2\kappa$ for the projector $\Pi=\dyad{1}$.

Given the spectral decomposition $\rho_E=\sum_n p_n\dyad{n}$, Eq.~\eqref{eq:swap.uni} is equivalent to
\begin{equation}
	\sum_{m,n}p_n\mds{U}(\dyad{m}\otimes\dyad{n})\mds{U}^\dagger=\sum_{m,n}p_n\dyad{n}\otimes\dyad{m}.
\end{equation}
Hence, the unitary operator is given by $\mds{U}=\sum_{n,m}\dyad{m,n}{n,m}$, where we use $\ket{m,n}=\ket{m}\otimes\ket{n}$ for notational convenience.
The quantum channel $\Lambda$ can then be written as
\begin{align}
	\Lambda(\circ)&=\sum_{m,n}p_m\dyad{m}{n}\circ\dyad{n}{m}\notag\\
	&=\sum_{m,n}\mds{K}_{mn}\circ\mds{K}_{mn}^\dagger,
\end{align}
where $\mds{K}_{mn}=\sqrt{p_m}\dyad{m}{n}$.
Since $\Lambda(\mds{1}/2)=\rho_E$, the geometric complexity of this quantum channel reads
\begin{equation}
	\mca{C}(\Lambda)=\min_{\{\phi_t\}}\int_0^1\dd{t}\sqrt{g_{\phi_t}(\dot\phi_t,\dot\phi_t)},
\end{equation}
where the minimum is taken over all smooth paths of quantum states connecting $\mds{1}/2$ and $\rho_E$.

Using the Bloch representation, each quantum state $\phi_t$ can be written as
\begin{equation}
	\phi_t=\frac{1}{2}\qty(\mds{1}+\vb{a}_t\cdot\mbm{\sigma}),
\end{equation}
where $\mbm{\sigma}=(\sigma_x,\sigma_y,\sigma_z)$ are the Pauli matrices, $\vb{a}_t$ is the Bloch vector satisfying $\|\vb{a}_t\|_2\le 1$, and $\|\cdot\|_2$ denotes the Euclidean norm.
The inverse matrix $\phi_t^{-1}$ can also be expressed in terms of $\vb{a}_t$ and $\mbm{\sigma}$ as
\begin{equation}
	\phi_t^{-1}=\frac{2}{1-\|\vb{a}_t\|_2^2}(\mds{1}-\vb{a}_t\cdot\mbm{\sigma}).
\end{equation}
Using $\dot\phi_t=\dot{\vb{a}}_t\cdot\mbm{\sigma}/2$ and $(\vb{a}\cdot\mbm{\sigma})(\vb{b}\cdot\mbm{\sigma})=(\vb{a}\cdot\vb{b})\mds{1}+i(\vb{a}\times\vb{b})\cdot\mbm{\sigma}$, we obtain
\begin{align}
	\phi_t^{-1}\dot{\phi}_t&=\frac{1}{1-\|\vb{a}_t\|_2^2}(\mds{1}-\vb{a}_t\cdot\mbm{\sigma})(\dot{\vb{a}}_t\cdot\mbm{\sigma})\notag\\
	&=\frac{1}{1-\|\vb{a}_t\|_2^2}\qty[(\dot{\vb{a}}_t\cdot\mbm{\sigma}) - (\vb{a}_t\cdot\dot{\vb{a}}_t)\mds{1} - i(\vb{a}_t\times\dot{\vb{a}}_t)\cdot\mbm{\sigma}]\notag\\
	&=\frac{1}{1-\|\vb{a}_t\|_2^2}\qty[ - (\vb{a}_t\cdot\dot{\vb{a}}_t)\mds{1} + (\dot{\vb{a}}_t - i(\vb{a}_t\times\dot{\vb{a}}_t))\cdot\mbm{\sigma} ].
\end{align}
Let $\alpha=- (\vb{a}_t\cdot\dot{\vb{a}}_t)$ and $\vb{b}_t=\dot{\vb{a}}_t - i(\vb{a}_t\times\dot{\vb{a}}_t)$. Then, the metric is calculated as
\begin{align}
	g_{\phi_t}(\dot\phi_t,\dot\phi_t)&=\frac{1}{(1-\|\vb{a}_t\|_2^2)^2}\tr[(\alpha\mds{1}+\vb{b}_t\cdot\mbm{\sigma})^2]\notag\\
	&=\frac{1}{(1-\|\vb{a}_t\|_2^2)^2}\tr[\alpha^2\mds{1} + 2\alpha(\vb{b}_t\cdot\mbm{\sigma}) +  (\vb{b}_t\cdot\mbm{\sigma})^2]\notag\\
	&=\frac{1}{(1-\|\vb{a}_t\|_2^2)^2}\tr[\alpha^2\mds{1} + 2\alpha(\vb{b}_t\cdot\mbm{\sigma}) +  \|\vb{b}_t\|_2^2\mds{1}]\notag\\
	&=\frac{2}{(1-\|\vb{a}_t\|_2^2)^2}\qty(\alpha^2+\|\vb{b}_t\|_2^2).
\end{align}
Since
\begin{align}
	\|\vb{b}_t\|_2^2&=[\dot{\vb{a}}_t - i(\vb{a}_t\times\dot{\vb{a}}_t)]\cdot[\dot{\vb{a}}_t - i(\vb{a}_t\times\dot{\vb{a}}_t)]\notag\\
	&=\|\dot{\vb{a}}_t\|_2^2 - \|\vb{a}_t\times\dot{\vb{a}}_t\|_2^2\notag\\
	&=\|\dot{\vb{a}}_t\|_2^2 - [\|\vb{a}_t\|_2^2\|\dot{\vb{a}}_t\|_2^2 - (\vb{a}_t\cdot\dot{\vb{a}}_t)^2 ]\notag\\
	&=\|\dot{\vb{a}}_t\|_2^2(1-\|\vb{a}_t\|_2^2) + (\vb{a}_t\cdot\dot{\vb{a}}_t)^2,
\end{align}
we obtain
\begin{equation}
	g_{\phi_t}(\dot\phi_t,\dot\phi_t)=\frac{2\|\dot{\vb{a}}_t\|_2^2}{1-\|\vb{a}_t\|_2^2}+\frac{4(\vb{a}_t\cdot\dot{\vb{a}}_t)^2}{(1-\|\vb{a}_t\|_2^2)^2}.
\end{equation}
To proceed, it is convenient to use spherical coordinates,
\begin{equation}
	\vb{a}_t=[a_t\sin\theta_t\cos\varphi_t,a_t\sin\theta_t\sin\varphi_t,a_t\cos\theta_t]^\top.
\end{equation}
Then $\|\dot{\vb{a}}_t\|_2^2=\dot a_t^2 + a_t^2\dot\theta_t^2+a_t^2(\sin\theta_t)^2\dot\varphi_t^2$ and $\vb{a}_t\cdot\dot{\vb{a}}_t=a_t\dot a_t$.
Therefore,
\begin{align}
	g_{\phi_t}(\dot\phi_t,\dot\phi_t)&=\frac{2(\dot a_t^2 + a_t^2\dot\theta_t^2+a_t^2(\sin\theta_t)^2\dot\varphi_t^2)}{1-a_t^2}+\frac{4a_t^2\dot a_t^2}{(1-a_t^2)^2}\notag\\
	&=\frac{2(1+a_t^2)}{(1-a_t^2)^2}\dot a_t^2 + \frac{2a_t^2}{1-a_t^2}\dot\theta_t^2 + \frac{2a_t^2(\sin\theta_t)^2}{1-a_t^2}\dot\varphi_t^2.
\end{align}
Since $a_t\in[0,1]$, the geodesic connecting the center of the sphere to any point lies in a fixed plane through the origin, where both $\theta_t$ and $\varphi_t$ are constant.
Therefore, the metric simplifies to
\begin{equation}
	g_{\phi_t}(\dot\phi_t,\dot\phi_t)=\frac{2(1+a_t^2)}{(1-a_t^2)^2}\dot a_t^2.
\end{equation}
Consequently, the geometric complexity can be evaluated as
\begin{align}
	\mca{C}(\Lambda)&=\int_{0}^{a_1}\dd{a}\frac{\sqrt{2(1+a^2)}}{1-a^2}\notag\\
	&=2\atanh\frac{\sqrt{2}a_1}{\sqrt{1+a_1^2}} - \sqrt{2}\atanh\frac{a_1}{\sqrt{1+a_1^2}}.
\end{align}
Defining $\wp\coloneqq\sqrt{2\tr(\rho_E^2)-1}=|2\kappa-1|$, we can express the geometric complexity as
\begin{equation}
	\mca{C}(\Lambda)=2\atanh\frac{\sqrt{2}\wp}{\sqrt{1+\wp^2}} - \sqrt{2}\atanh\frac{\wp}{\sqrt{1+\wp^2}}.
\end{equation}
As a result, $\mca{C}(\Lambda)$ diverges in the limits $\kappa\to 0$ and $\kappa\to 1$, that is, when the initial state of the environment becomes pure.
Thus, the geometric complexity captures the purity of the environment, which is one of the crucial resources for cooling.

\end{document}